\documentclass[aps,prx,reprint,superscriptaddress,longbibliography,floatfix,
               amsmath,amssymb]{revtex4-2}
\usepackage[T1]{fontenc}
\usepackage[utf8]{inputenc}
\usepackage{amsthm,mathtools,bm,mathrsfs}
\usepackage{microtype,needspace}
\usepackage{booktabs}
\usepackage{algpseudocode}
\usepackage{xcolor}
\usepackage[normalem]{ulem}
\usepackage[colorlinks,linkcolor=blue,citecolor=blue]{hyperref}
\usepackage{enumitem}
\usepackage{orcidlink}
\setlist{nosep,leftmargin=*}
\allowdisplaybreaks[2]
\newtheorem{theorem}{Theorem}
\newtheorem*{informaltheorem}{Informal theorem}
\newtheorem{lemma}{Lemma}
\newtheorem{proposition}{Proposition}
\newtheorem{corollary}{Corollary}
\theoremstyle{definition}
\newtheorem{definition}{Definition}
\newtheorem{assumption}{Assumption}
\newtheorem*{informalassumption}{Assumptions (informal)}
\DeclareMathOperator{\Tr}{Tr}
\DeclareMathOperator{\tr}{tr}
\DeclareMathOperator{\diam}{diam}
\DeclareMathOperator{\supp}{supp}
\DeclareMathOperator{\id}{id}
\DeclareMathOperator{\ad}{ad}
\DeclareMathOperator{\sech}{sech}
\DeclareMathOperator{\poly}{poly}
\DeclareMathOperator{\polylog}{polylog}
\newcommand{\I}{\mathbf 1}
\newcommand{\E}{\mathbb E}

\newcommand{\He}{\widetilde H}
\newcommand{\PP}{\mathcal P}

\newcommand{\norm}[1]{\left\lVert#1\right\rVert}

\newcommand{\trn}[1]{\tr_{#1}}
\newcommand{\wt}{w_{\lambda,\mu}}
\newcommand{\Csh}[1]{C_{\mathrm{sh}}(#1)}
\newcommand{\fail}{p_{\mathrm{fail}}}

\hypersetup{pdftitle={Polylog-depth Quantum Thermal Simulation via Local Recovery Channels},pdfauthor={}}

\newcounter{algorithm}
\makeatletter
\newenvironment{algorithm}[1][tbp]{%
 \begin{figure}[#1]\small
 \let\c@figure\c@algorithm

 \def\fnum@figure{\textbf{Algorithm~\thealgorithm}}%
 \hrule height .6pt\smallskip
}{\par\smallskip\hrule height .6pt\end{figure}}
\makeatother
\begin{document}
\title{Polylog-depth Quantum Thermal Simulation via Local Recovery Channels}
%\title{From local Petz recovery to polylogarithmic-depth Gibbs-state purification}
\author{Hideaki Hakoshima\,\orcidlink{0000-0003-1295-9801}}
\email{hakoshima.hideaki.qiqb@osaka-u.ac.jp}
%\affiliation{Graduate School of Engineering Science, University of Osaka, 1-3 Machikaneyama, Toyonaka, Osaka 560-8531, Japan}
\affiliation{Center for Quantum Information and Quantum Biology, University of Osaka, 1-2 Machikaneyama, Toyonaka, Osaka 560-0043, Japan}

\author{Atsushi Iwaki\,\orcidlink{0000-0002-5345-8045}}
\affiliation{Department of Physics, University of Tokyo, 7-3-1 Hongo, Bunkyo-ku, Tokyo 113-0033, Japan}

\author{Nobuyuki Yoshioka\,\orcidlink{0000-0001-6094-8635}}
\affiliation{International Center for Elementary Particle Physics, University of Tokyo, 7-3-1 Hongo, Bunkyo-ku, Tokyo 113-0033, Japan}
\date{\today}
\begin{abstract}
We establish sufficient conditions for preparing quantum thermal states of noncommuting local Hamiltonians with polylogarithmic circuit depth in arbitrary fixed spatial dimension. Our conditions combine locality and stability bounds on the effective interactions of reduced density matrices of a Gibbs state with a quantitative high-temperature condition and access to coarse classical reference Hamiltonians. When these references can be generated locally, the classical preprocessing cost is $N^{1+o(1)}$ for $N$ sites at inverse-polynomial global accuracy. We construct the preparation circuit from spatially localized Petz recovery maps. Quantum corrections derived from the microscopic Hamiltonian enable accurate recovery while the classical references remain coarse. After preprocessing, the resulting circuit prepares the canonical purification using $N\operatorname{polylog}(N/\varepsilon)$ gates and qubits, where $\varepsilon$ is the preparation error. These results link two fundamental questions: how correlations are organized in thermal equilibrium, and how efficiently the corresponding states can be realized through operations allowed by quantum mechanics. By translating static equilibrium structure into explicit preparation circuits, they give equilibrium locality a constructive computational interpretation and a physically grounded role in quantum algorithm design.
\end{abstract}
\maketitle
\section{Introduction}
\label{sec:introduction}

Simulating quantum many-body physics is a foundational motivation for
quantum computing, dating back to Feynman's proposal to simulate quantum
systems with quantum devices~\cite{Feynman1982}. Gibbs states are central
to this goal: they describe thermal equilibrium and underpin the study of
finite-temperature phases, thermodynamic properties, and equilibrium
correlations~\cite{Alhambra2023Thermal}. Two complementary research
directions address these states. One investigates their correlations and
information-theoretic structure~\cite{WolfEtAl2008,Alhambra2023Thermal};
the other develops quantum algorithms for preparing them
\cite{PoulinWocjan2009,TemmeEtAl2011,ChenKastoryanoBrandaoGilyen2025}.
A fundamental question connecting these directions is how the physical
structure of thermal equilibrium can be translated into efficient
quantum preparation circuits.

Conditional independence provides an operational connection between state
structure and reconstruction. For strictly positive classical
distributions, the Hammersley--Clifford theorem identifies Gibbs
factorization over graph cliques with the Markov
property~\cite{Besag1974}. In particular, for a tripartition $A,B,C$ of a
finite classical Gibbs system, $A$ and $C$ are conditionally independent whenever
$B$ separates them in the interaction graph. Equivalently, their
conditional mutual information (CMI) satisfies $I(A:C\mid B)=0$.
This is the Markov property of classical Gibbs distributions, valid
at every finite temperature. Noncommuting quantum Gibbs states need not
satisfy this exact condition, making its approximate counterpart a
nontrivial question about thermal correlations
\cite{ChenRouze2025LocallyMarkovian}.
The operational meaning of small CMI is established: it guarantees
approximate reconstruction of the tripartite state from its marginal on
$AB$ by a channel acting only on $B$ and restoring $C$
\cite{FawziRenner2015,JungeEtAl2018}.

The spatial organization of correlations in thermal equilibrium has long
been a central subject of many-body physics. Traditionally, two-point
correlation functions have played a prominent role, while
information-theoretic approaches also characterize multipartite
correlations and entanglement~\cite{Alhambra2023Thermal}.
Recent progress on CMI includes exponential decay in one-dimensional
finite-range systems at every fixed finite temperature
\cite{Kuwahara2025Markov}, high-temperature decay obtained through quantum
belief-propagation channels~\cite{KatoKuwahara2025BP}, and quasi-local
recovery of shielded fixed-size regions in higher dimensions at arbitrary
fixed temperature~\cite{ChenRouze2025LocallyMarkovian}.
Complementing these results on conditional correlations,
Bakshi~et al.~\cite{BakshiLiuMoitraTang2024} established full separability
and efficient classical sampling of a product-state decomposition for
sufficiently high-temperature Gibbs states. This connects a structural property
of thermal states to a constructive simulation method.

A complementary algorithmic approach designs dynamics with the desired
Gibbs state as a stationary state. Its development has benefited from
broader advances in quantum matrix algorithms, notably quantum singular
value transformation (QSVT), which implements polynomial transformations
of encoded singular values~\cite{GilyenEtAl2019}.
Recent work provides efficiently implementable Lindbladians satisfying
Kubo--Martin--Schwinger (KMS) detailed balance
\cite{ChenKastoryanoGilyen2023,ChenKastoryanoBrandaoGilyen2025,DingLiLin2025,GilyenChenDoriguelloKastoryano2024},
alongside microscopic derivations and local-circuit implementations
\cite{ScandiAlhambra2026,HahnSwekeDeshpandeShtanko2026}.
Thermal-state preparation then proceeds by simulating the corresponding
dynamics. Its cost depends on both the implementation of the generator
and its mixing time, whose analysis remains difficult in general.
At sufficiently high temperatures, rapid-mixing bounds yield efficient
Gibbs state preparation~\cite{RouzeFrancaAlhambra2026Thermalization},
with algorithms achieving nearly
linear preparation costs~\cite{RouzeFrancaAlhambra2026}.
This route establishes efficient preparation through the design and
convergence analysis of thermalization dynamics.

A complementary route reconstructs Gibbs states from their
spatial correlation structure. Brand{\~a}o and Kastoryano
\cite{BrandaoKastoryano2019} used uniform clustering and approximate
Markov conditions to construct shallow circuits of local channels.
Scalet~et al.~\cite{ScaletEtAl2025} connected decaying effective
interactions, local marginal reconstruction, and averages of rotated Petz
recovery maps to constructive Gibbs-sampling algorithms.
These approaches assemble thermal states from their spatial conditional
structure, providing a quantum counterpart to classical reconstruction
from local probabilities. Table~\ref{tab:comparison} compares their
assumptions and resource guarantees with other preparation methods.

The central question is how the local structure of thermal
equilibrium can yield an efficient preparation circuit built from
elementary quantum gates.
Local recoverability identifies where to act, but spatial support
alone does not determine the implementation cost: the Hilbert-space
dimension of a recovery region grows exponentially with its number
of sites
\cite{ScaletEtAl2025}.
This calls for a representation of the recovery operators that
makes their thermal-state structure directly usable in quantum
circuits.
%A remaining obstacle is that spatially local recovery need not be
%inexpensive to implement. Composing many approximate recoveries requires
%control of their accumulated error, so exponential locality bounds lead
%to neighborhoods whose radii grow logarithmically with system size and
%inverse accuracy. Their Hilbert-space dimensions are still exponential
%in the number of enclosed sites. Dense-matrix constructions can therefore
%remain costly, particularly in higher dimensions~\cite{ScaletEtAl2025}.
%General Petz algorithms also have costs governed by small nonzero
%eigenvalues of the reduced density operators~\cite{GilyenEtAl2022}.
%The existence of local effective Hamiltonians does not supply their
%interaction coefficients as accessible data. Thus the implementation
%problem remains even after the spatial support of recovery is controlled.

In this paper, we establish sufficient conditions under which
this equilibrium structure yields an efficient preparation circuit.
Our construction combines coarse classical references for the
effective Hamiltonians with quantum corrections derived from
the microscopic Hamiltonian.
These corrections retain the information in the reduced density
operators that the classical references omit.
The essential change is to make this information available directly
to the quantum circuit, so that accurate recovery does not require
equally accurate classical references.
The efficiency relies on controlling the spread and growth of local
operators during the required imaginary-time evolution and on an
integral representation involving bounded functions.

Under the preparation assumptions summarized in Sec.~II, our
construction prepares the canonical purification of an $N$-site
Gibbs state to error $\varepsilon$.
In any fixed spatial dimension, the preparation unitary has
$\operatorname{polylog}(N/\varepsilon)$ depth and uses
$N\operatorname{polylog}(N/\varepsilon)$ gates and qubits
after preprocessing.
When the classical references can be generated locally, the total
classical preprocessing cost is $N^{1+o(1)}$ at inverse-polynomial
global accuracy.

The result connects a static question---how correlations are
organized in thermal equilibrium---with an operational one:
how efficiently the corresponding many-body state can be built.
By realizing the local recovery maps with elementary quantum gates,
it makes this connection constructive and gives the local structure
of effective Hamiltonians a role in quantum algorithm design.

The preparation unitary and its inverse also supply coherent state
access for established multiple-observable estimation algorithms
\cite{HugginsEtAl2022,WadaYamamotoYoshioka2025}, enabling the estimation
of thermal energies and correlations. Under the same preparation
hypotheses, the construction also applies to finite-range fermionic
chains through the Jordan--Wigner transformation
\cite{VerstraeteCirac2005}. Further possibilities include
Gibbs-based optimization and local-model learning when the required
Hamiltonians remain within the regime of the preparation theorem
\cite{BrandaoSvore2016,AnshuEtAl2020Learning,CoopmansBenedetti2024}.

Section~\ref{sec:setting-summary} specifies the target and inputs.
Section~\ref{sec:theoretical-overview} reduces global preparation to
finite-patch recovery, presents the two local circuit constructions, and
combines them into the global resource bounds.
Sections~\ref{sec:application} and~\ref{sec:comparison} describe
applications and further directions. The appendices follow the proof:
locality, coherent composition, one-patch implementation, and global
error and resource estimates.

\begin{table*}[t]
\caption{Selected Gibbs-preparation methods (upper block) and ordinary-Petz
implementations (lower block), at fixed physical parameters. Here $N$ is the
system size and $\varepsilon$ is the global preparation error. Depth counts
elementary-gate layers unless marked as channel layers; dashes mean not
compared. The Scalet~et al. bound uses logarithmic recovery radius.
Our local-step row evaluates Theorem~\ref{cr:thm:local-main} with the
patch size and error allocation of Algorithm~\ref{alg:global-petz}.
Our unitary depth is after preprocessing; classical reference generation
separately costs $N^{1+o(1)}$ at inverse-polynomial accuracy under
Assumption~\ref{cr:ass:classical}(ii).}
\label{tab:comparison}
\centering
\small
\setlength{\tabcolsep}{4pt}
\renewcommand{\arraystretch}{1.16}
\begin{tabular}{@{}p{0.19\textwidth}p{0.22\textwidth}p{0.10\textwidth}p{0.26\textwidth}p{0.15\textwidth}@{}}
\toprule
Work & \raggedright Regime / input & Target & \raggedright Quantum resource bound & Depth\\
\midrule
\raggedright Ge~et al. \cite{GeMolnarCirac2016}
 & \raggedright Commuting local $H$; uniform parent gap
 & $|\Gamma_\beta\rangle$
 & \raggedright $N\polylog(N/\varepsilon)$ gates
 & $\polylog(N/\varepsilon)$\\
\raggedright Brand{\~a}o and Kastoryano \cite{BrandaoKastoryano2019}
 & \raggedright Uniform clustering and Markov conditions
 & $\gamma_\Lambda$
 & \raggedright Local-channel construction
 & $O(1)$ channel layers\\
\raggedright Scalet~et al. \cite{ScaletEtAl2025}
 & \raggedright Decaying effective interactions; reconstructed marginals
 & $\gamma_\Lambda$
 & \raggedright $\poly(N,1/\varepsilon)$\newline $\times\exp(O([\log(N/\varepsilon)]^D))$
 & ---\\
\raggedright Rouz\'e~et al. \cite{RouzeFrancaAlhambra2026}
 & \raggedright Local $H$; sufficiently high temperature
 & $\gamma_\Lambda$
 & \raggedright $N\polylog(N/\varepsilon)$ gates
 & ---\\
\raggedright Bergamaschi and Chen \cite{BergamaschiChen2025}
 & \raggedright 1D open chain; fixed finite $\beta$
 & $|\Gamma_\beta\rangle$
 & \raggedright $N\polylog(N/\varepsilon)$ gates
 & $\polylog(N/\varepsilon)$\\
\raggedright \textbf{This work}
 & \raggedright Locality and stability of effective Hamiltonians; high-temperature condition; explicit classical references
 & $|\Gamma_\beta\rangle$
 & \raggedright $N\polylog(N/\varepsilon)$ gates
 & $\polylog(N/\varepsilon)$\\
\midrule
\raggedright Gily\'en~et al. \cite{GilyenEtAl2022}
 & \raggedright State encodings and channel access
 & Petz\newline isometry
 & \raggedright Spectral- and input-cost dependent
 & ---\\
\raggedright \textbf{This work: local step}
 & \raggedright Logarithmic-radius patch; explicit classical references
 & Petz\newline isometry
 & \raggedright $\polylog(N/\varepsilon)$ gates
 & ---\\
\bottomrule
\end{tabular}
\end{table*}

\section{Setup}
\label{sec:setting-summary}

Consider $N$ sites of fixed local dimension $q\ge2$ on a bounded-degree graph
$\Lambda$. Its shortest-path distance is $d$. The balls
$B_r(x)=\{y\in\Lambda:d(x,y)\le r\}$ satisfy
$|B_r(x)|\le c_{\mathrm{vol}}(1+r)^D$ for constants $D$ and $c_{\mathrm{vol}}$, independent of $N$. The Hamiltonian has
bounded finite-range interactions,
\begin{equation}\label{m:eq:thermal-target}
 H_\Omega=\sum_{\varnothing\ne S\subseteq\Omega}h_S,
 \qquad
 \gamma_\Omega=\frac{e^{-\beta H_\Omega}}{\Tr e^{-\beta H_\Omega}},
 \qquad \Omega\subseteq\Lambda.
\end{equation}
The inverse temperature $\beta>0$ and the bounds on interaction
strengths, ranges, and support sizes
are fixed independently of $N$. Access to the microscopic interaction terms $h_S$ is specified in
Assumption~\ref{ass:microscopic-access}.
For $A\subseteq\Omega$, write
${\rho_A^\Omega=\Tr_{\Omega\setminus A}\gamma_\Omega}$ for the reduced
density operator of $\gamma_\Omega$ on $A$. %At finite $\beta$ these reduced density operators are positive definite, also called faithful.
For each region $A$, we introduce a purification register
$\mathcal E_A$, a copy of its Hilbert space with a fixed product basis.
For a density operator $\omega_A$, its canonical purification is
\begin{equation}\label{m:eq:canonical}
 |\Gamma(\omega_A)\rangle
 :=(\omega_A^{1/2}\otimes\I_{\mathcal E_A})
             \sum_b|b\rangle_A|b\rangle_{\mathcal E_A}.
\end{equation}
The vector in Eq.~\eqref{m:eq:canonical} is normalized, and tracing out its
purification register gives $\omega_A$
\cite{DuttaFaulkner2021}. Write
$|\Gamma_\beta\rangle=|\Gamma(\gamma_\Lambda)\rangle$. Our target is a
unitary $\mathscr U_{\mathrm{prep}}$
whose zero-input output approximates
$|\Gamma_\beta\rangle|0\rangle_{\mathsf A}$, where $\mathsf A$ is its work
register. Each site is encoded in $n_q=\lceil\log q/\log2\rceil$ qubits.

The preparation result uses the following assumptions.
Their quantitative formulations are given in the appendices.
\begin{informalassumption}\label{informalassumptions}
    \emph{Locality and stability of effective Hamiltonians.}
For each reduced density operator $\rho_A^\Omega$, we introduce
an effective Hamiltonian $\widetilde H_A^\Omega$ through
\begin{equation}\label{m:eq:effective-Hamiltonian}
 \rho_A^\Omega
 =
 \frac{e^{-\beta\widetilde H_A^\Omega}}
      {\operatorname{Tr}e^{-\beta\widetilde H_A^\Omega}}.
\end{equation}
Let $H_A$ contain the microscopic interaction terms supported
entirely within $A$.
We assume that $\widetilde H_A^\Omega-H_A$ is concentrated near
the boundary between $A$ and the traced region $\Omega\setminus A$,
with contributions decaying exponentially into the interior.
Contributions involving many sites or spanning large distances
are also exponentially suppressed.
At fixed $\Omega$, enlarging $A$ changes interaction terms
far from the added sites only exponentially weakly
(Assumption~\ref{an:ass:effective}).

\emph{High temperature.}
The inverse temperature is sufficiently small relative to the
local interaction strengths, as quantified in
Assumption~\ref{an:ass:strip}.

\emph{Classical references.}
To use these effective Hamiltonians in the circuit construction,
we assume access to coarse classical references satisfying
Assumption~\ref{cr:ass:classical}(i).
Their accuracy is determined by the patch size, independently
of the final recovery tolerance.
The classical preprocessing bound further assumes local generation
of these same references
(Assumption~\ref{cr:ass:classical}(ii)).
\end{informalassumption}

The compiler takes the microscopic Hamiltonian, classical reference
lists, and accuracy parameters, and returns a gate description of
the local recovery unitary.
Its preprocessing, including normalization measurements on separate
test inputs, is completed before the recovery input is supplied.

%We use \emph{compilation} for the circuit-construction procedure that
%takes the microscopic Hamiltonian, the interaction lists for the classical references, and
%accuracy parameters and produces an elementary-gate description of a
%unitary approximating the specified Petz isometry. This procedure,
%called the \emph{compiler}, is independent of the state later supplied
%to the recovery. Its preprocessing includes independent quantum
%experiments for normalization estimates. Microscopic access and finite-precision operations are specified in
%Secs.~\ref{sec:microscopic-model} and~\ref{sec:finite-matrix-model}.

%Unsubscripted operator norms are operator norms; $\|\cdot\|_1$ and
%$\|\cdot\|_2$ denote the trace norm and vector Euclidean norm, respectively.
%The notation $\polylog u$ denotes a polynomial
%of fixed degree in $\log u$; constants and degrees may depend on the fixed
%physical parameters, including $D$, but not on $N$ or the desired accuracy.
%Unions of disjoint regions may be written by juxtaposition, and identities
%on unaffected registers are understood. 
%Table~\ref{tab:notation}, at the
%beginning of the appendices, collects the principal symbols and error budgets.

\section{Theoretical overview}
\label{sec:theoretical-overview}

\begin{figure*}[t]
\centering
\includegraphics[ width=\linewidth]{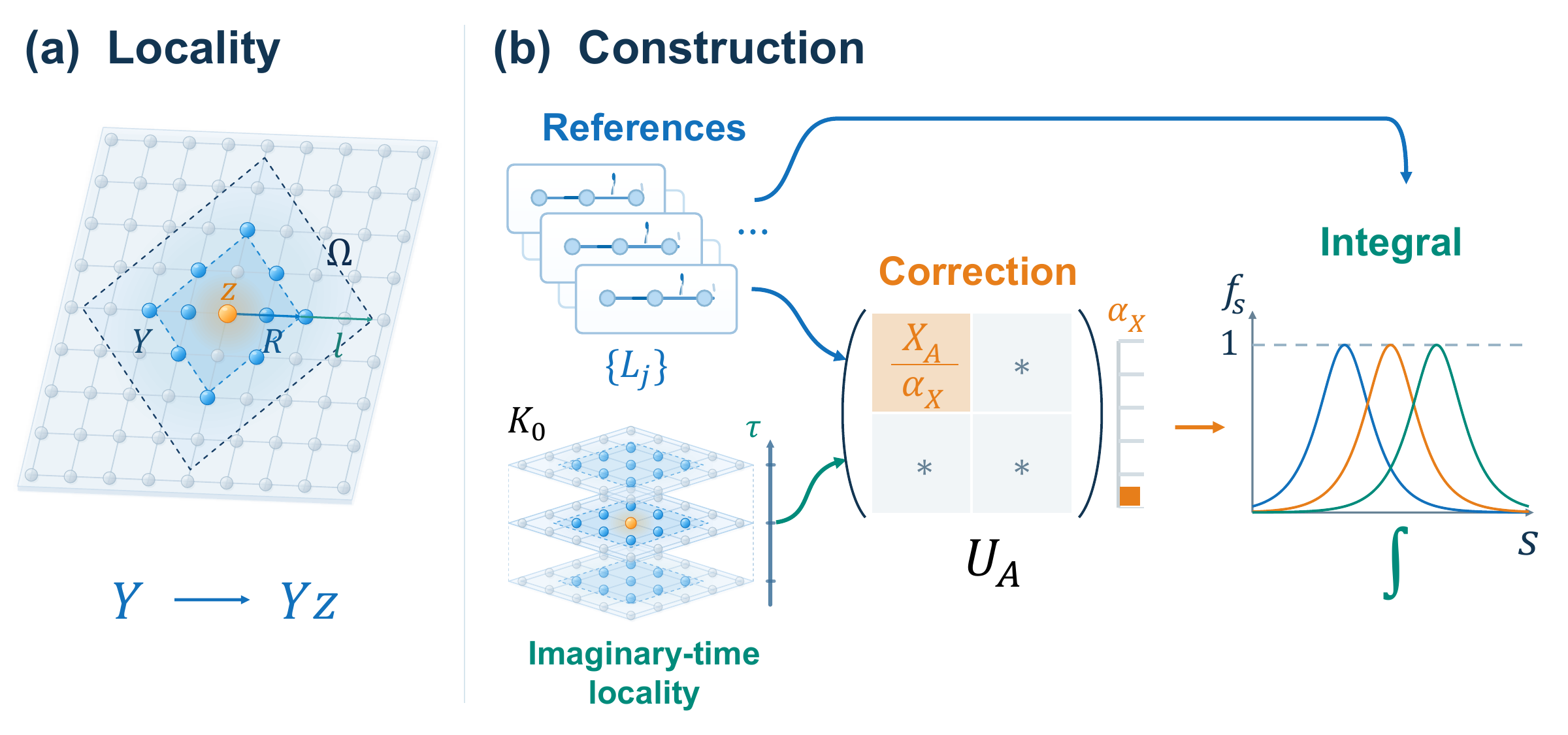}
\caption{
\textbf{Locality and constructive implementation of one-site Petz recovery.}
\textbf{(a)}
The Petz map in Eq.~\eqref{m:eq:petz} restores $z$ (orange) from $Y$ (blue).
The inner and outer dashed boundaries indicate $B_R(z)$ and the
patch $\Omega=B_{R+\ell}(z)$, respectively, where $R$ is the recovery
radius and $\ell$ is the additional buffer.
%Sites in $\Omega\setminus(Yz)$ enter the defining partial trace
%but are not added to the recovery output.
\textbf{(b)}
The cards represent fixed coarse classical references $\{L_j\}$.
Together with the microscopic input $K_0$
in Eq.~\eqref{m:eq:microscopic-input}, they yield the quantum
corrections $X_A$ in Eq.~\eqref{m:eq:reference-factorization},
for $A=Y,Yz$.
The stacked lattices depict local operators remaining near their
initial support during imaginary-time evolution
(Sec.~\ref{sec:overview:corrections}).
%while the exact microscopic input $K_0$ determines the target reduced density operators.
%The stacked lattices schematically illustrate imaginary-time locality
%of initially local operators, with $\tau$ denoting imaginary time.
%This locality enables efficient encodings of exponential products
%and coherent one-site contractions, yielding corrections $X_A$,
%for $A=Y,Yz$, that retain the exact information of the reduced density operators through
%$\rho_A^\Omega\propto e^{-L_A/2}(I_A+X_A)e^{-L_A/2}$.
The orange block of $U_A$ represents $X_A/\alpha_X$;
the scale indicates the small encoding normalization $\alpha_X$.
%The upper blue arrow shows that the same references are reused,
Combining the classical references with the correction encodings, in the
integral representation in Eq.~\eqref{m:eq:integral}, gives an efficient construction of Eq.~\eqref{m:eq:T-definition}.
%The curves illustrate $f_s(x)=\operatorname{sech}(s+x/2)$
%for representative eigenvalues $x$ of $L_A$.
%Here $s$ is the integration variable, and the dashed line marks
%the upper bound $1$.
}
\label{fig:gibbs-overview}
\end{figure*}

\subsection{Main result}

Local recovery specifies where to act; our main constructive result
specifies how to implement that recovery efficiently. %Following the
%recovery-based preparation viewpoint of
%Refs.~\cite{BrandaoKastoryano2019,ScaletEtAl2025}, we first reduce global
%preparation to finite-patch Petz isometries. We then construct circuits
%to encode the marginal residuals and implement the square-root correction
%factors needed for these isometries. Imaginary-time locality is used to
%encode exponential products and one-site contractions, obtaining exact
%marginal residuals from microscopic input and coarse classical references.
%An ordered bounded-function integral turns those residuals into the
%noncommutative square-root factors required by the same Petz operator.
Combining this implementation with the finite-patch reduction yields
the following preparation guarantee.
Figure~\ref{fig:gibbs-overview} illustrates the local construction.

The following summarizes Theorem~\ref{thm:global},
Corollary~\ref{cor:sequential-workspace}, and
Proposition~\ref{prop:classical-budget}.

\begin{informaltheorem}[Gibbs state preparation]
Under the preparation assumptions summarized in Sec. \ref{sec:setting-summary}, input-independent preprocessing
produces, with high probability, a unitary preparing the canonical
purification of an $N$-site Gibbs state to vector error $\varepsilon$.
The circuit has $\operatorname{polylog}(N/\varepsilon)$ depth and
$N\operatorname{polylog}(N/\varepsilon)$ gates and qubits.
With the additional local-reference-generation condition, the total
classical preprocessing cost is $N^{1+o(1)}$ when
$\log(1/\varepsilon)=O(\log N)$.
Sequential execution prepares the mixed Gibbs state with
$\operatorname{polylog}(N/\varepsilon)$ reusable ancillary qubits.
\end{informaltheorem}

The depth bound applies to the preparation unitary and its inverse after preprocessing.
Quantum gate counts include both normalization-estimation trials and state preparation.
%Preprocessing fixes the interaction lists for the classical references, the normalization estimates
%obtained from independent measurements, and finite gate descriptions. The depth bound refers to
%the resulting preparation unitary and its inverse; the quantum gate
%budget also includes the normalization-estimation trials.

\subsection{Reduction to finite-patch recovery}
\label{sec:overview:one-site}
\label{sec:overview:assembly}

A recovery restores one site $z$ from an already prepared region $Y$.
For a patch $\Omega\supseteq Yz$ of $M=|\Omega|$ sites, its Petz factor
and channel are
\begin{equation}\label{m:eq:petz}
 \begin{aligned}
 Q^\Omega_{z|Y}
  &=(\rho^\Omega_{Yz})^{1/2}
             ((\rho^\Omega_Y)^{-1/2}\otimes\I_z),\\
 \PP^\Omega_{z|Y}(\sigma_Y)
  &=Q^\Omega_{z|Y}(\sigma_Y\otimes\I_z)(Q^\Omega_{z|Y})^\dagger.
 \end{aligned}
\end{equation}
Petz recovery here denotes the unrotated map. Sites in
$\Omega\setminus(Yz)$ determine its coefficients through the defining
partial trace; only $z$ is added to the physical input. A purification register
$\mathcal E_z$ in the fixed site basis gives the isometry
\begin{equation}\label{m:eq:isometry}
 V^\Omega_{z|Y}|\psi\rangle
   =\sum_{b=1}^{q}Q^\Omega_{z|Y}
       (|\psi\rangle_Y\otimes|b\rangle_z)|b\rangle_{\mathcal E_z},
\end{equation}
with $V^\dagger V=\I_Y$~\cite{Petz1986Sufficiency,Wilde2015,GilyenEtAl2022}.

Under Assumptions~\ref{an:ass:effective} and~\ref{an:ass:strip},
Theorem~\ref{thm:two-cut} compares the full-system and finite-patch
Petz isometries on purifications of the reduced density operator
$\rho_{WY}^{\Lambda}$.
Take $Y$ to be the previously prepared sites within distance $R$
of $z$, let $W$ contain the remaining prepared sites, and set
$\Omega=B_{R+\ell}(z)$.
For any purification $|\psi\rangle_{WY\mathcal F}$ of
$\rho_{WY}^{\Lambda}$,
\begin{equation}\label{m:eq:locality}
\begin{split}
 &\left\|
 \left[
 \left(V^{\mathrm{full}}-\I_W\otimes V^\Omega_{z|Y}\right)
 \otimes \I_{\mathcal F}
 \right]|\psi\rangle
 \right\|_2\\
 &\quad\le C_{\mathrm{loc}}^Q
 \left[
 e^{-\kappa R}
 +(1+R)^{3D/4}e^{-\kappa\ell/16}
 \right],
\end{split}
\end{equation}
where $V^{\mathrm{full}}=V^\Lambda_{z|WY}$ and $C_{\mathrm{loc}}^Q,\kappa>0$ are independent of the region volumes and the accuracy.
The first term bounds the error from restricting the recovery input
to $Y$.
The second bounds the additional error from defining the recovery
using the Gibbs state on $\Omega$ instead of the full system.
%Under (E1)--(E3) and Assumption~\ref{an:ass:strip},
%Theorem~\ref{thm:two-cut} uses imaginary-time locality to compare this
%specified isometry with the one defined by full-system marginals.
%Taking $Y$ to be the previously prepared sites within distance $R$
%of $z$, $W$ to contain the remaining prepared sites, and
%$\Omega=B_{R+\ell}(z)$, it gives
%\begin{equation}\label{m:eq:locality}
% \|V^{\mathrm{full}}-\I_W\otimes V^\Omega_{z|Y}\|
% \le C_{\mathrm{loc}}^Q(e^{-\kappa R}+e^{-\kappa\ell}),
%\end{equation}
%where $V^{\mathrm{full}}=V^\Lambda_{z|WY}$ and
%$C_{\mathrm{loc}}^Q,\kappa>0$ are volume-independent constants.
%The two terms control the restriction of the retained input and the
%replacement of the Gibbs volume by $\Omega$, respectively. The common
%environment basis allows this estimate to be composed coherently.

For any full-rank density operator $\omega_{Yz}$ with $\omega_Y=\Tr_z\omega_{Yz}$,
Petz composition gives
\begin{equation}\label{m:eq:telescoping}
 (V_{z|Y}\otimes\I_{\mathcal E_Y})|\Gamma(\omega_Y)\rangle
       =|\Gamma(\omega_{Yz})\rangle.
\end{equation}
Adjacent square roots cancel, so the isometries defined by the target Gibbs state prepare the
canonical purification from the empty system in any site order
\cite{Wilde2015,ParzygnatBuscemi2023}. 
%The isometries defined by the exact reduced density operators
%of $\gamma_\Lambda$ prepare the canonical purification in any
%site order.
Replacing them by the finite-patch isometries and telescoping
along this exact preparation sequence, while retaining all
purification registers, gives
%Replacing them by the maps in
%Eq.~\eqref{m:eq:locality}, while retaining all environments, produces
%$|\Psi_{\mathrm{loc}}\rangle$ with
\begin{equation}\label{m:eq:analytic-assembly-error}
\begin{split}
 \bigl\||\Psi_{\mathrm{loc}}\rangle-|\Gamma_\beta\rangle\bigr\|_2
 &\le NC_{\mathrm{loc}}^Q
 \left[
 e^{-\kappa R}
 +(1+R)^{3D/4}e^{-\kappa\ell/16}
 \right].
\end{split}
\end{equation}
Thus $R,\ell=O(\log(N/\varepsilon))$ give the required accuracy, with
$M=O([\log(N/\varepsilon)]^D)$ sites per patch.

Color the graph joining sites at distance at most $2(R+\ell)$. Greedy
coloring partitions the sites into $n_{\mathrm{col}}$ color classes
$\Lambda^{(c)}$, $c=1,\ldots,n_{\mathrm{col}}$. The number of colors satisfies
\begin{equation}\label{m:eq:colors}
 n_{\mathrm{col}}\le\max_x|B_{2(R+\ell)}(x)|
           =O([\log(N/\varepsilon)]^D).
\end{equation}
For $x\in\Lambda^{(c)}$, the input is
$Y_x=\Lambda^{(<c)}\cap B_R(x)$, where
$\Lambda^{(<c)}=\bigcup_{a<c}\Lambda^{(a)}$.
Equal-colored patches are disjoint and their inputs stay fixed within
the layer, permitting parallel recovery. Appendix~\ref{sec:analytic-composition}
proves the composition and scheduling statements. This fixes the
recoveries to be implemented; their elementary-gate cost is the
remaining problem.

\subsection{Factorization using classical references}
\label{sec:overview:references}

The reduction of global preparation to finite-patch Petz isometries
identifies the recovery maps, but does not by itself provide an
efficient circuit implementation. A dense construction of the reduced density operators in Eq.~\eqref{m:eq:petz} can
retain exponential dependence on $M$, and implementing their fractional
powers separately can introduce large spectral normalization factors
\cite{ScaletEtAl2025,GilyenEtAl2022}. Our solution is to change the
operator representation: use coarse, accessible classical references for the
large Gibbs factors, and construct the information they omit as small
quantum-encoded corrections.

For $A=Y,Yz$, let $L_A$ be a fixed classical reference for the
traceless part of $\beta\widetilde H_A^\Omega$. We seek a Hermitian quantum correction
$X_A$ such that
\begin{equation}\label{m:eq:reference-factorization}
 \rho_A^\Omega\propto e^{-L_A/2}(\I_A+X_A)e^{-L_A/2}.
\end{equation}
An encoding represents $X_A/\alpha_X$ as a block of a unitary,
where $\alpha_X>0$ is the known encoding normalization.
The correction is constructed from $H_\Omega$ and encoded to the
precision required for recovery, while $L_A$ remains fixed.
The two accuracy scales therefore use different neighborhoods: the
recovery patch $\Omega$ controls the accumulated recovery error, while the
neighborhood used to construct the classical references need only
be large enough to make the correction small. Under the
local-generation guarantee, radius $\ell_{\mathrm{cl}}=O(\log(M+1))$
suffices for the precision of the classical references
(Proposition~\ref{prop:classical-budget}).

For Hermitian $L,X$ with $\|X\|<1$, define the square-root factor
\begin{equation}\label{m:eq:T-definition}
 T(L,X):=e^{-L/2}
     \bigl[e^{-L/2}(\I+X)e^{-L/2}\bigr]^{-1/2},
\end{equation}
and put $T_A=T(L_A,X_A)$. A positive scalar $a$ fixes the relative
normalization: the partial trace over $z$ of the unnormalized $Yz$
operator in Eq.~\eqref{m:eq:reference-factorization} is $aq$ times the
$Y$ operator. Then
\begin{equation}\label{m:eq:central-factorization}
 Q^\Omega_{z|Y}
 =\frac1{\sqrt{aq}}\,
 T_{Yz}^{-1}e^{-L_{Yz}/2}
 (e^{L_Y/2}\otimes\I_z)(T_Y\otimes\I_z).
\end{equation}
The middle exponential product $e^{-L_{Yz}/2}
 (e^{L_Y/2}\otimes\I_z)$ uses the classical references,
while the two outer factors $T_{Yz}^{-1}$ and $T_{Y}$ incorporate the quantum corrections.
%This is an ordered product of operators on $Yz$. The middle
%product uses the classical references; the outer square-root factors correct its
%output and input to the same Petz operator
%(Proposition~\ref{cr:prop:V}). The following construction addresses two
%remaining tasks: obtaining $X_Y,X_{Yz}$ from microscopic data by
%one-site contractions, and converting them into $T_Y,T_{Yz}^{-1}$
%without separately implementing the inverse square root in
%Eq.~\eqref{m:eq:T-definition}.

\subsection{Constructing the quantum corrections}
\label{sec:overview:corrections}
%\label{sec:overview:endpoints}

The exact starting point is the traceless part of $\beta H_\Omega$:
\begin{equation}\label{m:eq:microscopic-input}
 K_0=\beta H_\Omega-
      \frac{\Tr(\beta H_\Omega)}{q^M}\I_\Omega.
\end{equation}
Choose $A_0=\Omega$ and $A_j=A_{j-1}\setminus B_j$, deleting one site
at a time: remove $\Omega\setminus(Yz)$ first and $B_m=\{z\}$ last.
Thus $A_{m-1}=Yz$, $A_m=Y$, and $m\le M$,
where $M=|\Omega|$ is the recovery patch size.
For each $j$, let $K_j$ denote the traceless part of
$-\log\rho_{A_j}^\Omega$.
For each $j\ge1$, let $L_j$ be a fixed coarse classical reference
for $K_j$.
%Only $j\ge1$ requires a coarse classical reference $L_j$.
%Assumption~\ref{cr:ass:classical}(i) and
%Lemma~\ref{cr:lem:sparsify} supply these fixed, compressed interaction lists with
%$\|L_j-K_j\|\le2e_M$, where $e_M$ is inverse-polynomial in $M$.
The exact input $K_0$ and the coarse classical references therefore have distinct
roles: the former fixes the target, while the latter make its
representation inexpensive to encode.

For each deletion step $j$, define the following operator on $A_{j-1}$,
with identities on smaller regions understood:
\begin{equation}\label{m:eq:evolution}
 \begin{aligned}
 E_1(t)&=e^{-tK_0/2}(e^{tL_1/2}\otimes\I_{B_1}),\\
 E_j(t)&=e^{-tL_{j-1}/2}(e^{tL_j/2}\otimes\I_{B_j}),
       \quad 2\le j\le m.
 \end{aligned}
\end{equation}
Write $E_j=E_j(1)$ and embed all factors in $\Omega$ when taking their
product. Adjacent exponentials of the classical references cancel exactly, giving
$E_1\cdots E_j=e^{-K_0/2}
(e^{L_j/2}\otimes\I_{\Omega\setminus A_j})$.
Writing $\E_B=q^{-|B|}\Tr_B$ for the normalized partial trace,
we obtain, for the positive normalizations $a_i$ fixed below,
\begin{equation}\label{m:eq:square-correction}
 \I_{A_j}+X_j=
 \frac{\E_{\Omega\setminus A_j}
  \!\left[(E_1\cdots E_j)^\dagger(E_1\cdots E_j)\right]}
      {\prod_{i=1}^j a_i}.
\end{equation}
This identity retains the exact microscopic Gibbs information rather
than replacing it by the Gibbs state defined by the coarse classical reference. The normalized partial
trace splits as
$\E_{\Omega\setminus A_j}=\E_{B_j}\circ\cdots\circ\E_{B_1}$,
so the corrections satisfy
\begin{equation}\label{m:eq:contraction-step}
 \I_{A_j}+X_j=\frac1{a_j}\E_{B_j}
 \!\left[E_j^\dagger(\I_{A_{j-1}}+X_{j-1})E_j\right],
\end{equation}
where we define $X_0=0$.
%Only its two endpoints are needed:
The quantum corrections are given by
\begin{equation}\label{m:eq:endpoint-corrections}
\begin{aligned}
 X_{Yz}&=X_{m-1},\\
 X_Y&=X_m,
 \end{aligned}
\end{equation}
and the normalization is $a=a_m$.

The first constructive ingredient makes both the exponential products and
their contractions gate efficient. For $j\ge2$,
\begin{equation}\label{m:eq:evolution-generator}
 E_j'(t)=-\tfrac12E_j(t)e^{-tL_j/2}
                 (L_{j-1}-L_j)e^{tL_j/2},
\end{equation}
with $K_0$ replacing $L_{j-1}$ in the first step.
The generator in Eq.~\eqref{m:eq:evolution-generator} is the imaginary-time transform of
the change $L_{j-1}-L_j$ associated with deleting one site.
Imaginary-time locality~\cite{BluhmCapelPerezHernandez2025}, together with the effective-Hamiltonian
assumptions, controls its norm independently of the patch volume.
We encode the generator through its nested-commutator expansion
using a linear combination of unitaries (LCU), amplify the encoding,
and implement the time-ordered evolution to obtain $E_j$
(Proposition~\ref{cr:prop:E}).
%The generator involves the difference $L_{j-1}-L_j$ between the
%classical references before and after removing one site.
%Although the individual reference norms can grow with the patch size $M$,
%imaginary-time locality and stability of the classical references
%bound the generator norm independently of $M$.
%We turn the associated nested-commutator expansion into a coherent
%linear combination of unitaries (LCU) circuit by selecting overlapping
%interaction supports, and amplify
%its initial $O(M)$ encoding normalization to a constant before applying
%the time-ordered construction. This gives each $E_j$ with constant
%normalization and polynomial gate cost, not the exponentially large
%LCU weight of separately treated extensive factors
%(Proposition~\ref{cr:prop:E}). The analytic estimates are established
%tools~\cite{BluhmCapelPerezHernandez2025}; the constructive step is to
%realize their controlled expansions from the explicit interaction lists, using
%LCU, QSVT, and truncated-Dyson primitives
%\cite{GilyenEtAl2019,LowWiebe2018,KieferovaSchererBerry2019,BerryCosta2024}.

To encode the one-site contractions in Eq.~\eqref{m:eq:contraction-step}, prepare a
maximally entangled pair between the deleted site and an auxiliary
register, apply $E_j$, and rescale by $a_j^{-1/2}$.
Let $W_j$ denote the resulting auxiliary map.
The positive scalars $a_j$ are determined during preprocessing
(Lemma~\ref{lem:normalization-estimation}).
Composing the independently compiled maps gives encodings of
the quantum corrections in Eq.~\eqref{m:eq:endpoint-corrections}.
The normalization is controlled throughout the composition,
yielding small constant encoding normalization for $X_Y$ and
$X_{Yz}$ (Proposition~\ref{cr:prop:residual}).

%To compose the contractions in Eq.~\eqref{m:eq:contraction-step},
%a maximally entangled pair between the site and an auxiliary register implements each normalized
%partial trace coherently. Let $W_j$ denote the map obtained by
%appending this pair, applying $E_j$ to the physical registers, and
%rescaling by $a_j^{-1/2}$.
%Independent normalization experiments determine $a_j$ so that $W_j$
%has isometry defect
%$O(1/m)$ and encoding normalization $1+O(1/m)$
%(Lemma~\ref{lem:normalization-estimation}). By independent normalization
%experiments, we mean preprocessing trials that estimate $a_j$ from the
%success probability of the encoded map before rescaling, using freshly
%prepared maximally entangled test states. These trials use inputs
%independent of the state later supplied to the recovery.
%The normalization of the flat product of the $W_j$ maps therefore
%stays bounded. This contraction-aware implementation,
%shown in Fig.~\ref{fig:gibbs-overview}(b), avoids paying for the norm of
%the uncontracted full-space product in Eq.~\eqref{m:eq:square-correction}.
%Proposition~\ref{cr:prop:residual} proves the operator identities
%and supplies the endpoint encodings with $\|X_A\|<1/128$ and
%normalization $1/32$. Here an encoding is a projected unitary block
%representing the target divided by a known normalization; controlling
%that normalization is essential to the cost.

%The circuit uses independently compiled one-site factors, not recursive
%calls to earlier $X_j$ or $T_j$ circuits. For $m=1$, we use the exact $K_0$
%on $Yz$, and $X_{Yz}=0$.

\subsection{Integral representation and local implementation}
\label{sec:overview:local-implementation}
%\label{sec:overview:integral}

%An encoding of the quantum correction does not yet implement the
%Petz square-root factors. The classical reference $L$ and quantum correction $X$ need not
%commute, so $T(L,X)$ is generally not $(\I+X)^{-1/2}$.
%The second constructive ingredient implements the ordered combination

We implement $T(L,X)$  using an integral representation without separately implementing the inverse square root
appearing in Eq.~\eqref{m:eq:T-definition}.
%to avoid directly constructing its potentially ill-conditioned inverse square root. 
The fractional-power resolvent
representation~\cite{Balakrishnan1960} yields
\begin{equation}\label{m:eq:integral}
 \begin{aligned}
 T(L,X)&=\frac1\pi\int_{\mathbb R}
       (\I+P_sX)^{-1}\sech(s+L/2)\,ds,\\
 P_s&=(\I+e^{2s+L})^{-1}.
 \end{aligned}
\end{equation}
For real $s$, both $P_s$ and $\sech(s+L/2)$ have operator norm at most one.
We encode these functions using quantum singular value transformation
(QSVT) and construct $(1+P_sX)^{-1}$ through a Neumann expansion.
The small encoding normalization of $X$ keeps the coefficient sum
of this expansion bounded.
A quadrature rule with positive weights then combines these
operators into an encoding of $T$ (Theorem~\ref{lem:inverse-circuit}).
%These bounded functions, together with the small encoding
%normalization of $X$, make the integral efficiently implementable (Theorem~\ref{lem:inverse-circuit}).
%The small encoding
%normalization of $X$, not only its small operator norm, keeps the
%Neumann-series coefficient sum for $(\I+P_sX)^{-1}$ bounded.
%Positive quadrature and QSVT thus implement $T$ from the encodings of
%$L$ and $X$ with normalization $O(M+1)$ and cost
%$\poly(M,\log(1/\zeta))$ at operator error $\zeta$.
%The bound depends on fixed convergence margins, not on the smallest
%eigenvalue of $\rho_A^\Omega$ 
%(Theorem~\ref{lem:inverse-circuit}). 
%This is what makes the
%factorization computationally useful even though the exact effective
%Hamiltonian is unavailable at the final precision.
%The encodings of $L$ and $X$ also yield an encoding of $T^{-1}$, because
%The definition of $T$ in Eq.~\eqref{m:eq:T-definition} also gives
The same encodings provide $T^{-1}$, because
\begin{equation}\label{m:eq:condition}
 T^{-1}=T^\dagger(\I+X).
\end{equation}
%Amplification reduces its normalization factor to a constant independent
%of the patch size $M$ (Lemma~\ref{cr:lem:T}). Inserting the two endpoint
%square-root factors into Eq.~\eqref{m:eq:central-factorization} completes the
%specified Petz factor. Appending the purification register in the fixed site basis as in
%Eq.~\eqref{m:eq:isometry} and amplifying the complete encoded isometry
%then gives the recovery unitary. The imaginary-time circuits and
%coherent one-site contractions encode the quantum corrections $X_Y$ and
%$X_{Yz}$. 
%Circuits based on the bounded integral representation then use
%these encodings to implement the square-root factors required
%for the Petz isometry.
We combine $T_Y$ and $T_{Yz}^{-1}$ with the reference exponential
product in Eq.~\eqref{m:eq:central-factorization}, prepare the purification register as in
Eq.~\eqref{m:eq:isometry}, and amplify the encoded isometry to obtain $U_m$.

For local channel tolerance $\eta\in(0,1/4)$ and preprocessing failure
budget $p_{\mathrm{fail}}\in(0,1/4)$, write the initialized action as
\begin{equation}\label{m:eq:initialized}
 \widetilde V_m|\psi\rangle
 :=\mathscr U_m(|\psi\rangle_Y|0\rangle_{z\mathcal E_z\mathsf A_m}),
\end{equation}
where $\mathcal A_m$ denotes its work register.
Theorem~\ref{cr:thm:local-main} gives, with probability at least
$1-p_{\mathrm{fail}}$,
\begin{equation}\label{m:eq:local-guarantee}
 \|\widetilde V_m-V^\Omega_{z|Y}\otimes|0\rangle_{\mathsf A_m}\|
 \le\eta/2,
\end{equation}
including work-register leakage. %The induced channel has diamond-norm
%error at most $\eta$. 
Its gate count, including independent normalization
experiments, and workspace are bounded by
\begin{equation}\label{m:eq:local-cost}
 \poly(M,m,\log N,\log(1/\eta),\log(1/p_{\mathrm{fail}})).
\end{equation}
Classical compilation has the same polynomial dependence, in addition
to the cost of obtaining and reading the classical reference lists.
Algorithm~\ref{alg:local-petz} summarizes the construction; Appendix~\ref{sec:classical-construction}  gives the
detailed implementation and resource bounds.

%Classical compilation has the same type of bound in addition to the
%cost of generating and scanning the interaction lists for the classical references.
%Classical preprocessing reduces each interaction list for the classical
%references to polynomially many terms through truncation and weighted
%sampling, while preserving the required accuracy. Each retained term acts
%on $O(\log(M+1))$ sites and therefore admits an encoding with polynomial
%gate cost in $M$. Selecting and applying these encodings under the control
%of a register in a superposition of term labels also has polynomial gate
%cost in $M$.
%The fixed number of nested circuit transformations preserves polynomial
%cost in the precision logarithms. Every preprocessing outcome fixes a
%unitary before it is applied to the recovery input.
%Algorithm~\ref{alg:local-petz} summarizes the construction; the
%projected maps and finite-precision implementation are given in
%Appendix~\ref{sec:classical-construction} and Algorithm~\ref{alg:local-petz-detailed}.

\begin{algorithm}
\caption{Outline of one-site Petz compilation}
\label{alg:local-petz}
\begin{algorithmic}[1]
\Require Recovery $Y\to Yz$ in $\Omega$, microscopic Hamiltonian and
 classical references satisfying Theorem~\ref{cr:thm:local-main};
 $\eta,p_{\mathrm{fail}}\in(0,1/4)$.
\Ensure A unitary satisfying Eq.~\eqref{m:eq:local-guarantee},
 except with probability at most $p_{\mathrm{fail}}$.
\State Choose a sequence $A_0,\ldots,A_m$ by deleting one site
at a time from $A_0=\Omega$: remove all sites in
$\Omega\setminus(Yz)$ first and $z$ last, so that
$A_{m-1}=Yz$ and $A_m=Y$.
\State Preprocess the classical reference data, keeping $K_0$ exact.
\For{$j=1,\ldots,m$}
 \State Independently encode $E_j$ in Eq.~\eqref{m:eq:evolution}
 using the imaginary-time construction based on Eq.~\eqref{m:eq:evolution-generator}.
 \State Determine $a_j$ during preprocessing and encode the
         auxiliary map $W_j$ for the contraction in Eq.~\eqref{m:eq:contraction-step}.
\EndFor
\State Use a product of these maps to encode only $X_{Yz},X_Y$
 in Eq.~\eqref{m:eq:endpoint-corrections}.
\State Construct $T_Y,T_{Yz}^{-1}$ using
 Eqs.~\eqref{m:eq:integral}--\eqref{m:eq:condition}.
\State Use Eq.~\eqref{m:eq:central-factorization} to encode the isometry
 in Eq.~\eqref{m:eq:isometry}, and amplify its initialized action;
 \Return $\mathscr U_m$.
\end{algorithmic}
\end{algorithm}

\subsection{Global preparation and resources}
\label{sec:parallel-method}
\label{sec:overview:global-resources}

%The local circuits complete the finite-patch reduction of
%Sec.~\ref{sec:overview:assembly}. Each circuit uses its own purification
%and work registers, so circuits in the same color class remain
%parallel after compilation. 
Let $U_x$ be the local recovery unitary for site $x$.
Circuits on disjoint patches can be run in parallel.
With the radii and local error budgets chosen in Theorem~\ref{thm:global},
their composition prepares the canonical purification with
 error at most $\varepsilon$ and success probability
$1-O(\varepsilon)$.
%The output averaged over preprocessing has trace-norm error
%at most $\varepsilon$.
%
%With the radii in
%Eq.~\eqref{eq:global-radii} and local budgets
%$\eta_x=\varepsilon/(4N)$ and
%$p_{\mathrm{fail},x}=\varepsilon/(8N)$, locality and implementation
%errors add over sites, and a union bound controls preprocessing
%failures. Theorem~\ref{thm:global} gives a fixed unitary preparing
%$|\Gamma_\beta\rangle$ to vector error at most $\varepsilon/4$ with
%probability at least $1-\varepsilon/8$, and trace-norm error at most
%$\varepsilon$ after averaging over preprocessing.
%
Both local precision logarithms are $O(\log(N/\varepsilon))$.
Substituting $m\le M=O([\log(N/\varepsilon)]^D)$ into
Eq.~\eqref{m:eq:local-cost} gives polylogarithmic cost per patch.
Depth multiplies by the color count in Eq.~\eqref{m:eq:colors},
whereas gates and separately allocated registers add over sites:
\begin{equation}\label{m:eq:quantum-cost}
\begin{aligned}
 \operatorname{depth}(\mathscr U_{\mathrm{prep}})
 &\le n_{\mathrm{col}}\max_x\operatorname{depth}(\mathscr U_x)
   =\polylog(N/\varepsilon). %,\\
% G_{\mathrm Q},\ n_{\mathrm{qubits}}
% &\le N\polylog(N/\varepsilon).
\end{aligned}
\end{equation}
The total gate count and the number of system, purification,
and work qubits are bounded by
$N\operatorname{polylog}(N/\varepsilon)$.
%Here, $\mathscr U_{\mathrm{prep}}$ is the global preparation unitary after preprocessing,
%$G_{\mathrm Q}$ is the total number of elementary quantum gates, 
%and
%$n_{\mathrm{qubits}}$ is the total number of qubits, 
%including the $N\polylog(N/\varepsilon)$ number of qubits of the total
%system, purification, and work registers.
%The gate budget includes normalization experiments; the 
%The depth refers to
%the unitary after preprocessing. 
Under
Assumption~\ref{cr:ass:classical}(ii), the estimate in
Proposition~\ref{prop:classical-budget} gives a classical preprocessing cost of
$N^{o(1)}$ per patch and $N^{1+o(1)}$ overall when
$\log(1/\varepsilon)=O(\log N)$. 
%Sequential execution of the same
%circuits instead prepares the mixed Gibbs state using
%$Nn_q+\polylog(N/\varepsilon)$ qubits, with
%$N\polylog(N/\varepsilon)$ gates and depth
%(Corollary~\ref{cor:sequential-workspace}). 
Appendix~\ref{sec:global-complexity}
collects the global construction and its error and resource bounds.

\section{Applications}
\label{sec:application}

\subsection{Thermal expectation values}
\label{sec:thermal-expectation-application}
Thermal observables such as local energies and equilibrium
correlations can be estimated from the prepared Gibbs state.
For estimating multiple observables, Theorem~\ref{thm:global} supplies a
preparation unitary $U_{\mathrm{prep}}$ and its inverse that
can be used in the methods of Refs.~\cite{HugginsEtAl2022,WadaYamamotoYoshioka2025}, together with the observable operations
assumed in those works.

\subsection{One-dimensional fermionic systems}
\label{sec:fermionic-application}

For parity-preserving, finite-range fermionic Hamiltonians on an open
one-dimensional chain with a fixed number of modes per site, the
Jordan--Wigner transformation gives a finite-range qubit Hamiltonian and
identifies the normalized Gibbs states in the occupation and computational
bases \cite{VerstraeteCirac2005}. Applying our construction
to the mapped Hamiltonian therefore gives a unitary preparing an encoded
purification of the fermionic Gibbs state whenever the effective-interaction
and input-access hypotheses of Theorem~\ref{thm:global} hold. 

\section{Discussion}
\label{sec:comparison}
\label{sec:discussion-reference-eth}

\subsection{Temperature and quantum advantage}
\label{sec:discussion:temperature-advantage}

A central question is whether the same recovery construction
remains efficient at lower temperatures.
The present high-temperature condition provides uniform control
of the spread and norm growth of local operators under
imaginary-time evolution.
Model-specific bounds may extend this temperature range, provided
the effective-Hamiltonian structure and classical access required
by the construction can also be maintained.

%A central methodological question is how far the same recovery
%construction can be extended towards lower temperatures. The
%high-temperature condition in Assumption~\ref{an:ass:strip} is sufficient
%for the imaginary-time estimates, rather than a necessary condition for
%efficient preparation. Model-specific locality bounds could replace this
%uniform condition by controlling the norms and spatial perturbations of
%the relevant effective-Hamiltonian evolutions
%\cite{PerezGarciaPerezHernandez2023,BluhmCapelPerezHernandez2025}.
%The corresponding estimates for the stability of the classical references,
%encoding normalization, and truncation must remain efficient. Turning such diagnostics into guarantees requires %uniform
%bounds on the uncomputed tails. This offers a model-dependent route to
%extending the present construction without identifying its numerical
%high-temperature condition with a physical or computational boundary.

An important motivation is to reach a regime with quantum sampling
advantage. At sufficiently high temperature, local Gibbs states admit
efficiently samplable mixtures of product states
\cite{BakshiLiuMoitraTang2024,PuttermanZlokapaCotler2026}. Quantum sampling
advantages at constant temperature are nevertheless known for specific
local Hamiltonian families under stated complexity assumptions
\cite{RajakumarWatson2026}. A natural target is therefore an intermediate
temperature regime in which local recovery remains efficiently
implementable while sampling the corresponding measurement distribution
is classically hard. Establishing this overlap for a Hamiltonian family
satisfying the required recovery and input-access conditions would turn
the present construction into a model-specific route to quantum advantage.

\subsection{Correlated initial states and mixed-state phases}
\label{sec:discussion:mixed-state-phases}

A different extension changes the initial state rather than only the
estimates used to justify recovery. Mixed-state phase frameworks relate
local reversibility to uniform exponential CMI decay, expressed by a
finite Markov length: under such bounds, a suitable local dissipative
path admits quasi-local reversal \cite{SangHsieh2025}, and locally reversible channel circuits
provide a refined notion of phase equivalence
\cite{SangEtAl2025LocalReversibility}. This motivates implementing the
recovery operations along such paths with elementary-gate cost polynomial
in their support size.

For low-temperature families carrying long-range logical information,
one could begin with an efficiently preparable quantum state that already
contains the relevant topological correlations. Local recovery would then
adjust thermal excitations while retaining that information, rather than
create it from a product state. Equilibration within logical sectors and
circuit descriptions of finite-temperature phases provide related
settings \cite{BergamaschiGheissariLiu2025,MaKhemaniSang2025}.
Extending the present approach would require conditional-correlation,
locality, and implementation estimates for the relevant regions and
intermediate states along the chosen path, together with control of
the sector weights in the desired purification. 

\subsection{Optimization and learning}
\label{sec:discussion:optimization-learning}

Thermal expectation values are used in quantum optimization and statistical
inference, including Gibbs-based semidefinite programming
\cite{BrandaoSvore2016} and local quantum Gibbs-model learning
\cite{AnshuEtAl2020Learning,CoopmansBenedetti2024}. In the latter,
parameter gradients are determined by differences between prescribed
expectations and those of the current model. The preparation circuit,
combined with the estimation method in
Sec.~\ref{sec:thermal-expectation-application}, supplies the model
expectations needed for these updates. This gives a route to
quantum-assisted inference for local Gibbs models whose parameter
trajectories remain uniformly within the regime of our preparation
theorem.

\subsection{Fermionic encodings in higher dimensions}
\label{sec:discussion:fermionic-encodings}

Beyond the one-dimensional application, locality-preserving
fermion-to-qubit encodings offer a starting point for higher-dimensional
extensions \cite{VerstraeteCirac2005,DerbyEtAl2021}. Such encodings can
introduce a physical code subspace. If its projector $P$ commutes with
the encoded Hamiltonian $H_{\mathrm q}$, the encoded thermal state is
$Pe^{-\beta H_{\mathrm q}}P/\Tr(Pe^{-\beta H_{\mathrm q}})$, rather than
the unconstrained Gibbs state on all qubits. The auxiliary constraints
must therefore be incorporated into the preparation; they do not become
irrelevant when the fermionic state is hot \cite{RamkumarEtAl2025}.
Promising directions are recovery maps that preserve the physical
constraint sector and formulations directly in fermionic local algebras,
with the effective-interaction and circuit-access bounds established for
those settings.

\subsection{Geometry and connected retained regions}
\label{sec:discussion:connected-regions}

Sequential preparation can use connected retained regions on
suitable geometries, reducing the reference data required by the
local compiler.
The global proof nevertheless uses uniform estimates for the wider
region family entering local indistinguishability.
The present theorem therefore retains its region-uniform
effective-Hamiltonian assumptions.

\section*{Use of AI tools.}
In this work, the idea of combining fixed coarse classical
effective-Hamiltonian references with quantum corrections derived
from the microscopic Hamiltonian to implement local Petz recovery
originated with the authors.
OpenAI models GPT-5.6-sol, GPT-5.6-luna, GPT-5.6-terra,
GPT-5.6-sol-wm, and GPT-6-astra, accessed through ChatGPT
in August and September 2026, were used to explore proof approaches,
check intermediate mathematical arguments, error estimates, and
computational resource bounds, and assist with literature searches.
They were also used extensively to draft and revise the entire
manuscript, including the mathematical proofs.
The mathematical arguments and proofs were independently verified
by the authors.
The authors take sole responsibility for the content and correctness
of the final manuscript.

\section*{Acknowledgments}
H.H. was supported by MEXT Quantum Leap Flagship Program (MEXT Q-LEAP) Grant Number JPMXS0120319794, JST PRESTO, Japan, Grant Number JPMJPR25F7, and JSPS KAKENHI Grant No. JP24K16979.
A. I. was supported by the Center of Innovations for Sustainable Quantum AI
(JST Grant No. JPMJPF2221) 
and  JSPS KAKENHI (No. 25K17308).
N.Y. is supported by JST Grant Number JPMJPF2221, JST CREST Grant Number JPMJCR23I4, IBM Quantum, Google Quantum AI, JST ASPIRE Grant Number JPMJAP2316, JST ERATO Grant Number JPMJER2302, JST [Moonshot R\&D] [Grant Number JPMJMS256J], and Institute of AI and Beyond of the University of Tokyo.

\bibliography{bib}

\onecolumngrid
\clearpage
\appendix
\renewcommand{\theequation}{\Alph{section}.\arabic{equation}}
\section*{Notation}
\label{sec:notation-guide}
Table~\ref{tab:notation} summarizes the principal notation for the preparation
argument. Local proof constants are defined where they are used. Physical parameters and locality bounds are
fixed independently of $N$ and the desired accuracy.

\begingroup
\normalsize
\setlength{\tabcolsep}{5pt}
\renewcommand{\arraystretch}{1.10}
\makeatletter\def\@captype{table}\makeatother
\caption{Principal notation. Region and step indices specify members of the
same family. Encoding normalization and approximation error are separate
quantities.}
\label{tab:notation}
\noindent\begin{tabular}{@{}p{0.265\textwidth}p{0.565\textwidth}p{\dimexpr0.17\textwidth-20pt\relax}@{}}
\toprule
\raggedright Symbol & \raggedright Meaning & \raggedright Definition\tabularnewline
\midrule
\raggedright $\Lambda$, $\Omega$; $N$, $M$ & \raggedright Full lattice and recovery patch; their site counts $N=|\Lambda|$, $M=|\Omega|$. & \raggedright Def.~\ref{an:def:geometry}\tabularnewline
\raggedright $D$, $q$, $\beta$ & \raggedright Fixed spatial dimension, local Hilbert-space dimension, and inverse temperature. & \raggedright Def.~\ref{an:def:geometry}\tabularnewline
\raggedright $Y$, $z$; $A_j$, $B_j$, $m$ & \raggedright Retained input and restored site; internal retained regions, deleted blocks, and number of coefficient steps. & \raggedright Def.~\ref{cr:def:chain}\tabularnewline
\raggedright $R$, $\ell$, $\ell_{\mathrm{cl}}$ & \raggedright Recovery radius, additional patch buffer, and classical reference radius; the quantum patch has radius $R+\ell$. & \raggedright Eqs.~\eqref{eq:global-radii},\newline\eqref{eq:classical-radius}\tabularnewline
\raggedright $H_\Omega$, $\gamma_\Omega$, $\rho_A^\Omega$ & \raggedright Microscopic Hamiltonian, its Gibbs state, and its marginal on $A$; $\rho_j=\rho_{A_j}^\Omega$. & \raggedright Eqs.~\eqref{an:eq:gibbs},\newline\eqref{cr:eq:chain}\tabularnewline
\raggedright $\widetilde H_A^\Omega$, $K_A^\circ$, $K_A$ & \raggedright Exact effective Hamiltonian, $\beta\widetilde H_A^\Omega$, and its centered version. & \raggedright Eqs.~\eqref{an:eq:heff},\newline\eqref{an:eq:k-circ}\tabularnewline
\raggedright $\overline L_j$, $L_j$ & \raggedright Supplied and compressed classical references for $j\ge1$; $K_0$ is the exact microscopic input ($L_0:=K_0$ in Appendix~C). & \raggedright Assump.~\ref{cr:ass:classical};\newline Lemma~\ref{cr:lem:sparsify}\tabularnewline
\raggedright $Q$, $\mathcal P$, $V$ & \raggedright Ordinary Petz factor, channel, and specified isometry; subscripts specify the restored and retained regions. & \raggedright Def.~\ref{an:def:petz}\tabularnewline
\raggedright $|\Gamma_\beta\rangle$, $\mathcal E_A$, $\mathsf A$ & \raggedright Target canonical purification, fixed-basis reference copy of $A$, and work register. & \raggedright Def.~\ref{def:canonical-purification};\newline Def.~\ref{def:global-algorithm}\tabularnewline
\raggedright $\mathscr U_x$, $\mathscr U_{\mathrm{prep}}$ & \raggedright Compiled one-site unitary and global preparation unitary. & \raggedright Def.~\ref{def:global-algorithm}\tabularnewline
\raggedright $E_j(t)$, $G_j(t)$ & \raggedright Square exponential product and its generator, $E_j^{\prime}=-E_jG_j$; the first factor uses exact $K_0$. & \raggedright Eqs.~\eqref{cr:eq:E},\newline\eqref{cr:eq:generator}\tabularnewline
\raggedright $W_j$, $\mathbb W_j$ & \raggedright Independently compiled one-step map and its cumulative product, updated along the deletion chain. & \raggedright Eqs.~\eqref{cr:eq:W-S},\newline\eqref{cr:eq:correction-product}\tabularnewline
\raggedright $X_A$, $T_A$; $A=Y,Yz$ & \raggedright Endpoint quantum correction and ordered square-root factor; $X_A$ is encoded via one-site contractions of the microscopic input. & \raggedright Eqs.~\eqref{m:eq:endpoint-corrections},\newline\eqref{m:eq:integral}\tabularnewline
\raggedright $a_j$; $\alpha$ & \raggedright Stored scalar normalization; normalization of a projected operator encoding, distinct from its operator norm. & \raggedright Lemma~\ref{lem:normalization-estimation};\newline Def.~\ref{def:encoding}\tabularnewline
\raggedright $\varepsilon$, $\eta$ & \raggedright Global preparation tolerance and local channel tolerance; local isometry error is at most $\eta/2$. & \raggedright Thms.~\ref{thm:global}, \ref{cr:thm:local-main}\tabularnewline
\raggedright $p_{\mathrm{fail}}$, $\zeta$ & \raggedright Local preprocessing failure budget and internal encoding or synthesis tolerance. & \raggedright Thm.~\ref{cr:thm:local-main};\newline Def.~\ref{def:encoding}\tabularnewline
\raggedright $r_{\mathrm{ref}}$, $e_M$ & \raggedright Supplied reference error and allowed inverse-polynomial threshold, independent of the final $\eta$. & \raggedright Eqs.~\eqref{cr:eq:reference-error},\newline\eqref{cr:eq:theta-p}\tabularnewline
\raggedright $w_{\lambda,\mu}(S)$, $\|\Phi\|_{\lambda,\mu}$ & \raggedright Exponential support-size and diameter weight, and the corresponding per-site interaction norm. & \raggedright Eq.~\eqref{an:eq:weights}\tabularnewline
\raggedright $\mu_0,\mu_1,\mu_2$, $\mu_*$; $\kappa$ & \raggedright Spatial decay rates in the hypotheses and the reduced rate used for recovery errors. & \raggedright (E1)--(E2);\newline Eq.~\eqref{an:eq:parameters}\tabularnewline
\raggedright $J_H$, $C_\beta$, $g_1^\beta$ & \raggedright Microscopic strength, boundary-correction strength, and retained-region stability bound. & \raggedright Eq.~\eqref{an:eq:jh};\newline (E1)--(E2)\tabularnewline
\raggedright $\Delta_0$, $J_{\mathrm{ref}}$, $J_*$ & \raggedright Bounds for exact interactions, supplied lists, and the common bound after compression; all are below $\lambda$. & \raggedright Eqs.~\eqref{an:eq:strip},\newline\eqref{cr:eq:reference-strength}, \eqref{cr:eq:margins}\tabularnewline
\raggedright $\mathcal C_{\mathrm{ref}}(\Omega)$, $G_{\mathrm Q}$ & \raggedright Classical reference-generation cost per patch and total quantum gates, including normalization trials. & \raggedright Assump.~\ref{cr:ass:classical};\newline Thm.~\ref{thm:global}\tabularnewline
\bottomrule
\end{tabular}
\endgroup
\clearpage
\twocolumngrid
\section{Analytic hypotheses and locality of Petz recovery}
\label{sec:setting}\label{sec:analytic}\label{an:sec:analytic}

\subsection{Geometry, states, and recovery maps}

\begin{definition}[Geometry and Gibbs states]\label{an:def:geometry}
Let $\Lambda$ be the vertex set of a finite connected graph with uniformly
bounded degree, $N=|\Lambda|\ge1$, and shortest-path distance $d$.
The local Hilbert-space dimension $q\ge2$ is fixed. Assume polynomial volume growth
\begin{equation}\label{an:eq:growth}
\begin{aligned}
 |B_r(x)|&\le c_{\mathrm{vol}}(1+r)^D,\\
 B_r(x)&:=\{y\in\Lambda:d(x,y)\le r\}.
\end{aligned}
\end{equation}
The bound holds for every $x\in\Lambda$ and $r\ge0$, with fixed $D\ge1$ and $c_{\mathrm{vol}}>0$ independent of $N$.
For nonempty sets $A,B$, put $d(A,B):=\min_{a\in A,b\in B}d(a,b)$,
and set the distance to $+\infty$ if either region is empty.
Write $d(x,B):=d(\{x\},B)$ and
$\diam(A):=\max_{a,b\in A}d(a,b)$ for nonempty $A$, with
$\diam(\varnothing):=0$.
The Hilbert space of $A$ is
$\mathcal H_A:=\bigotimes_{x\in A}\mathbb C^q$, with
$\mathcal H_{\varnothing}:=\mathbb C$ and $d_A:=q^{|A|}$.
Write $\tr_A(O):=d_A^{-1}\Tr_A O$ for the normalized full trace.
Disjoint unions are abbreviated by juxtaposition, for example $YZ=Y\mathbin{\dot\cup}Z$.
Every operator on a smaller region is embedded by tensoring with identities.

For every $\Omega\subseteq\Lambda$, let
\begin{equation}\label{an:eq:gibbs}
\begin{gathered}
 H_\Omega=\sum_{\varnothing\ne S\subseteq\Omega}h_S,
 \qquad \gamma_\Omega=\frac{e^{-\beta H_\Omega}}{\Tr e^{-\beta H_\Omega}},\\
 h_S=h_S^\dagger,\qquad |S|\le k,\qquad\diam(S)\le r_0,
\end{gathered}
\end{equation}
with fixed $k,r_0$ and $\beta>0$.
For $A\subseteq\Omega$ define $\rho_A^\Omega:=\Tr_{\Omega\setminus A}\gamma_\Omega$.
All nonempty-system Gibbs marginals are positive definite; such states are also called faithful. The state on the empty system is the scalar $1$.
The normalized partial trace is $\mathbb E_B:=d_B^{-1}\Tr_B$; it is not used in place of
$\Tr_B$ when defining density-operator marginals.
\end{definition}

All logarithms are natural. The Euclidean norm of a state vector is $\norm{\cdot}_2$. Unsubscripted norms of operators are operator norms, $\norm{\cdot}_1$ is
the trace norm, and
\begin{equation}\label{an:eq:diamond}
 \norm{\mathcal T}_\diamond
 :=\sup_{k\ge1}\ \sup_{\norm{O}_1=1}
 \norm{(\mathcal T\otimes\id_k)(O)}_1.
\end{equation}
The symbol $\I_A$ denotes an identity operator and $\id_A$ an identity channel.
We write $\mathcal B(\mathcal H)$ for the linear operators on a
finite-dimensional Hilbert space $\mathcal H$.
The support $\supp(O)$ of an operator on a tensor-product register is
its smallest set of tensor factors outside which it acts as the identity;
the support of a scalar multiple of the identity is empty.
For Hermitian operators $A,B$, the order $A\preceq B$ means that
$B-A$ is positive semidefinite; $A\succeq0$ and $A>0$ mean positive
semidefinite and positive definite, respectively.
The notation $\poly$ denotes a polynomial of fixed degree, and
$\polylog x$ denotes a fixed polynomial in $\log x$.
Constants and polynomial degrees may depend on the fixed spatial dimension,
geometry, temperature, interaction bounds, and the fixed positive differences in the stated inequalities.
They are independent of the system size and desired accuracy.
Dimension dependence is carried by these constants and exponents.
We encode one $q$-dimensional site in
$n_q:=\lceil\log q/\log 2\rceil$ qubits.

\begin{definition}[Conditional mutual information]\label{an:def:cmi}
For a state $\rho_{WYZ}$, its conditional mutual information (CMI) is
\begin{equation}\label{an:eq:cmi}
\begin{aligned}
 I(W:Z\mid Y)_\rho
 &:=S(\rho_{WY})+S(\rho_{YZ})\\
 &\quad-S(\rho_Y)-S(\rho_{WYZ}),\\
 S(\sigma)&:=-\Tr\sigma\log\sigma.
\end{aligned}
\end{equation}
The exact quantum Markov condition is $I(W:Z\mid Y)_\rho=0$.
We will bound the decay of \eqref{an:eq:cmi} as the outer regions are separated.
\end{definition}

\begin{definition}[Ordinary Petz factor, channel, and isometry]\label{an:def:petz}
For disjoint $R,Z\subseteq\Omega$, set
\begin{align}
 Q^\Omega_{Z|R}
 &:= (\rho^\Omega_{RZ})^{1/2}
       \bigl((\rho^\Omega_R)^{-1/2}\otimes\I_Z\bigr),\label{an:eq:q}\\
 \PP^\Omega_{Z|R}(O_R)
 &:=Q^\Omega_{Z|R}(O_R\otimes\I_Z)(Q^\Omega_{Z|R})^\dagger.\label{an:eq:petz}
\end{align}
Fix one orthonormal basis on each site and use its product basis on every
region. For an environment $E$ that is a copy of $Z$ in this basis, define
\begin{equation}\label{an:eq:stinespring}
 \mathsf K_a:=Q^\Omega_{Z|R}(\I_R\otimes|a\rangle_Z),
 \qquad V^\Omega_{Z|R}:=\sum_{a=1}^{d_Z}\mathsf K_a\otimes|a\rangle_E.
\end{equation}
These formulas specify the factor, channel, and Stinespring isometry of
the ordinary, unrotated Petz recovery map \cite{Petz1986Sufficiency}.
\end{definition}

The map in Eq.~\eqref{an:eq:petz} is completely positive and trace
preserving, $V^\Omega_{Z|R}$ is an isometry, and
\begin{equation}\label{an:eq:exact-extension}
 \PP^\Omega_{Z|R}(\rho_R^\Omega)=\rho_{RZ}^\Omega.
\end{equation}
These are standard properties of Petz recovery and its Stinespring
representation \cite{Petz1986Sufficiency,Wilde2015}.
In the Kraus notation above, trace preservation follows directly from
$\Tr_Z\rho_{RZ}^\Omega=\rho_R^\Omega$:
\begin{equation}\label{an:eq:tp}
 \sum_a\mathsf K_a^\dagger\mathsf K_a
 =\Tr_Z[(Q^\Omega_{Z|R})^\dagger Q^\Omega_{Z|R}]=\I_R.
\end{equation}
Substitution of $\rho_R^\Omega$ in the channel gives
Eq.~\eqref{an:eq:exact-extension}.

Exact extension of a reference marginal is not yet local recovery of its
correlations with a distant region. The latter requires an estimate with an
identity channel on that distant region; this is established in
Sec.~\ref{an:sec:analytic}.

\begin{definition}[Canonical purification; {\cite{DuttaFaulkner2021}}]\label{def:canonical-purification}
For every site $x$ let $\mathcal E_x$ be a copy of its local Hilbert space,
and put $\mathcal E_A=\bigotimes_{x\in A}\mathcal E_x$.
For a state $\omega_A$ define
\begin{equation}\label{eq:canonical-state}
\begin{aligned}
 |\Upsilon_A\rangle
   &:=\sum_{\boldsymbol b}|\boldsymbol b\rangle_A
                          |\boldsymbol b\rangle_{\mathcal E_A},\\
 |\Gamma(\omega_A)\rangle
   &:=(\omega_A^{1/2}\otimes\I_{\mathcal E_A})
       |\Upsilon_A\rangle.
\end{aligned}
\end{equation}
The vector $|\Upsilon_A\rangle$ is unnormalized, whereas
$\norm{|\Gamma(\omega_A)\rangle}_2=1$ and its reduced state on $A$
is $\omega_A$. Indeed,
$\langle\Upsilon_A|(\omega_A\otimes\I)|\Upsilon_A\rangle=\Tr\omega_A$
and tracing its reference basis gives $\omega_A$.
For the full Gibbs state write
\begin{equation}\label{eq:canonical-gibbs}
\begin{gathered}
 |\Gamma_\beta\rangle
 =\frac{(e^{-\beta H_\Lambda/2}\otimes\I_{\mathcal E_\Lambda})
             |\Upsilon_\Lambda\rangle}{\sqrt{Z_\beta}},\\
 Z_\beta:=\Tr e^{-\beta H_\Lambda}.
\end{gathered}
\end{equation}
All reference bases and tensor-factor labels are fixed throughout the
construction. The state on the empty system is the scalar $1$.
\end{definition}

\subsection{Weighted effective interactions}

An interaction $\Phi=\{\Phi_S\}_{\varnothing\ne S\subseteq\Lambda}$ is a
specified family of Hermitian operators, with $\Phi_S$ supported on $S$.
Terms absent from a decomposition are zero. For $\lambda>0$ and
$\mu\ge0$, define
\begin{equation}\label{an:eq:weights}
\begin{aligned}
 w_{\lambda,\mu}(S)&:=e^{\lambda|S|+\mu\diam(S)},\\
 \norm{\Phi}_{\lambda,\mu}
 &:=\sup_x\sum_{S\ni x}\norm{\Phi_S}w_{\lambda,\mu}(S).
\end{aligned}
\end{equation}
Every use of this expression refers to a specified interaction decomposition.
Scalar multiples of the full identity are kept separately.
Fix positive decay parameters $\mu_0,\mu_1,\mu_2$, independent
of $N$. Assume a uniform bound on
\begin{equation}\label{an:eq:jh}
 J_H:=\sup_x\sum_{S\ni x}\norm{h_S}w_{\lambda,\mu_0}(S).
\end{equation}

Local effective Hamiltonians are used in Refs.~\cite{BluhmCapelPerezHernandez2025,ScaletEtAl2025}. We use the weighted norm of Ref.~\cite{BluhmCapelPerezHernandez2025}, Eq.~(3), and impose the following explicit retained-region estimates. In contrast to an approximate finite-range decomposition, E1--E2 refer to one simultaneously chosen exact family, with quantitative bounds for every retained subset.

\begin{assumption}[Effective interactions]\label{an:ass:effective}
For every $A\subseteq\Omega\subseteq\Lambda$ choose
\begin{equation}\label{an:eq:heff}
\begin{gathered}
 \He_A^\Omega=H_A+\sum_{\varnothing\ne S\subseteq A}h_S^{\beta,\Omega;A},\\
 \rho_A^\Omega=\frac{e^{-\beta\He_A^\Omega}}{\mathcal Z_A^\Omega},
 \qquad \mathcal Z_A^\Omega:=\Tr e^{-\beta\He_A^\Omega}.
\end{gathered}
\end{equation}
The correction terms are Hermitian, have the indicated supports, and are chosen consistently
for all pairs $A\subseteq\Omega$.  Set $\He_\Omega^\Omega=H_\Omega$.
Pure scalar corrections are absorbed into normalization.
For fixed nonnegative constants $C_\beta,g_1^\beta$, independent of $N,\Omega,A$, assume the following bounds
for every $x\in A$, with $A\cap B=\varnothing$ in the second inequality:
\begin{align}
 &\sum_{S\subseteq A:S\ni x}
 \norm{h_S^{\beta,\Omega;A}}w_{\lambda,\mu_0}(S)
 \le C_\beta e^{-\mu_1d(x,\Omega\setminus A)},
 \tag{E1}\label{an:eq:e1}\\
 &\sum_{S\subseteq A:S\ni x}
 \norm{h_S^{\beta,\Omega;A}-h_S^{\beta,\Omega;AB}}
 w_{\lambda,\mu_0}(S)\notag\\[-1mm]
 &\hspace{3cm}\le g_1^\beta e^{-\mu_2d(x,B)}.
 \tag{E2}\label{an:eq:e2}
\end{align}
The term carrying a support label absent from a decomposition is zero.  In \eqref{an:eq:e2},
$A,B\subseteq\Omega$, and the two terms carry the \emph{same} label $S\subseteq A$.
The bounds are required for all retained regions used below, including disconnected regions
and regions with holes, not only for connected intervals.
\end{assumption}

Local indistinguishability controls the effect of distant boundary changes
on Gibbs marginals~\cite{CapelMoscolariTeufelWessel2025}.
Lemma~\ref{lem:ambient-marginals} below derives this state-level estimate
from Assumption~\ref{an:ass:effective} and the high-temperature condition.
It is then used to compare Petz isometries on their Gibbs inputs.

The nontrivial requirement in Assumption~\ref{an:ass:effective} is uniform locality of the
marginal logarithms, rather than their existence, which follows from
positive definiteness at finite temperature. Concrete high-temperature
examples are supplied by the uniform effective-interaction bounds of
Ref.~\cite[Remark~3.1 and Theorem~3.6]{BluhmCapelPerezHernandez2025}, for
interactions belonging to a commutative algebra preserved by normalized
partial traces; this class includes finite-range classical spin models.
Grouping those coefficients by their retained support yields
\eqref{an:eq:e1}--\eqref{an:eq:e2}, with a small reduction of the spatial
decay exponent: only terms reaching the traced region contribute to the
boundary correction, and only terms reaching the changed region survive
in coefficient differences. For one-dimensional noncommuting systems,
quasi-local effective Hamiltonians for interval marginals are known at
every fixed finite temperature
\cite{Kuwahara2025Markov,ScaletEtAl2025}; those interval estimates alone
do not supply the simultaneous bounds on disconnected retained sets used
here. The temperature qualification is substantive. In dimensions
$D\ge2$, retaining a periodic sublattice of a $q$-state Potts model with
sufficiently large but fixed $q$ can produce, in the thermodynamic limit,
a marginal without a uniformly quasi-local Gibbs description even above
the bulk transition temperature
\cite[Theorem~4.2]{VanEnterFernandezKotecky1995}. The resulting conditional
probabilities retain a nonvanishing sensitivity to distant boundary
changes, which is incompatible with a volume-uniform exponentially
summable interaction description. This distinguishes the high-temperature
phase from the sufficiently high-temperature regime of the coefficient
bounds. For general noncommuting models, the full simultaneous estimates
remain explicit sufficient hypotheses; the construction cost of approximate
interaction lists is specified separately in
Sec.~\ref{sec:reference-input}.

The following sufficient high-temperature condition permits the
complex-time bounds of Ref.~\cite[Proposition~2.1]{BluhmCapelPerezHernandez2025}.
One-dimensional complex-time locality can hold in a broader temperature
regime~\cite{PerezGarciaPerezHernandez2023}.

\begin{assumption}[High-temperature condition]\label{an:ass:strip}
For the ordinary-Petz locality statements, let $\Delta_0$ be independent
of the system size and assume
\begin{equation}\label{an:eq:strip}
 \beta(J_H+C_\beta)\le\Delta_0<\lambda.
\end{equation}
\end{assumption}
The fixed separation $\lambda-\Delta_0>0$ ensures volume-independent
norm and truncation estimates throughout the required imaginary-time
interval. It is not a condition for the existence of finite-dimensional
matrix exponentials.

Put
\begin{equation}\label{an:eq:parameters}
 0<4\kappa<\min\{\mu_0,\mu_1,\mu_2\},
 \qquad \mu_*:=\mu_0-2\kappa>\kappa.
\end{equation}
Let
\begin{equation}\label{an:eq:k-circ}
 K_A^\circ:=\beta\He_A^\Omega,
 \qquad
 U^\Omega_{Z|R}:=K_{RZ}^\circ-K_R^\circ\otimes\I_Z.
\end{equation}
The superscript $\Omega$ on $K_A^\circ$ is suppressed while $\Omega$ is fixed.
The centered modular Hamiltonian used in the circuit construction is
$K_A=K_A^\circ-\tr_A(K_A^\circ)\I_A$.
In contrast, $-\log\rho_A^\Omega=K_A^\circ+\log\mathcal Z_A^\Omega\,\I_A$.
These three representatives must not be interchanged without accounting for scalars.

For a specified decomposition $O=c\I+\sum_S O_S$, its weighted total sum is
\begin{equation}\label{an:eq:total-mass}
 |c|+\sum_S\norm{O_S}w_{\lambda,\mu}(S).
\end{equation}
For a supplied list $O=c\I+\sum_S O_S$, let $a_S\ge\norm{O_S}$
be computable nonnegative bounds. Replacing $\norm{O_S}$ by $a_S$
in \eqref{an:eq:weights} or \eqref{an:eq:total-mass} gives a computable
upper bound on the corresponding weighted norm or total sum.
These upper bounds need not be tight. The non-scalar weighted coefficient
bound of the list will be denoted by
\begin{equation}\label{eq:list-weight-bound}
 J^{\mathrm{bd}}(O):=\sup_x\sum_{S\ni x}a_Sw_{\lambda,\mu}(S).
\end{equation}
The symbols $J^{\mathrm{bd}}_{L_j}$ refer to this quantity for the specified
list of $L_j$. Analytic decompositions need not be accessible to the
algorithm; supplied lists include supports, finite matrices, and norm bounds.

For a Hermitian $K$, put
\begin{equation}\label{an:eq:tau}
\begin{gathered}
 \tau_K^t(O):=e^{-tK/2}Oe^{tK/2},\qquad \ad_K^0(O):=O,\\
 \ad_K^{n+1}(O):=[K,\ad_K^n(O)].
\end{gathered}
\end{equation}
Scalar parts of $K$ do not affect $\tau_K^t$.

\subsection{Geometric sums and four-region cancellation}

Define $\Csh{a}:=c_{\mathrm{vol}}\sum_{n=0}^\infty(1+n)^D e^{-an}$
for $a>0$. Polynomial volume growth gives, for nonempty $Z$,
\begin{equation}\label{an:eq:shell}
\begin{aligned}
 \sum_xe^{-ad(x,Z)}&\le\Csh{a}|Z|,\\
 \sum_{x:d(x,Z)\ge r}e^{-ad(x,Z)}
 &\le\Csh{a-b}|Z|e^{-br}\quad(0<b<a).
\end{aligned}
\end{equation}
For a single site, bound the shell at distance $n$ by
$c_{\mathrm{vol}}(1+n)^D$ and sum. For a set, use
$e^{-ad(x,Z)}\le\sum_{z\in Z}e^{-ad(x,z)}$; extracting $e^{-br}$
gives the tail estimate. This is the geometric summability used in
Ref.~\cite[Remark~2.5]{BluhmCapelPerezHernandez2025}.

The four-region cancellation used for CMI decay in
Ref.~\cite[Lemma~1]{ScaletEtAl2025} also supplies the interaction
difference needed by the Petz-locality proof. We retain that
decomposition with its spatial weights; the CMI bound is a consequence.

\begin{lemma}[Four-region cancellation and conditional mutual information]
\label{an:lem:four-cmi}
Under Assumption~\ref{an:ass:effective}, let $W,Y,Z\subseteq\Omega$
be pairwise disjoint with $W,Z\ne\varnothing$, and put $r=d(W,Z)$.
The difference
\begin{equation}\label{an:eq:four}
\begin{aligned}
 \mathcal D^\circ
 &:=U^\Omega_{Z|WY}-\I_W\otimes U^\Omega_{Z|Y}\\
 &=K_{WYZ}^\circ+K_Y^\circ-K_{WY}^\circ-K_{YZ}^\circ
\end{aligned}
\end{equation}
has a specified decomposition $\sum_S d_S$ satisfying
\begin{equation}\label{an:eq:four-bound}
 \sum_S\norm{d_S}w_{\lambda,\mu_*}(S)
 \le C_{\mathrm{rel}}|Z|e^{-\kappa r},
\end{equation}
where $C_{\mathrm{rel}}$ is independent of the region volumes.
For the marginals of $\rho=\rho^\Omega_{WYZ}$, let
$\mathcal J=\log\rho_{WYZ}+\log\rho_Y-\log\rho_{WY}-\log\rho_{YZ}$.
Then
\begin{equation}\label{an:eq:cmi-main}
 0\le I(W:Z\mid Y)_\rho\le\norm{\mathcal J}
 \le2C_{\mathrm{rel}}\min\{|W|,|Z|\}e^{-\kappa r}.
\end{equation}
The CMI and logarithmic-combination bounds are zero if either outer
region is empty and do not require the high-temperature condition.
\end{lemma}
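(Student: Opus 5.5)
Writing $K_A^\circ=\beta H_A+\beta\sum_{S\subseteq A}h_S^{\beta,\Omega;A}$ from Assumption~\ref{an:ass:effective}, I would first split $\mathcal D^\circ$ into a microscopic part and a correction part. For the microscopic part, $H_{WYZ}+H_Y-H_{WY}-H_{YZ}$ keeps only terms $h_S$ with $S\subseteq WYZ$ that meet both $W$ and $Z$. Each such $S$ has $\diam(S)\ge r$. Then $w_{\lambda,\mu_*}(S)\le w_{\lambda,\mu_0}(S)e^{-(\mu_0-\mu_*)r}=w_{\lambda,\mu_0}(S)e^{-2\kappa r}$, and summing over $x\in Z$ gives at most $\beta J_H|Z|e^{-2\kappa r}$. (This is in fact zero once $r>r_0$.)

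For the correction part, I would group the terms by label $S$ and write, with $h^A_S:=h_S^{\beta,\Omega;A}$,
\[
 d_S=\beta\bigl[(h^{WYZ}_S-h^{YZ}_S)-(h^{WY}_S-h^{Y}_S)\bigr],
\]
where a term is zero when its label is not contained in the region. I would then treat labels according to their position.

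\textbf{Labels meeting $W$.} These arise only from $h^{WYZ}_S-h^{WY}_S$. By \eqref{an:eq:e2} applied with $A=WY$ and $B=Z$, the sum of these terms through any $x$ is at most $g_1^\beta e^{-\mu_2 d(x,Z)}$. If $S$ also meets $Z$, the label belongs to $h^{WYZ}$ alone, and \eqref{an:eq:e1} together with $\diam(S)\ge r$ controls it.

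\textbf{Labels inside $Y$.} For these I would use both groupings. The $Z$-difference $(h^{WYZ}-h^{WY})-(h^{YZ}-h^{Y})$ is bounded by $2g_1^\beta e^{-\mu_2 d(x,Z)}$. The $W$-difference, obtained by regrouping the same expression as $(h^{WYZ}-h^{YZ})-(h^{WY}-h^{Y})$, is bounded by $2g_1^\beta e^{-\mu_2 d(x,W)}$. Taking the minimum, and using $\min(e^{-a},e^{-b})\le e^{-(a+b)/2}$ with $d(x,W)+d(x,Z)\ge r$, gives $2g_1^\beta e^{-\mu_2 r/4}e^{-\mu_2 d(x,Z)/4}\cdots$. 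More carefully, I would split at $d(x,Z)\le r/2$ versus $d(x,Z)>r/2$, so that one factor $e^{-\mu_2 r/2}$ and one decaying factor remain.

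\textbf{Labels meeting $Z$ but not $W$.} These are treated symmetrically, using \eqref{an:eq:e2} with $A=YZ$ and $B=W$.

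\textbf{Summation.} Next I convert the weights from $\mu_0$ to $\mu_*$, which costs nothing since $\mu_*<\mu_0$. Then I sum over $S$ by assigning each $S$ to a site $x\in S$ and applying the shell bound \eqref{an:eq:shell} with the decay $e^{-\kappa d(x,Z)}$. This yields the factor $\Csh{\cdot}|Z|$, and the condition $4\kappa<\min\mu_i$ leaves $e^{-\kappa r}$. The main obstacle I expect is that the raw site sums are not automatically summable with a factor $|Z|$ and not $|W|$ or $|Y|$. The terms near $W$ decay only in $d(x,Z)$, so their $x$-sum must be taken against $Z$, which works by \eqref{an:eq:shell}. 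For labels in $Y$ and near $W$, I instead use the bound $e^{-\mu_2 d(x,Z)}$, which again sums against $Z$. The budget $\mu_*=\mu_0-2\kappa$ absorbs the diameter loss when a single site represents $S$.

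For the CMI, $\log\rho_A=-K_A^\circ-\log\mathcal Z_A\I$, so $\mathcal J=-\mathcal D^\circ+c\I$ for some scalar $c$. I would write $I=\Tr\rho_{WYZ}\mathcal J+\Tr(\text{relative-entropy terms})$; more directly, $I(W:Z|Y)=\Tr\rho\,\mathcal J\le\norm{\mathcal J}$, since each entropy is $-\Tr\rho\log\rho_A$ evaluated on the common state $\rho$. Nonnegativity is strong subadditivity. To bound $\norm{\mathcal J}$, I need to control the scalar $c$. I would take the trace of $e^{-\mathcal J}$-type identities, or use Golden--Thompson-free sandwiching: since $\rho_{WYZ}=\exp(\log\rho_{WY}+\log\rho_{YZ}-\log\rho_Y+\mathcal J)$ is normalized, $c$ is bounded by $\norm{\mathcal D^\circ}$. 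This follows from $\Tr\exp(\log\rho_{WY}+\log\rho_{YZ}-\log\rho_Y)\le1$ (Lieb) and a reverse bound. Together these give $\norm{\mathcal J}\le2\norm{\mathcal D^\circ}\le2C_{\mathrm{rel}}|Z|e^{-\kappa r}$. Since $I$ is symmetric in $W$ and $Z$, the same argument with $W$ and $Z$ exchanged gives the $\min$. The empty-region cases are immediate, because $\mathcal D^\circ$ then cancels identically. The high-temperature condition is never used.
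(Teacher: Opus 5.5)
\textbf{Decomposition bound.} Your proof of \eqref{an:eq:four-bound} follows the paper's argument essentially step by step. Labels meeting both outer regions are handled with $J_H$ or \eqref{an:eq:e1} and $\diam(S)\ge r$. Labels meeting one outer region use \eqref{an:eq:e2} and a shell sum against $Z$. Labels inside $Y$ use the two regroupings and the minimum. Your split at $d(x,Z)\lessgtr r/2$ is an acceptable substitute for the paper's geometric-mean step.

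\textbf{Gap: the upper bound on $c$.} The gap is in $\norm{\mathcal J}\le2\norm{\mathcal D^\circ}$, specifically the control of the scalar $c$ in $\mathcal J=-\mathcal D^\circ+c\I$. Put $\Xi:=\log\rho_{WY}+\log\rho_{YZ}-\log\rho_Y$. Normalization gives $1=e^{c}\Tr\exp(\Xi-\mathcal D^\circ)\le e^{c+\norm{\mathcal D^\circ}}\Tr e^{\Xi}$. Lieb's bound $\Tr e^{\Xi}\le1$ therefore yields only $c\ge-\norm{\mathcal D^\circ}$. That is the easy half, and it already follows from $I(W:Z\mid Y)_\rho=\Tr\rho\,\mathcal J\ge0$.

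The other half, $c\le\norm{\mathcal D^\circ}$, would need $\Tr e^{\Xi}\ge1$, or some lower bound on $\Tr e^{\Xi}$ not involving $c$. Neither is available for noncommuting marginals: Lieb's inequality points the other way and is generally strict. Bounding $\Tr e^{\Xi}$ from below via $\Xi=\log\rho_{WYZ}+\mathcal D^\circ-c\I$ is circular. So your ``reverse bound'' stands in for exactly the nontrivial half of the estimate. Without it, nothing prevents $c$ from being large and positive, and $I\le\norm{\mathcal J}$ gives no decay.

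\textbf{How to close it.} The paper uses a second data-processing inequality with the product state $\sigma=\rho_W\otimes\rho_{YZ}$. Monotonicity under $\Tr_Z$ gives $0\le D(\sigma\|\rho)-D(\sigma_{WY}\|\rho_{WY})=-\Tr\sigma\mathcal J$. Hence $c\le\Tr\sigma\mathcal D^\circ\le\norm{\mathcal D^\circ}$.

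If you prefer to stay with Lieb's triple-matrix inequality, exponentiate the identity solved for $\log\rho_{WY}$ instead of the one for $\log\rho_{WYZ}$. Then $\rho_{WY}\otimes\I_Z=\exp(\log\rho_{WYZ}-\log\rho_{YZ}+\log\rho_Y+\mathcal D^\circ-c\I)$ has trace $d_Z$. Lieb's inequality with $A=\rho_{WYZ}$, $B=\rho_{YZ}$, $C=\rho_Y\otimes\I_Z$ gives $\Tr\exp(\log\rho_{WYZ}-\log\rho_{YZ}+\log\rho_Y)\le\Tr C=d_Z$. Together these give $d_Z\le e^{-c+\norm{\mathcal D^\circ}}d_Z$, i.e.\ $c\le\norm{\mathcal D^\circ}$.

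Either route, combined with your lower bound, $\norm{\mathcal D^\circ}\le\sum_S\norm{d_S}w_{\lambda,\mu_*}(S)$, and the $W\leftrightarrow Z$ symmetry, yields \eqref{an:eq:cmi-main}.
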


\begin{proof}
Use the common support labels in Eq.~\eqref{an:eq:heff} and anchor
each term at the first site of the indicated region in a fixed ordering.
Terms meeting both $W$ and $Z$ have diameter at least $r$; after
reducing the spatial weight from $\mu_0$ to $\mu_*$, their total is at
most $\beta|Z|(J_H+C_\beta)e^{-2\kappa r}$. For terms meeting $Z$
but not $W$, microscopic contributions cancel and (E2) bounds the
remaining total by $\beta g_1^\beta|Z|e^{-\mu_2r}$. For terms meeting
$W$ but not $Z$, (E2) and Eq.~\eqref{an:eq:shell} give
$\beta g_1^\beta\Csh{\mu_2-\kappa}|Z|e^{-\kappa r}$.

For supports contained in $Y$, pairing the four coefficients by adding
either outer region bounds the $w_{\lambda,\mu_0}$-weighted sum
anchored at $x\in Y$ by both
$2\beta g_1^\beta e^{-\mu_2d(x,Z)}$ and
$2\beta g_1^\beta e^{-\mu_2d(x,W)}$. Their geometric mean, together
with $d(x,Z)+d(x,W)\ge r$, is at most
$2\beta g_1^\beta e^{-\kappa r}
 e^{-(\mu_2/2-\kappa)d(x,Z)}$.
Summing with Eq.~\eqref{an:eq:shell} proves
Eq.~\eqref{an:eq:four-bound}, for example with
\begin{equation}\label{an:eq:crel}
\begin{split}
 C_{\mathrm{rel}}=\beta\bigl[&J_H+C_\beta+
 g_1^\beta\bigl(1+\Csh{\mu_2-\kappa}\\
 &\hspace{35mm}+2\Csh{\mu_2/2-\kappa}\bigr)\bigr].
\end{split}
\end{equation}

The logarithmic combination is $\mathcal J=-\mathcal D^\circ+c\I$,
where $c$ contains the partition-function scalars. Put
$\delta=C_{\mathrm{rel}}|Z|e^{-\kappa r}$.
Relative-entropy data processing~\cite{Lindblad1975}, with
$D(\sigma\|\omega)=\Tr\sigma(\log\sigma-\log\omega)$, gives
$\Tr\rho\mathcal J=I(W:Z\mid Y)_\rho\ge0$, hence $c\ge-\delta$.
For $\sigma=\rho_W\otimes\rho_{YZ}$, the same inequality gives
\[
 0\le D(\sigma\|\rho)-D(\sigma_{WY}\|\rho_{WY})
 =-\Tr\sigma\mathcal J,
\]
so $c\le\delta$ and $\norm{\mathcal J}\le2\delta$.
Interchanging $W$ and $Z$ yields Eq.~\eqref{an:eq:cmi-main}.
\end{proof}

\subsection{Imaginary-time interaction estimates}

The support-counting argument is that of Ref.~\cite[proof of Proposition~2.1]{BluhmCapelPerezHernandez2025}; we record its multilinear form with a separate interaction at each position.

\begin{lemma}[Sum over overlapping supports]\label{an:lem:connected}
Let $n\ge1$ and $S\ne\varnothing$, and put
$\mathcal S_j:=S\cup X_1\cup\cdots\cup X_j$ for $0\le j\le n$,
with $\mathcal S_0=S$. Let
$\Phi^{(1)},\ldots,\Phi^{(n)}$ be interactions with
$\norm{\Phi^{(j)}}_{\lambda,0}\le J_j$.  Then
\begin{equation}\label{an:eq:connected}
 \sum_{\substack{X_1\cap S\ne\varnothing\\
                  X_j\cap \mathcal S_{j-1}\ne\varnothing\ (2\le j\le n)}}
 \prod_{j=1}^{n}\norm{\Phi^{(j)}_{X_j}}
 \le e^{\lambda|S|}\frac{n!}{\lambda^n}\prod_{j=1}^{n}J_j.
\end{equation}
\end{lemma}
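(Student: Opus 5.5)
We argue by induction on $n$, summing over the last support $X_n$ first and bounding the remaining sum with the weight $e^{\lambda|\mathcal S_{n-1}|}$. It is convenient to prove a slightly stronger intermediate statement with an explicit exponential weight,
\begin{equation*}
 \Sigma_n(S,\nu):=\sum_{X_1,\dots,X_n}\prod_{j=1}^n\norm{\Phi^{(j)}_{X_j}}\,e^{-\nu|\mathcal S_n|}\le\ldots,
\end{equation*}
for a parameter $\nu$ that we tune. The standard route, following the proof of Proposition~2.1 in Ref.~\cite{BluhmCapelPerezHernandez2025}, runs as follows.

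\emph{Counting step.} For a fixed set $\mathcal S$, the constraint $X\cap\mathcal S\ne\varnothing$ together with a union bound over sites $x\in\mathcal S$ gives
\begin{equation*}
 \sum_{X\cap\mathcal S\ne\varnothing}\norm{\Phi_X}e^{\lambda|X|}\le|\mathcal S|\,\norm{\Phi}_{\lambda,0}.
\end{equation*}
Since $|\mathcal S_j|\le|\mathcal S_{j-1}|+|X_j|$, the product of the $n$ factors $|\mathcal S_{j-1}|$ must be controlled. We bound $|\mathcal S_{j-1}|\le|S|+\sum_{i<j}|X_i|$.

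\emph{Absorbing the polynomial factors.} The cleanest way to absorb these factors is the elementary inequality
\begin{equation*}
 \prod_{j=1}^n a_j\le\frac{n!}{\lambda^n}e^{\lambda a_n}\quad\text{for }0\le a_1\le\cdots\le a_n,
\end{equation*}
applied with $a_j=|\mathcal S_{j-1}|$. A sharper version follows from iterated integration: $\int_0^{a}\!\int_0^{t_n}\!\cdots$ with $n$ nested integrals of $e^{\lambda t}$ yields $e^{\lambda a}/\lambda^n$. The resulting bound carries the factor $e^{\lambda(|S|+\sum_i|X_i|)}$, and each $e^{\lambda|X_i|}$ is absorbed by the weight in $\norm{\Phi^{(i)}}_{\lambda,0}$. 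This would, however, leave a doubled exponent. We therefore avoid the crude bound and instead prove the claim by induction with the weight built in, as follows.

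\emph{Inductive step.} Define
\begin{equation*}
 F_n(S):=\sum_{X_1,\dots,X_n}\prod_{j=1}^n\norm{\Phi^{(j)}_{X_j}}.
\end{equation*}
Summing over $X_1$ first and then applying the induction hypothesis to the remaining $n-1$ supports, with initial set $S\cup X_1$ and the interactions $\Phi^{(2)},\dots,\Phi^{(n)}$, gives
\begin{equation*}
 F_n(S)\le\sum_{X_1\cap S\ne\varnothing}\norm{\Phi^{(1)}_{X_1}}e^{\lambda|S\cup X_1|}\frac{(n-1)!}{\lambda^{n-1}}\prod_{j\ge2}J_j.
\end{equation*}
This produces a factor $e^{\lambda|S|}|S|J_1$ instead of the target $e^{\lambda|S|}nJ_1/\lambda$, so a naive induction fails. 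The fix, which we expect to be the main obstacle, is to strengthen the hypothesis to
\begin{equation*}
 F_n(S)\le\frac{(n-1+\lambda|S|)\cdots(\lambda|S|)}{\lambda^n}\cdots,
\end{equation*}
that is, to bound $\prod_j|\mathcal S_{j-1}|$ through rising factorials. Equivalently, we use the generating-function identity
\begin{equation*}
 \sum_n\frac{s^n}{n!}\prod_{j=0}^{n-1}(a+j)=(1-s)^{-a}.
\end{equation*}
Bounding $(1-s)^{-a}$ by $e^{\lambda a}$ at the appropriate $s$ then yields $e^{\lambda|S|}n!/\lambda^n$.

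Concretely, the final step takes $a=|S|$ and uses the inequality
\begin{equation*}
 \prod_{j=0}^{n-1}(a+j)\le n!\,\frac{e^{\lambda a}}{\lambda^n},
\end{equation*}
which holds because
\begin{equation*}
 \prod_{j=0}^{n-1}\frac{a+j}{j+1}=\binom{a+n-1}{n}\le\sum_k\binom{a+k-1}{k}x^kx^{-n}=(1-x)^{-a}x^{-n}.
\end{equation*}
Choosing $x$ with $(1-x)^{-1}\le e^{\lambda}$ and $x^{-1}\le\lambda^{-1}$ requires care; $x=1-e^{-\lambda}$ satisfies the first condition, and $x^{-1}\le\lambda^{-1}$ fails for large $\lambda$. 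If this choice does not give the stated constant, we would fall back on the tighter bookkeeping of Ref.~\cite{BluhmCapelPerezHernandez2025}. There, the weight $e^{\lambda|X_j|}$ is used only partially: the increments of $|\mathcal S_j|$ are traded against $e^{\lambda|X_j|}$ via $|\mathcal S_{j-1}|\le\lambda^{-1}e^{\lambda(|\mathcal S_{j-1}|-|\mathcal S_{j-2}|)}\cdots$. Telescoping these factors yields exactly $e^{\lambda|S|}n!/\lambda^n$.
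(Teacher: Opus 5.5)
There is a genuine gap. The argument you end up committing to does not work, and the argument that does work is the one you discarded.

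The rising-factorial route implicitly needs $|\mathcal S_{j-1}|\le|S|+j-1$. This is false, because the supports $X_i$ have unbounded size; only the weight $e^{\lambda|X_i|}$ controls them. Its closing inequality $\prod_{j=0}^{n-1}(a+j)\le n!\,e^{\lambda a}/\lambda^n$ also fails already at $a=1$. There it reads $1\le e^{\lambda}/\lambda^n$, which is false whenever $\lambda^n>e^{\lambda}$. Your fallback $|\mathcal S_{j-1}|\le\lambda^{-1}e^{\lambda(|\mathcal S_{j-1}|-|\mathcal S_{j-2}|)}$ is false as well. For instance, take $X_{j-1}\subseteq\mathcal S_{j-2}$, so the increment vanishes, together with $|\mathcal S_{j-1}|>1/\lambda$. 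Finally, the strengthened quantity $\Sigma_n(S,\nu)$ is never given a bound, so it cannot close the gap either.

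The missing step is already in your ``absorbing'' paragraph. The doubled exponent you feared does not occur if you apply the inequality pointwise, in the dividing direction, before summing anything. For each admissible sequence, $|\mathcal S_n|\le|S|+\sum_j|X_j|$ and $e^{\lambda|\mathcal S_n|}\ge(\lambda|\mathcal S_n|)^n/n!\ge\lambda^n\prod_j|\mathcal S_{j-1}|/n!$. Together these give
\[
 \prod_{j=1}^n\|\Phi^{(j)}_{X_j}\|
 \le e^{\lambda|S|}\frac{n!}{\lambda^n}
 \prod_{j=1}^n\frac{\|\Phi^{(j)}_{X_j}\|e^{\lambda|X_j|}}{|\mathcal S_{j-1}|},
\]
where the denominators are nonzero because $S\ne\varnothing$. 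Now sum over $X_n$ first, then $X_{n-1}$, and so on. The denominator $|\mathcal S_{j-1}|$ depends only on $X_1,\ldots,X_{j-1}$. Your counting step $\sum_{X_j\cap\mathcal S_{j-1}\ne\varnothing}\|\Phi^{(j)}_{X_j}\|e^{\lambda|X_j|}\le|\mathcal S_{j-1}|J_j$ therefore cancels it exactly.

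Each weight $e^{\lambda|X_j|}$ is created once, by splitting $e^{\lambda|\mathcal S_n|}$, and consumed once, by the counting step. This yields $e^{\lambda|S|}n!\lambda^{-n}\prod_jJ_j$. This is exactly the paper's proof. Your variant, with $|\mathcal S_{n-1}|$ in place of $|\mathcal S_n|$ in the exponential, works equally well.
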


\begin{proof}
For every allowed support sequence,
\begin{equation}\label{an:eq:connected-proof}
\begin{aligned}
 \prod_j\norm{\Phi^{(j)}_{X_j}}
 &\le e^{\lambda|S|}e^{-\lambda|\mathcal S_n|}
       \prod_j\bigl(\norm{\Phi^{(j)}_{X_j}}e^{\lambda|X_j|}\bigr),\\
 e^{-\lambda|\mathcal S_n|}
 &\le\frac{n!}{\lambda^n}\prod_{j=1}^{n}\frac1{|\mathcal S_{j-1}|}.
\end{aligned}
\end{equation}
The second inequality uses $e^x\ge x^n/n!$ and
$|\mathcal S_{j-1}|\le|\mathcal S_n|$. Since
$\sum_{X:X\cap F\ne\varnothing}\norm{\Phi_X^{(j)}}e^{\lambda|X|}
\le|F|J_j$, summing backwards cancels the denominators and proves
Eq.~\eqref{an:eq:connected}.
\end{proof}

The single-Hamiltonian estimate is Ref.~\cite[Proposition~2.1]{BluhmCapelPerezHernandez2025} applied termwise. The second estimate allows effective Hamiltonians with different interaction coefficients; the proof below establishes that extension.

\begin{lemma}[Imaginary-time evolution under perturbations]\label{an:lem:itlr}
Fix nonempty regions $W,Z\subseteq\Lambda$. For $i=1,2$, let
$K_i=\sum_{S\ne\varnothing}\Psi_S^{(i)}+c_i\I$ be Hermitian
operators on the same finite register space, with interactions embedded
by identities as necessary and with
$\norm{\Psi^{(i)}}_{\lambda,\mu_*}\le\Delta_0<\lambda$.
For a specified decomposition $O=\sum_{\varnothing\ne S\subseteq\Lambda} O_S$
on that space, define
\begin{equation}\label{an:eq:masses}
\begin{aligned}
 M_O&:=\sum_S\norm{O_S}w_{\lambda,\mu_*}(S),\\
 M_{O,Z}&:=\sum_S\norm{O_S}w_{\lambda,\mu_*}(S)e^{\kappa d(S,Z)}.
\end{aligned}
\end{equation}
For $0\le t\le1$,
\begin{equation}\label{an:eq:single-it}
 \norm{\tau_{K_i}^t(O)}\le\frac{\lambda M_O}{\lambda-t\Delta_0}.
\end{equation}
If $\chi_S:=\Psi_S^{(1)}-\Psi_S^{(2)}$ satisfies
\begin{equation}\label{an:eq:chi-ass}
 \sup_x\sum_{S\ni x}\norm{\chi_S}w_{\lambda,\mu_*}(S)
 e^{\kappa d(S,W)}\le J_\chi,
\end{equation}
then
\begin{equation}\label{an:eq:two-it}
 \norm{\tau_{K_1}^t(O)-\tau_{K_2}^t(O)}
 \le
 \frac{t\lambda}{(\lambda-t\Delta_0)^2}
 J_\chi M_{O,Z}e^{-\kappa d(W,Z)}.
\end{equation}
\end{lemma}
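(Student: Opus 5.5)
Both estimates will come from the nested-commutator (Lie--Schwinger) series
\[
 \tau_K^t(O)=\sum_{n\ge0}\frac{(-t/2)^n}{n!}\ad_K^n(O),
\]
bounded term by term with Lemma~\ref{an:lem:connected}. Scalar parts $c_i\I$ drop out of every commutator, so only the interactions $\Psi^{(i)}$ matter.

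\emph{Single-Hamiltonian bound.} Expand $O=\sum_S O_S$ and $\ad_{K_i}^n(O_S)$ as a sum over support sequences $X_1,\dots,X_n$. A sequence contributes only if each $X_j$ meets $\mathcal S_{j-1}=S\cup X_1\cup\cdots\cup X_{j-1}$, since otherwise the commutator vanishes. Each commutator costs a factor $2$, and $(t/2)^n\cdot 2^n=t^n$. Lemma~\ref{an:lem:connected} with all $\Phi^{(j)}=\Psi^{(i)}$ and $J_j=\Delta_0$ gives
\[
 \norm{\ad^n_{K_i}(O_S)}\le 2^n\norm{O_S}\,e^{\lambda|S|}\,\frac{n!\,\Delta_0^n}{\lambda^n},
\]
using $\norm{\Psi}_{\lambda,0}\le\norm{\Psi}_{\lambda,\mu_*}$. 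Summing over $n$ gives the geometric series $\sum_n(t\Delta_0/\lambda)^n=\lambda/(\lambda-t\Delta_0)$, which converges because $t\le1$ and $\Delta_0<\lambda$. Finally $e^{\lambda|S|}\le w_{\lambda,\mu_*}(S)$, and summing over $S$ gives \eqref{an:eq:single-it}. If one prefers to avoid series convergence issues in finite dimension, the same bound follows from the ODE/Duhamel form; the series argument already suffices.

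\emph{Perturbation bound.} I would use Duhamel's identity
\[
 \tau_{K_1}^t(O)-\tau_{K_2}^t(O)=-\tfrac12\int_0^t\tau_{K_1}^{s}\bigl([K_1-K_2,\tau_{K_2}^{t-s}(O)]\bigr)\,ds .
\]
Equivalently, I can expand both series and telescope $\ad_{K_1}^n-\ad_{K_2}^n=\sum_k\ad_{K_1}^{k}\ad_{\chi}\ad_{K_2}^{n-1-k}$. The telescoped form is cleaner because it lets me apply Lemma~\ref{an:lem:connected} directly with a position-dependent interaction: $\Phi^{(j)}=\Psi^{(2)}$ for $j<n-k$, $\Phi^{(n-k)}=\chi$, and $\Phi^{(j)}=\Psi^{(1)}$ afterwards.

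The difficulty is extracting the factor $e^{-\kappa d(W,Z)}$. The $\chi$-term $X_{n-k}$ must meet the connected cluster $\mathcal S$ grown from $S$. I would use the triangle inequality
\[
 d(W,Z)\le d(X_{n-k},W)+\diam(\mathcal S_{n-k})+d(S,Z),
\]
and then split the exponential:
\begin{itemize}
 \item $e^{\kappa d(X_{n-k},W)}$ is absorbed into the $\chi$ weight via \eqref{an:eq:chi-ass};
 \item $e^{\kappa d(S,Z)}$ is absorbed into $M_{O,Z}$;
 \item $e^{\kappa\diam(\mathcal S)}$ is bounded by $\prod e^{\kappa\diam X_j}\,e^{\kappa\diam S}$, because the cluster is connected through overlapping supports.
\end{itemize}
The diameter factors are paid by the $\mu_*$ parts of the weights, since $\mu_*>\kappa$. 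Concretely, I would rerun Lemma~\ref{an:lem:connected} with weights $e^{\lambda|X|+\kappa\diam X}$, which are dominated by $w_{\lambda,\mu_*}$.

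\emph{Counting.} Inserting one $\chi$ among $n$ positions gives $n$ choices of $k$. The resulting sum $\sum_n n\,(t\Delta_0/\lambda)^{n-1}$, together with one $J_\chi/\lambda$ factor and the prefactor $t$, yields $t\lambda J_\chi/(\lambda-t\Delta_0)^2$, which is \eqref{an:eq:two-it}.

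The main obstacle is making the diameter bookkeeping compatible with Lemma~\ref{an:lem:connected}, which is stated with $\mu=0$. I expect to need the variant in which each factor carries $e^{\kappa\diam}$. Its proof is unchanged, since the sum $\sum_{X\cap F\ne\varnothing}$ only uses per-site norms. The price is that $J_j$ is taken in the $(\lambda,\mu_*)$ norm, which is exactly what the hypotheses provide.
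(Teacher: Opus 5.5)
Your proposal is correct and follows essentially the paper's proof. For the single bound, the paper cites the termwise estimate of Bluhm--Capel--P\'erez-Hernández, which is the same support-counting argument as your direct use of Lemma~\ref{an:lem:connected}. For the difference, you use the same steps: the telescoped $\ad_{K_1}^n-\ad_{K_2}^n$ with one distinguished $\chi$ insertion, and the path/diameter triangle inequality with its factors absorbed into $M_{O,Z}$, the $\chi$-hypothesis, and the $e^{\mu_*\diam}$ weights. The final summation $\sum_n n t^n\Delta_0^{n-1}/\lambda^n=t\lambda/(\lambda-t\Delta_0)^2$ is also the same. The variant of Lemma~\ref{an:lem:connected} you anticipate is not needed. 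Apply the lemma as stated to the rescaled interactions $\Phi_X e^{\kappa\diam X}$ and $\chi_X e^{\kappa\diam X+\kappa d(X,W)}$; their $(\lambda,0)$ norms are bounded by $\Delta_0$ and $J_\chi$ because $\mu_*>\kappa$.
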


\begin{proof}
Equation~\eqref{an:eq:single-it} follows termwise from
Ref.~\cite[Proposition~2.1]{BluhmCapelPerezHernandez2025}.

For the difference use the exact superoperator identity
\begin{equation}\label{an:eq:telescoping-ad}
 \ad_{K_1}^{n}-\ad_{K_2}^{n}
 =\sum_{j=1}^{n}\ad_{K_1}^{n-j}
       (\ad_{K_1}-\ad_{K_2})\ad_{K_2}^{j-1}.
\end{equation}
Each summand contains one distinguished term $\chi_{X_j}$.
For a nonzero sequence starting at $S$, the overlaps before that insertion imply
\begin{equation}\label{an:eq:bridge-distance}
 d(W,Z)\le d(S,Z)+\diam(S)+d(X_j,W)+\sum_{i=1}^{j}\diam(X_i).
\end{equation}
The supports $S,X_1,\ldots,X_j$ form a connected intersection graph,
because each new support meets their previous union. A path in this
intersection graph joins $S$ to $X_j$. Choose points in consecutive
intersections; the distance crossed inside a support is at most its
diameter. Summing along a simple path gives the displayed bound, which
is no larger than the sum over all $X_1,\ldots,X_j$. No individual
support needs to be connected in the original lattice.
Multiply a summand's norm by $e^{\kappa d(W,Z)}$ and absorb the right-hand side of
\eqref{an:eq:bridge-distance} into \eqref{an:eq:masses}, \eqref{an:eq:chi-ass}, and the
$e^{\mu_*\diam(X_i)}$ weights.  This is permitted because $\mu_*>\kappa$.
For each fixed distinguished position, Lemma~\ref{an:lem:connected} then gives the bound
\begin{equation}\label{an:eq:bridge-order}
 2^n\frac{n!}{\lambda^n}
 J_\chi M_{O,Z}\Delta_0^{n-1}e^{-\kappa d(W,Z)}.
\end{equation}
There are $n$ positions.  After multiplying by $(t/2)^n/n!$, their sum is
\begin{equation}\label{an:eq:bridge-sum}
 \sum_{n\ge1}\frac{n t^n\Delta_0^{n-1}}{\lambda^n}
 =\frac{t\lambda}{(\lambda-t\Delta_0)^2},
\end{equation}
which proves \eqref{an:eq:two-it}.  The displayed convergent majorants justify all sums.
\end{proof}

Equation~\eqref{an:eq:two-it} controls the effect of a perturbation near
$W$ on an interaction localized near $Z$ over the time interval allowed by Assumption~\ref{an:ass:strip}.

\subsection{Petz factors and normalization}

We now apply the preceding interaction estimates to the generator. This is the effective-Hamiltonian analogue of the expansional setup in Ref.~\cite[Proposition~2.2]{BluhmCapelPerezHernandez2025}.

\begin{lemma}[Interaction bounds for nested regions]\label{an:lem:relative}
Under Assumption~\ref{an:ass:effective}, let $R,Z\subseteq\Omega$
be disjoint with $Z\ne\varnothing$. The difference
$U^\Omega_{Z|R}=\sum_Su_S$ in \eqref{an:eq:k-circ} satisfies
\begin{equation}\label{an:eq:root-u}
\begin{gathered}
 \sum_S\norm{u_S}w_{\lambda,\mu_*}(S)e^{\kappa d(S,Z)}\le u_0|Z|,\\
 u_0:=\beta\left(J_H+C_\beta+g_1^\beta\Csh{\mu_2-\kappa}\right).
\end{gathered}
\end{equation}
For disjoint $W,Y,Z\subseteq\Omega$ with $W,Z\ne\varnothing$, let
\begin{equation}\label{an:eq:comparison-hamiltonians}
 K_1^\circ:=K_{WY}^\circ\otimes\I_Z,
 \qquad K_2^\circ:=\I_W\otimes K_Y^\circ\otimes\I_Z.
\end{equation}
Their interactions have weighted norm at most $\beta(J_H+C_\beta)$.
The difference of these interactions satisfies \eqref{an:eq:chi-ass} with
\begin{equation}\label{an:eq:jchi}
 J_\chi:=\beta(J_H+C_\beta+g_1^\beta).
\end{equation}
\end{lemma}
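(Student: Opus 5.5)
The plan is to work directly with the common support labels of Eq.~\eqref{an:eq:heff}. Write
\[
K_{RZ}^\circ=\beta H_{RZ}+\beta\sum_{S\subseteq RZ}h_S^{\beta,\Omega;RZ},\qquad
K_R^\circ=\beta H_R+\beta\sum_{S\subseteq R}h_S^{\beta,\Omega;R}.
\]
Then $U^\Omega_{Z|R}$ decomposes label by label. I would split the labels into two classes.

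\emph{Labels meeting $Z$.} Here $u_S=\beta h_S+\beta h_S^{\beta,\Omega;RZ}$ with $d(S,Z)=0$. Anchoring each such term at a site of $S\cap Z$ and summing over $x\in Z$ gives at most $\beta|Z|(J_H+C_\beta)$. This uses \eqref{an:eq:jh} and \eqref{an:eq:e1} with $e^{-\mu_1 d}\le1$. Only the $\mu_0$ weights are needed, since $\mu_*<\mu_0$.

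\emph{Labels $S\subseteq R$.} The microscopic terms cancel, so $u_S=\beta\bigl(h_S^{\beta,\Omega;RZ}-h_S^{\beta,\Omega;R}\bigr)$. Anchoring at $x\in S$ and applying (E2) with $A=R$ and $B=Z$ gives a weighted sum at $x$ bounded by $\beta g_1^\beta e^{-\mu_2 d(x,Z)}$. The extra factor $e^{\kappa d(S,Z)}$ is at most $e^{\kappa d(x,Z)}$ because $x\in S$. Summing over $x$ with Eq.~\eqref{an:eq:shell} at rate $\mu_2-\kappa$ gives $\beta g_1^\beta\Csh{\mu_2-\kappa}|Z|$. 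Adding the two classes yields $u_0|Z|$, which is Eq.~\eqref{an:eq:root-u}.

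\emph{Weighted norm of $K_1^\circ$ and $K_2^\circ$.} Each has the decomposition $\beta h_S+\beta h_S^{\beta,\Omega;A}$ with $A=WY$ or $A=Y$. By \eqref{an:eq:jh} and \eqref{an:eq:e1}, the per-site weighted sum is at most $\beta(J_H+C_\beta)$ with weight $w_{\lambda,\mu_0}$, and hence also with $w_{\lambda,\mu_*}$.

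\emph{Verifying \eqref{an:eq:chi-ass} for the difference $\chi$.} Here I would split labels again.

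Labels $S\subseteq Y$: the microscopic parts cancel. (E2) with $A=Y$ and $B=W$ gives a sum at $x$ of at most $\beta g_1^\beta e^{-\mu_2 d(x,W)}$. The factor $e^{\kappa d(S,W)}\le e^{\kappa d(x,W)}$ is absorbed because $\mu_2>\kappa$, so this contributes at most $\beta g_1^\beta$.

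Labels meeting $W$ (and contained in $WY$): these are present only in $K_1^\circ$, and $d(S,W)=0$. Their contribution is at most $\beta(J_H+C_\beta)$ by \eqref{an:eq:jh} and \eqref{an:eq:e1}.

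The supremum over $x$ is taken over both classes. Sites $x\in W$ see only the second class, and sites in $Y$ see both. The total is therefore at most $\beta(J_H+C_\beta+g_1^\beta)$, which is \eqref{an:eq:jchi}.

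The main obstacle is the bookkeeping of anchoring. A sum anchored at one site must be converted into a per-region total, or into a supremum over sites, without double counting. The exponential factor $e^{\kappa d(S,\cdot)}$ must also be absorbed by choosing the anchor inside $S$. The weight reduction $\mu_0\to\mu_*$ and the condition $\kappa<\mu_2$ from Eq.~\eqref{an:eq:parameters} provide the slack for this.
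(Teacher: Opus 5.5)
Your proposal is correct and takes essentially the same route as the paper's proof. You split the labels into those meeting $Z$ (resp.\ $W$), bounded by \eqref{an:eq:jh} and (E1) with $d(S,\cdot)=0$, and those contained in $R$ (resp.\ $Y$), where the microscopic terms cancel and (E2) together with $d(S,Z)\le d(x,Z)$ for $x\in S$ and the shell sum at rate $\mu_2-\kappa$ gives the rest; the anchoring concern you raise is harmless, since summing over every site of each support only enlarges the upper bound.
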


\begin{proof}
In $U^\Omega_{Z|R}$, microscopic terms meeting $Z$ have weighted total sum at most
$\beta|Z|J_H$; effective correction terms meeting $Z$ contribute at most
$\beta|Z|C_\beta$.  Both have $d(S,Z)=0$.
For $S\subseteq R$, use the difference in \eqref{an:eq:e2} for adding $Z$ and anchor
at a site of $S$ closest to $Z$.  Multiplication by $e^{\kappa d(S,Z)}$ leaves
$e^{-(\mu_2-\kappa)d(x,Z)}$, whose sum is bounded by \eqref{an:eq:shell}.
This proves \eqref{an:eq:root-u}.

For the two Hamiltonians, supports meeting $W$ occur only in $K_1^\circ$ and have
$d(S,W)=0$, so their local weighted sum is at most $\beta(J_H+C_\beta)$.
For $S\subseteq Y$, microscopic terms cancel.  Apply \eqref{an:eq:e2} to adding $W$,
and use $d(S,W)\le d(x,W)$ at any $x\in S$ to insert $e^{\kappa d(S,W)}$.
The resulting local weighted sum is at most $\beta g_1^\beta$.
This proves \eqref{an:eq:jchi} and the assertion about the individual weighted interaction norms.
\end{proof}

Fix a constant $b_*$ and for the remainder of this subsection require $1\le|Z|\le b_*$.
Define fixed constants
\begin{align}
 g_0&:=\frac{\lambda u_0b_*}{2(\lambda-\Delta_0)},\label{an:eq:g0}\\
 h_0&:=\frac{b_*}{2}\left[
 \frac{\lambda C_{\mathrm{rel}}}{\lambda-\Delta_0}
 +\frac{\lambda J_\chi u_0}{(\lambda-\Delta_0)^2}\right].\label{an:eq:h0}
\end{align}

Products of imaginary-time evolutions, called expansionals, are studied in Refs.~\cite{PerezGarciaPerezHernandez2023,BluhmCapelPerezHernandez2025}. The following form compares the two effective Hamiltonians used in the Petz factor.

\begin{lemma}[Perturbation bounds for exponential products]\label{an:lem:relative-evol}
Under Assumptions~\ref{an:ass:effective} and~\ref{an:ass:strip},
let $W,Y,Z\subseteq\Omega$ be disjoint, with $W\ne\varnothing$
and $1\le|Z|\le b_*$. Use $U_1:=U^\Omega_{Z|WY},U_2:=\I_W\otimes U^\Omega_{Z|Y}$ from
\eqref{an:eq:four} and the Hamiltonians \eqref{an:eq:comparison-hamiltonians}.  Set
\begin{equation}\label{an:eq:generator}
 G_i(t):=\tfrac12\tau_{K_i^\circ}^{t}(U_i),
 \qquad
 F_i(t):=e^{-t(K_i^\circ+U_i)/2}e^{tK_i^\circ/2}.
\end{equation}
Then, with $r=d(W,Z)$,
\begin{equation}\label{an:eq:ode}
 F_i(0)=\I,\qquad F_i'(t)=-F_i(t)G_i(t).
\end{equation}
\begin{equation}\label{an:eq:g-bounds}
\begin{aligned}
 \sup_{0\le t\le1}\norm{G_i(t)}&\le g_0,\\
 \sup_{0\le t\le1}\norm{G_1(t)-G_2(t)}&\le h_0e^{-\kappa r}.
\end{aligned}
\end{equation}
\begin{equation}\label{an:eq:f-bounds}
\begin{aligned}
 \norm{F_i(t)},\ \norm{F_i(t)^{-1}}&\le e^{g_0t},\\
 \norm{F_1(1)-F_2(1)}&\le e^{g_0}h_0e^{-\kappa r}.
\end{aligned}
\end{equation}
\end{lemma}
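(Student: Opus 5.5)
The plan is to derive the differential equation by direct differentiation, bound the generators with Lemmas~\ref{an:lem:itlr} and~\ref{an:lem:relative}, and then pass to the evolutions $F_i$ by a Grönwall argument and a Duhamel-type comparison.

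\emph{The ODE.} Since $K_i^\circ+U_i$ commutes with its own exponential,
\[
 F_i'(t)=-\tfrac12 e^{-t(K_i^\circ+U_i)/2}(K_i^\circ+U_i)e^{tK_i^\circ/2}
 +\tfrac12 e^{-t(K_i^\circ+U_i)/2}K_i^\circ e^{tK_i^\circ/2}.
\]
This equals $-\tfrac12 e^{-t(K_i^\circ+U_i)/2}U_ie^{tK_i^\circ/2}$. Inserting $e^{tK_i^\circ/2}e^{-tK_i^\circ/2}$ gives $-F_i(t)\,\tfrac12\tau^t_{K_i^\circ}(U_i)=-F_iG_i$, and $F_i(0)=\I$ is immediate. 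Note that $K_1^\circ+U_1=K_{WYZ}^\circ$ and $K_2^\circ+U_2=K_{YZ}^\circ$, so these are the correct Petz ingredients.

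\emph{Generator bounds.} The interactions of $K_i^\circ$ have weighted norm at most $\beta(J_H+C_\beta)\le\Delta_0$ by Lemma~\ref{an:lem:relative} and Assumption~\ref{an:ass:strip}. Since $\mu_*\le\mu_0$, this also holds in the $\mu_*$-weighted norm. The $\mu_*$-mass of $U_i$ is at most $u_0|Z|\le u_0b_*$ by \eqref{an:eq:root-u}, applied with $R=WY$ or $R=Y$. Then \eqref{an:eq:single-it} gives $\norm{G_i}\le\tfrac12\lambda u_0b_*/(\lambda-\Delta_0)=g_0$ for $t\le1$.

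For the difference, write
\[
 G_1-G_2=\tfrac12\tau^t_{K_1^\circ}(U_1-U_2)+\tfrac12\bigl[\tau^t_{K_1^\circ}(U_2)-\tau^t_{K_2^\circ}(U_2)\bigr].
\]
The first term uses $U_1-U_2=\mathcal D^\circ$. By Lemma~\ref{an:lem:four-cmi} its mass is at most $C_{\mathrm{rel}}b_*e^{-\kappa r}$, so \eqref{an:eq:single-it} bounds it by $\tfrac{b_*}{2}\lambda C_{\mathrm{rel}}e^{-\kappa r}/(\lambda-\Delta_0)$. The second term uses \eqref{an:eq:two-it} with $O=U_2$, $J_\chi$ from \eqref{an:eq:jchi}, and $M_{O,Z}\le u_0b_*$ from \eqref{an:eq:root-u}. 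This gives $\tfrac{b_*}{2}\lambda J_\chi u_0e^{-\kappa r}/(\lambda-\Delta_0)^2$. The two contributions sum to exactly $h_0e^{-\kappa r}$.

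One bookkeeping point needs checking. $U_2$ is supported in $YZ$, and its decomposition in \eqref{an:eq:root-u} is weighted by $e^{\kappa d(S,Z)}$, which matches $M_{O,Z}$. When $Y$ is empty or $Z$ is far from $W$, the cases are covered uniformly. This step, matching the hypotheses of Lemma~\ref{an:lem:itlr} with the correct anchor regions, is where I expect the main care to be needed. In particular, one must confirm that the difference of the interactions of $K_1^\circ$ and $K_2^\circ$ is measured with $e^{\kappa d(S,W)}$, as Lemma~\ref{an:lem:relative} states.

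\emph{Bounds on $F_i$.} From $F'=-FG$ we get $\norm{F(t)}\le1+\int_0^t\norm{F}g_0$, so Grönwall gives $\norm{F_i(t)}\le e^{g_0t}$. The inverse satisfies $(F^{-1})'=GF^{-1}$, and the same argument applies. For the difference, note
\[
 \tfrac{d}{ds}\bigl[F_1(s)F_2(s)^{-1}\bigr]=-F_1(G_1-G_2)F_2^{-1}.
\]
Integrating from $0$ to $1$ and multiplying on the right by $F_2(1)$ gives
\[
 F_1(1)-F_2(1)=-\int_0^1F_1(s)\bigl(G_1-G_2\bigr)(s)F_2(s)^{-1}F_2(1)\,ds.
\]
Here $F_2(s)^{-1}F_2(1)$ is the propagator from $s$ to $1$, with norm at most $e^{g_0(1-s)}$. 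Together with $\norm{F_1(s)}\le e^{g_0s}$, the integrand is bounded by $e^{g_0}h_0e^{-\kappa r}$, which proves \eqref{an:eq:f-bounds}.
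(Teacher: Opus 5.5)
Your proposal is correct and follows the paper's proof essentially step for step. It uses the same direct differentiation, the same split $2(G_1-G_2)=\tau^t_{K_1^\circ}(U_1-U_2)+(\tau^t_{K_1^\circ}-\tau^t_{K_2^\circ})(U_2)$ handled by Lemmas~\ref{an:lem:four-cmi}, \ref{an:lem:itlr}, and~\ref{an:lem:relative}, Gr\"onwall bounds for $F_i$ and $F_i^{-1}$, and a variation-of-constants comparison. The only cosmetic difference is that you differentiate $F_1F_2^{-1}$, whereas the paper uses $D=F_1-F_2$ with $D'=-DG_1-F_2(G_1-G_2)$; this merely swaps which propagator carries the factor $e^{g_0(1-s)}$ and gives the same bound $e^{g_0}h_0e^{-\kappa r}$.
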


\begin{proof}
Differentiation gives \eqref{an:eq:ode}.  The individual generator bounds follow from
\eqref{an:eq:single-it} and \eqref{an:eq:root-u}.
For their difference write
\begin{equation}\label{an:eq:g-split}
 2(G_1-G_2)
 =\tau_{K_1^\circ}^{t}(U_1-U_2)
   +(\tau_{K_1^\circ}^{t}-\tau_{K_2^\circ}^{t})(U_2).
\end{equation}
For the first term, apply the single-Hamiltonian estimate
\eqref{an:eq:single-it} of Lemma~\ref{an:lem:itlr} and
Lemma~\ref{an:lem:four-cmi}. For the second, use the perturbation
estimate \eqref{an:eq:two-it} of the same lemma together with
Eqs.~\eqref{an:eq:root-u} and~\eqref{an:eq:jchi}.
This gives \eqref{an:eq:g-bounds}.

The integral equation for $F_i$ and the equation
$(F_i^{-1})'=G_iF_i^{-1}$ give the first bounds in \eqref{an:eq:f-bounds}.
If $D(t):=F_1(t)-F_2(t)$, then
$D'=-DG_1-F_2(G_1-G_2)$ and $D(0)=0$.
Variation of constants bounds $\norm{D(1)}$ by
\begin{equation}\label{an:eq:duhamel-bound}
 \int_0^1 e^{g_0(1-s)}e^{g_0s}h_0e^{-\kappa r}\,ds
 =e^{g_0}h_0e^{-\kappa r}.
\end{equation}
This proves all assertions.
\end{proof}

\begin{theorem}[Locality for a fixed Gibbs state]\label{an:thm:local-petz}
Under Assumptions~\ref{an:ass:effective} and \ref{an:ass:strip}, for disjoint $W,Y,Z\subseteq\Omega$
with $W\ne\varnothing$ and $1\le|Z|\le b_*$, set
\begin{equation}\label{an:eq:cq}
 C_Q:=h_0(e^{2g_0}+e^{6g_0}).
\end{equation}
Then
\begin{align}
 \norm{Q^\Omega_{Z|WY}-\I_W\otimes Q^\Omega_{Z|Y}}
 &\le d_Z^{-1/2}C_Qe^{-\kappa d(W,Z)},\label{an:eq:q-local}\\
 \norm{\PP^\Omega_{Z|WY}-\id_W\otimes\PP^\Omega_{Z|Y}}_\diamond
 &\le 2C_Qe^{-\kappa d(W,Z)},\label{an:eq:channel-local}\\
 \norm{(\id_W\otimes\PP^\Omega_{Z|Y})(\rho^\Omega_{WY})-\rho^\Omega_{WYZ}}_1
 &\le 2C_Qe^{-\kappa d(W,Z)}.\label{an:eq:fixed-recovery}
\end{align}
These constants are independent of $|W|,|Y|,|\Omega|$, and $N$.
\end{theorem}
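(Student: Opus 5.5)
The plan is to reduce everything to Lemma~\ref{an:lem:relative-evol} by writing each Petz factor as a scalar multiple of an exponential product. Put $R_1:=WY$ and $R_2:=Y$, with $K_i^\circ$, $U_i$, $F_i$ as in Eqs.~\eqref{an:eq:comparison-hamiltonians} and~\eqref{an:eq:generator}. By Eq.~\eqref{an:eq:heff}, $(\rho^\Omega_{R_iZ})^{1/2}=e^{-(K_i^\circ+U_i)/2}/(\mathcal Z^\Omega_{R_iZ})^{1/2}$ and $(\rho^\Omega_{R_i})^{-1/2}=e^{K_i^\circ/2}(\mathcal Z^\Omega_{R_i})^{1/2}$. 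Here the factor $\I_W$ is harmless for $i=2$, since $\I_W\otimes Q^\Omega_{Z|Y}$ is built from the embedded operators. Hence
\[
 Q_1:=Q^\Omega_{Z|WY}=c_1F_1(1),\qquad
 Q_2:=\I_W\otimes Q^\Omega_{Z|Y}=c_2F_2(1),
\]
with $c_i=(\mathcal Z^\Omega_{R_i}/\mathcal Z^\Omega_{R_iZ})^{1/2}>0$. Then $Q_1-Q_2=c_1(F_1-F_2)+(c_1-c_2)F_2$, and it suffices to control $c_i$ and $c_1-c_2$.

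To fix the scalars I will use trace preservation, Eq.~\eqref{an:eq:tp}, which gives $c_i^{-2}\I=\Tr_Z[F_i(1)^\dagger F_i(1)]$. Since $\norm{F_i^{-1}}\le e^{g_0}$ and $\norm{F_i}\le e^{g_0}$, the operator $F_i^\dagger F_i$ lies between $e^{-2g_0}\I$ and $e^{2g_0}\I$. Taking $\Tr_Z$ multiplies these bounds by $d_Z$, so $c_i^{-2}\in[d_Ze^{-2g_0},d_Ze^{2g_0}]$ and in particular $c_i\le d_Z^{-1/2}e^{g_0}$. For the difference, $\norm{F_1^\dagger F_1-F_2^\dagger F_2}\le 2e^{g_0}\norm{F_1-F_2}\le 2e^{2g_0}h_0e^{-\kappa r}$ by Eq.~\eqref{an:eq:f-bounds}, where $r=d(W,Z)$. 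Taking the partial trace gives $|c_1^{-2}-c_2^{-2}|\le 2d_Ze^{2g_0}h_0e^{-\kappa r}$. Writing $c_1-c_2=(c_2^{-2}-c_1^{-2})\,c_1c_2/(c_1^{-1}+c_2^{-1})$ and inserting the bounds on $c_i$ then yields $|c_1-c_2|\le d_Z^{-1/2}e^{5g_0}h_0e^{-\kappa r}$.

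Combining, $\norm{Q_1-Q_2}\le d_Z^{-1/2}e^{g_0}\cdot e^{g_0}h_0e^{-\kappa r}+d_Z^{-1/2}e^{5g_0}h_0e^{-\kappa r}\cdot e^{g_0}$, which is exactly $d_Z^{-1/2}C_Qe^{-\kappa r}$ and proves Eq.~\eqref{an:eq:q-local}. The bookkeeping of these exponentials of $g_0$ is the only delicate step; it must land on $e^{2g_0}+e^{6g_0}$.

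For the channel bound I pass to the Stinespring isometries of Eq.~\eqref{an:eq:stinespring}. Writing $\Delta:=Q_1-Q_2$, I expect
\[
 \norm{V_1-V_2}^2=\norm{\Tr_Z(\Delta^\dagger\Delta)}\le d_Z\norm{\Delta}^2,
\]
so $\norm{V_1-V_2}\le C_Qe^{-\kappa r}$. The standard estimate for channels with isometries $V_i$, namely $\norm{V_1(\cdot)V_1^\dagger-V_2(\cdot)V_2^\dagger}_\diamond\le 2\norm{V_1-V_2}$ followed by tracing out the environment, then gives Eq.~\eqref{an:eq:channel-local}. Finally, the exact extension property Eq.~\eqref{an:eq:exact-extension} gives $\PP^\Omega_{Z|WY}(\rho^\Omega_{WY})=\rho^\Omega_{WYZ}$. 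The trace-norm deviation in Eq.~\eqref{an:eq:fixed-recovery} is therefore bounded by the diamond norm of the channel difference applied to a normalized state. None of the constants involve $|W|$, $|Y|$, $|\Omega|$, or $N$, because $g_0$ and $h_0$ are fixed by Eqs.~\eqref{an:eq:g0}--\eqref{an:eq:h0}.
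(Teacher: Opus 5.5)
Your proof is correct and matches the paper's argument step for step: you write $Q_i=c_iF_i(1)$, fix $c_i$ by trace preservation, bound $|c_1^{-2}-c_2^{-2}|$ and then $|c_1-c_2|$, and pass to the isometries via $\|V_1-V_2\|\le\sqrt{d_Z}\,\|Q_1-Q_2\|$ before invoking exact extension. The only cosmetic difference is that you get $|c_1-c_2|\le d_Z^{-1/2}e^{5g_0}h_0e^{-\kappa r}$ from the algebraic identity for $c_1-c_2$, whereas the paper uses the derivative bound $\tfrac12 d_Z^{-3/2}e^{3g_0}$ for $x^{-1/2}$; both give the same constant $C_Q$.
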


\begin{proof}[Proof of Theorem~\ref{an:thm:local-petz}]
The unnormalized factors in Lemma~\ref{an:lem:relative-evol} are
$\widehat Q_1=F_1(1)$ and $\widehat Q_2=F_2(1)$, with identities on $W$ understood.
The normalized factors are $Q_i=c_i\widehat Q_i$ with
\begin{equation}\label{an:eq:ci}
 c_1=\left(\frac{\mathcal Z^\Omega_{WY}}{\mathcal Z^\Omega_{WYZ}}\right)^{1/2},
 \qquad
 c_2=\left(\frac{\mathcal Z^\Omega_Y}{\mathcal Z^\Omega_{YZ}}\right)^{1/2}.
\end{equation}
By trace preservation in \eqref{an:eq:tp},
\begin{equation}\label{an:eq:scalar-tp}
 \Tr_Z(\widehat Q_i^\dagger\widehat Q_i)=c_i^{-2}\I,
 \qquad
 d_Ze^{-2g_0}\le c_i^{-2}\le d_Ze^{2g_0}.
\end{equation}
Let $\epsilon_F:=\norm{\widehat Q_1-\widehat Q_2}$.
Since $\norm{\Tr_Z O}\le d_Z\norm{O}$,
\begin{equation}\label{an:eq:scalar-diff}
 |c_1^{-2}-c_2^{-2}|\le2d_Ze^{g_0}\epsilon_F.
\end{equation}
On the interval in \eqref{an:eq:scalar-tp}, the derivative of $x^{-1/2}$ has absolute
value at most $\tfrac12d_Z^{-3/2}e^{3g_0}$.
Thus $|c_1-c_2|\le d_Z^{-1/2}e^{4g_0}\epsilon_F$ and
\begin{equation}\label{an:eq:normalized-diff}
 \norm{Q_1-Q_2}
 \le d_Z^{-1/2}(e^{g_0}+e^{5g_0})\epsilon_F.
\end{equation}
Use \eqref{an:eq:f-bounds} to obtain \eqref{an:eq:q-local}.
No smallest eigenvalue of a marginal enters the argument.

Choose the same $Z$ basis and environment in \eqref{an:eq:stinespring} for both maps.
Then
\begin{equation}\label{an:eq:v-diff}
 \norm{V_1-V_2}\le\sqrt{d_Z}\norm{Q_1-Q_2}.
\end{equation}
The common-reference estimate of Sec.~\ref{sec:composition-errors}
gives channel diamond distance at most $2\norm{V_1-V_2}$.
Equations~\eqref{an:eq:q-local} and \eqref{an:eq:v-diff} prove \eqref{an:eq:channel-local}.
Finally, $\PP^\Omega_{Z|WY}(\rho^\Omega_{WY})=\rho^\Omega_{WYZ}$ by
Eq.~\eqref{an:eq:exact-extension}.  Evaluate the channel difference on $\rho^\Omega_{WY}$ to obtain
\eqref{an:eq:fixed-recovery}.
\end{proof}

\subsection{Changes in \texorpdfstring{$\Omega$}{Omega}}
\label{sec:ambient-comparison}

We compare the Gibbs marginals themselves before comparing their Petz
extensions. Local indistinguishability follows from correlation decay
and local perturbation estimates~\cite{CapelMoscolariTeufelWessel2025}.
The following proof uses the imaginary-time bounds already established
above and applies uniformly to the regions in the preparation scheme.

\begin{lemma}[Local indistinguishability of Gibbs marginals]
\label{lem:ambient-marginals}
Under Assumptions~\ref{an:ass:effective} and~\ref{an:ass:strip}, there is a
fixed $C_{\mathrm{marg}}>0$ such that, for every nonempty
$A\subseteq\Omega\subseteq\Omega'\subseteq\Lambda$,
\begin{equation}\label{eq:ambient-marginal-bound}
 \|\rho_A^{\Omega'}-\rho_A^\Omega\|_1
 \le C_{\mathrm{marg}}|A|^{3/2}
       e^{-\kappa d(A,\Omega'\setminus\Omega)/8}.
\end{equation}
The constant depends only on the stated physical and interaction bounds.
The difference is zero for $A=\varnothing$ or $\Omega'=\Omega$.
\end{lemma}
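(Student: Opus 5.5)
The plan is to reduce the comparison of $\rho_A^{\Omega'}$ and $\rho_A^\Omega$ to a comparison of two Gibbs-type operators on one intermediate region $E\supseteq A$. There the two effective Hamiltonians differ only by terms far from $A$, and the expansional estimates of Lemmas~\ref{an:lem:itlr} and~\ref{an:lem:relative-evol} convert this into an exponentially small error on $A$. Put $r:=d(A,\Omega'\setminus\Omega)$; if $r$ is bounded by a fixed constant, the claim is trivial after enlarging $C_{\mathrm{marg}}$. Otherwise let $E:=\{x\in\Omega:d(x,A)\le r/2\}$, so $A\subseteq E\subseteq\Omega\subseteq\Omega'$. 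Since marginals are consistent, $\rho_A^{\Omega'}=\Tr_{E\setminus A}\rho_E^{\Omega'}$ and $\rho_A^{\Omega}=\Tr_{E\setminus A}\rho_E^{\Omega}$. By Assumption~\ref{an:ass:effective} both $\rho_E^{\Omega'}$ and $\rho_E^{\Omega}$ are Gibbs states of $H_E$ plus correction terms. By (E1), the corrections at a site $x$ have weighted size at most $C_\beta e^{-\mu_1 d(x,\Omega'\setminus E)}$, respectively $C_\beta e^{-\mu_1 d(x,\Omega\setminus E)}$. For $x$ within distance $r/4$ of $A$, both distances are at least $r/4$. Hence the difference $\chi:=K_E^{\circ,\Omega'}-K_E^{\circ,\Omega}$ satisfies a bound of the form \eqref{an:eq:chi-ass} with $W$ replaced by the outer shell $E\setminus B_{r/4}(A)$, after the usual reduction of the spatial weight from $\mu_0$ to $\mu_*$.

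Next I would decouple $A$ from $E\setminus A$ inside each state using expansionals, as in Lemma~\ref{an:lem:relative-evol}. For $i=1,2$, split $K_i^\circ=K_i^{\mathrm{dec}}+U_i$, where $U_i$ collects the terms meeting both $A$ and $E\setminus A$. Put $F_i:=e^{-K_i^\circ/2}e^{K_i^{\mathrm{dec}}/2}$, whose generator is $\tfrac12\tau^t_{K_i^{\mathrm{dec}}}(U_i)$. Lemma~\ref{an:lem:itlr} bounds this generator by $O(|A|)$ uniformly on $[0,1]$ under Assumption~\ref{an:ass:strip}. It also bounds the generator difference by $O(|A|)\,e^{-\kappa r/4}$, through \eqref{an:eq:two-it} applied with the perturbation $\chi$ located at distance at least $r/4$ from the support of $U_i$. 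A Duhamel argument identical to \eqref{an:eq:duhamel-bound} then gives $\norm{F_1-F_2}\lesssim |A|e^{-\kappa r/4}$. The prefactor $e^{O(|A|)}$ is the danger here. To avoid it, I would instead apply the construction one boundary site of $A$ at a time, exactly as in the single-site bound $1\le|Z|\le b_*$. Each step is then controlled by the constants $g_0,h_0$ of Lemma~\ref{an:lem:relative-evol}, and the per-step errors add linearly in $|A|$.

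It remains to pass from operator-level closeness of the expansionals to trace-norm closeness of the normalized marginals. Writing $e^{-K_i^\circ}=F_i e^{-K_i^{\mathrm{dec}}}F_i^\dagger$ does not by itself factorize across $A$ and $E\setminus A$, because $F_i$ still acts on $E\setminus A$. I would therefore argue through Petz isometries in the canonical purification: adding the sites of $E\setminus A$ to $A$ shell by shell, with Theorem~\ref{an:thm:local-petz} and the composition identity \eqref{m:eq:telescoping}. Alternatively, and more directly, I would interpolate $K^\circ_s=K_2^\circ+s\chi$ and bound $\tfrac{d}{ds}\Tr_{E\setminus A}e^{-K_s^\circ}/\mathcal Z_s$ by a covariance between $\chi$ and observables on $A$. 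That covariance decays by the CMI and locality bounds of Lemma~\ref{an:lem:four-cmi}, applied with $W$ the shell far from $A$. The trace-norm conversion costs a factor of order $|A|^{1/2}$, since vector errors on purifications control the trace norm up to square-root factors. Together with the linear factor $|A|$ this gives the stated $|A|^{3/2}$. The successive halvings of $r$ (to $r/2$, $r/4$, and the geometric-mean step of Lemma~\ref{an:lem:four-cmi}) account for the rate $\kappa/8$.

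The main obstacle is this last step: obtaining decay of the covariance, or equivalently of the marginal derivative, uniformly in $|E|$ without any spectral-gap input. The difficulty is that $\chi$ is extensive along the outer boundary of $E$, so its total norm is not small. I would first try to localize $\chi$ term by term, using the shell sums \eqref{an:eq:shell} anchored at $A$ rather than at the boundary, and treat each term by a single-$Z$ application of Theorem~\ref{an:thm:local-petz}.
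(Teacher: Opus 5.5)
There is a genuine gap, and it sits in the step you yourself flag as the main obstacle. Neither route you propose closes it under the paper's hypotheses.

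\textbf{Why your two routes fail.} The linear interpolation $K_s^\circ=K_2^\circ+s\chi$ produces Gibbs states on $E$ that are not marginals $\rho_E^\Theta$ of any $\gamma_\Theta$. Assumption~\ref{an:ass:effective}, and hence Lemma~\ref{an:lem:four-cmi}, therefore says nothing about their correlations, so the covariance decay you need is simply unavailable for them. In addition, the $s$-derivative is a Duhamel (Kubo--Mori) correlation, not an ordinary covariance. The Petz route fails for a structural reason. Theorem~\ref{an:thm:local-petz} compares different conditioning regions inside one fixed $\Omega$ and never relates two ambient volumes. In the paper, the cross-volume comparison of Theorem~\ref{thm:two-cut} is derived \emph{from} this lemma, so that route is circular. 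Your reduction to $E$ and the expansional decoupling of $A$ from $E\setminus A$ do not touch this issue, since, as you note, $F_i$ still acts on $E\setminus A$. Your accounting of $|A|^{3/2}$ and of the rate $\kappa/8$ is reverse-engineered from the target rather than derived.

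\textbf{The missing idea.} Change the volume one site at a time in the \emph{microscopic} Hamiltonian, so that every intermediate state is a genuine Gibbs state $\gamma_\Theta$ with $\Omega\subseteq\Theta\subseteq\Omega'$.
\begin{enumerate}
\item For every such $\Theta$, Lemma~\ref{an:lem:four-cmi} with empty conditioning region, followed by Pinsker, gives $\|\rho^\Theta_{AB}-\rho^\Theta_A\otimes\rho^\Theta_B\|_1\le2\sqrt{C_{\mathrm{rel}}}\,|A|^{1/2}e^{-\kappa d(A,B)/2}$. This is covariance decay uniform in $\Theta$, with no gap input.
\item Deleting a site $u$ is an exact congruence: $\gamma_{\Theta\setminus\{u\}}\otimes\I_u/q\propto F_u\gamma_\Theta F_u^\dagger$, where $F_u=e^{-(K+D_u)/2}e^{K/2}$, $K=\beta H_\Theta$, and $D_u=-\beta\sum_{S\ni u}h_S$ has bounded size. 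There is thus neither an extensive perturbation nor an $e^{O(|A|)}$ factor.
\item Lemma~\ref{an:lem:itlr} lets you replace $F_u$ by the analogous $F_{u,s}$ built from $H_{\Theta\cap B_s(u)}$, at cost $O(e^{-\kappa s})$. For $s<d(A,u)$ this operator is supported away from $A$.
\item The change in $\Tr(O_A\,\cdot)$ then becomes a normalized covariance in $\gamma_\Theta$ of $O_A$ with $F_{u,s}^\dagger F_{u,s}$, namely $\Tr(\gamma_\Theta O_AF_{u,s}^\dagger F_{u,s})/\Tr(\gamma_\Theta F_{u,s}^\dagger F_{u,s})-\Tr(\gamma_\Theta O_A)$. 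The decoupling estimate of step 1 bounds it.
\item Choosing $s\approx d(A,u)/2$ gives a cost of $O(|A|^{1/2}e^{-\kappa d(A,u)/4})$ per deleted site.
\item Telescoping over $u\in\Omega'\setminus\Omega$ with the shell sum \eqref{an:eq:shell} at exponents $\kappa/4$ and $\kappa/8$ supplies the remaining factor $|A|$ and the rate $\kappa/8$.
\end{enumerate}
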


\begin{proof}
For disjoint nonempty $A,B\subseteq\Theta$, use
Lemma~\ref{an:lem:four-cmi} with the conditioning region empty and apply
quantum Pinsker's inequality~\cite{Watrous2018}. This gives
\begin{equation}\label{eq:separated-marginals}
 \|\rho_{AB}^{\Theta}-\rho_A^{\Theta}\otimes\rho_B^{\Theta}\|_1
 \le 2\sqrt{C_{\mathrm{rel}}}\,|A|^{1/2}
                           e^{-\kappa d(A,B)/2}.
\end{equation}
Consequently the covariance of operators on $A$ and $B$ is at most
the right-hand side times their operator norms. All bounds are uniform
in $\Theta$, including disconnected regions.

Fix $u\in\Theta\setminus A$ and work on $\mathcal H_\Theta$.
Put $K=\beta H_\Theta$ and
\[
 \begin{aligned}
 D_u&=-\beta\sum_{S\subseteq\Theta:\,u\in S}h_S,\\
 F_u(t)&=e^{-t(K+D_u)/2}e^{tK/2}.
 \end{aligned}
\]
Here $K+D_u=\beta H_{\Theta\setminus\{u\}}\otimes\I_u$.
For $s\ge r_0$, define $\Theta_s=\Theta\cap B_s(u)$,
$K_s=\beta H_{\Theta_s}\otimes\I_{\Theta\setminus\Theta_s}$, and
$F_{u,s}(t)=e^{-t(K_s+D_u)/2}e^{tK_s/2}$.
Every term of $D_u$ lies in $\Theta_s$ and meets $u$, so its weighted
total sum, also with the distance weight rooted at $u$, is at most
$J:=\beta J_H$. The interactions of $K$ and $K_s$ have weighted norm
at most $\Delta_0$. Every term of their difference meets
$\Theta\setminus\Theta_s$, so the difference satisfies
Eq.~\eqref{an:eq:chi-ass} with $J_\chi\le J$ and
$W=\Theta\setminus\Theta_s$.

Differentiation gives $F_u'=-F_uG_u$, where
$G_u(t)=\tfrac12\tau_K^t(D_u)$, and the same equation holds for
$F_{u,s}$ with $K_s$. Lemma~\ref{an:lem:itlr} therefore gives, uniformly
for $0\le t\le1$,
\[
 \|G_u(t)\|,\ \|G_{u,s}(t)\|\le g,
 \qquad
 \|G_u(t)-G_{u,s}(t)\|\le h e^{-\kappa s},
\]
where $g=\lambda J/[2(\lambda-\Delta_0)]$ and
$h=\lambda J^2/[2(\lambda-\Delta_0)^2]$.
If $\Theta_s=\Theta$, the difference vanishes.
The integral and inverse equations, as in
Lemma~\ref{an:lem:relative-evol}, imply at $t=1$
\begin{equation}\label{eq:site-removal-expansional}
\begin{gathered}
 \|F_u\|,\ \|F_u^{-1}\|,\ \|F_{u,s}\|,\ \|F_{u,s}^{-1}\|\le e^g,\\
 \|F_u-F_{u,s}\|\le e^g h e^{-\kappa s}.
\end{gathered}
\end{equation}

Write $\omega=\gamma_\Theta$. The exact normalized identity is
\begin{equation}\label{eq:site-removal-state}
 \gamma_{\Theta\setminus\{u\}}\otimes\frac{\I_u}{q}
 =\frac{F_u\omega F_u^\dagger}{\Tr(F_u\omega F_u^\dagger)}.
\end{equation}
Let $\omega_s=F_{u,s}\omega F_{u,s}^\dagger/
\Tr(F_{u,s}\omega F_{u,s}^\dagger)$. Both denominators are at least
$e^{-2g}$. For nonzero positive operators $P,Q$,
\(\|P/\Tr P-Q/\Tr Q\|_1\le2\|P-Q\|_1/\Tr P\).
Applying this inequality and Eq.~\eqref{eq:site-removal-expansional}
bounds the trace distance between the right-hand side of
Eq.~\eqref{eq:site-removal-state} and $\omega_s$ by
$4e^{4g}h e^{-\kappa s}$.

Put $r=d(A,u)$ and choose $r_0\le s<r$.
Then $F_{u,s}$ is supported on $\Theta_s$, disjoint from $A$.
For Hermitian $O_A$ with $\|O_A\|\le1$,
\[
\begin{split}
 \Tr[O_A(\omega_s-\omega)]
 &=\frac{\Tr(\omega O_AF_{u,s}^\dagger F_{u,s})}
       {\Tr(\omega F_{u,s}^\dagger F_{u,s})}\\
 &\quad-\Tr(\omega O_A).
\end{split}
\]
After putting the right-hand side over a common denominator,
Eq.~\eqref{eq:separated-marginals} bounds its absolute value by
$2e^{4g}\sqrt{C_{\mathrm{rel}}}\,|A|^{1/2}
 e^{-\kappa(r-s)/2}$.
Taking the supremum over $O_A$ and combining the two errors gives
\[
\begin{split}
 &\|\rho_A^\Theta-\rho_A^{\Theta\setminus\{u\}}\|_1\\
 &\quad\le4e^{4g}h e^{-\kappa s}
   +2e^{4g}\sqrt{C_{\mathrm{rel}}}\,|A|^{1/2}
                                e^{-\kappa(r-s)/2}.
\end{split}
\]
For $r\ge2r_0+2$, take $s=\lfloor r/2\rfloor$.
For smaller $r$, use the trace-distance upper bound $2$.
In both cases,
\begin{equation}\label{eq:one-site-marginal-change}
 \|\rho_A^\Theta-\rho_A^{\Theta\setminus\{u\}}\|_1
 \le C_{\mathrm{site}}|A|^{1/2}e^{-\kappa d(A,u)/4},
\end{equation}
where, for example, one may take
\[
 C_{\mathrm{site}}=
 e^{4g}(4he^\kappa+2\sqrt{C_{\mathrm{rel}}})
 +2e^{\kappa(r_0+1)/2}.
\]
Finally, delete the sites of $\Omega'\setminus\Omega$ in any order
and telescope Eq.~\eqref{eq:one-site-marginal-change}.
The distances are those of the fixed graph $\Lambda$.
Equation~\eqref{an:eq:shell}, with exponents $\kappa/4$ and
$\kappa/8$, yields
\[
 \sum_{u\in\Omega'\setminus\Omega}e^{-\kappa d(A,u)/4}
 \le\Csh{\kappa/8}|A|
                e^{-\kappa d(A,\Omega'\setminus\Omega)/8}.
\]
This proves Eq.~\eqref{eq:ambient-marginal-bound} with
$C_{\mathrm{marg}}=C_{\mathrm{site}}\Csh{\kappa/8}$.
\end{proof}

\begin{theorem}[Finite-patch Petz recovery on Gibbs inputs]
\label{thm:two-cut}
Under Assumptions~\ref{an:ass:effective} and~\ref{an:ass:strip}, let
$W,Y,Z\subseteq\Omega'$ be disjoint,
$YZ\subseteq\Omega\subseteq\Omega'$, and $1\le|Z|\le b_*$.
Use the same product basis on $Z$ and the same environment for the
isometries. Put $r=d(YZ,\Omega'\setminus\Omega)$ and
\begin{equation}\label{eq:gibbs-input-local-error}
 \delta_{W,Y,Z}=C_Qe^{-\kappa d(W,Z)}
       +2\sqrt{C_{\mathrm{marg}}}|YZ|^{3/4}e^{-\kappa r/16}.
\end{equation}
For any purification $|\psi\rangle_{WY\mathcal F}$ of
$\rho_{WY}^{\Omega'}$,
\begin{equation}\label{eq:two-cut-isometry}
 \bigl\|[(V^{\Omega'}_{Z|WY}-\I_W\otimes V^\Omega_{Z|Y})
           \otimes\I_{\mathcal F}]|\psi\rangle\bigr\|_2
 \le\delta_{W,Y,Z}.
\end{equation}
Equivalently, the Petz factors obey the input-weighted bound
\begin{equation}\label{eq:two-cut-factor}
\begin{split}
 \bigl\|&(Q^{\Omega'}_{Z|WY}-\I_W\otimes Q^\Omega_{Z|Y})\\
 &\qquad\times((\rho_{WY}^{\Omega'})^{1/2}\otimes\I_Z)
                        \bigr\|_{\mathrm{HS}}
 \le\delta_{W,Y,Z},
\end{split}
\end{equation}
where $\|O\|_{\mathrm{HS}}=(\Tr O^\dagger O)^{1/2}$.
The corresponding recovery error is
\begin{equation}\label{eq:two-cut-channel}
 \| (\id_W\otimes\PP^\Omega_{Z|Y})(\rho_{WY}^{\Omega'})
          -\rho_{WYZ}^{\Omega'}\|_1\le2\delta_{W,Y,Z}.
\end{equation}
All constants are independent of the region volumes.
\end{theorem}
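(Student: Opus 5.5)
The plan is to insert the intermediate isometry $\I_W\otimes V^{\Omega'}_{Z|Y}$ and use the triangle inequality. The first leg is the input-restriction error inside the fixed patch $\Omega'$. The second leg changes the Gibbs volume from $\Omega'$ to $\Omega$ while keeping the input $Y$. Both legs reduce to a Hilbert--Schmidt statement about Petz factors.

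\textbf{Reduction to the factor bound.} For $O$ acting on $WYZ$, and with the same $Z$ basis and environment used for both isometries, expand $|\psi\rangle$ in the $Z$ basis and use orthonormality of the environment vectors $|a\rangle_E$. Tracing out $\mathcal F$ then gives
\[
\bigl\|[(V_1-V_2)\otimes\I_{\mathcal F}]|\psi\rangle\bigr\|_2^2=\Tr\bigl[(Q_1-Q_2)^\dagger(Q_1-Q_2)(\rho^{\Omega'}_{WY}\otimes\I_Z)\bigr].
\]
This is the squared Hilbert--Schmidt norm in Eq.~\eqref{eq:two-cut-factor}, so Eqs.~\eqref{eq:two-cut-isometry} and~\eqref{eq:two-cut-factor} are equivalent.

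The channel estimate Eq.~\eqref{eq:two-cut-channel} then follows from two facts. First, $V^{\Omega'}_{Z|WY}|\psi\rangle$ purifies $\rho^{\Omega'}_{WYZ}$ by exact extension, Eq.~\eqref{an:eq:exact-extension}. Second, the trace distance of reduced states of two unit vectors is at most twice their Euclidean distance.

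\textbf{First leg.} Apply Theorem~\ref{an:thm:local-petz} with the patch $\Omega'$. Equation~\eqref{an:eq:q-local} combined with Eq.~\eqref{an:eq:v-diff} gives
\[
\|V^{\Omega'}_{Z|WY}-\I_W\otimes V^{\Omega'}_{Z|Y}\|\le C_Qe^{-\kappa d(W,Z)}.
\]
Since $|\psi\rangle$ is a unit vector, this bounds the first leg by $C_Qe^{-\kappa d(W,Z)}$.

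\textbf{Second leg.} Both factors now act only on $YZ$. For an operator $O$ on $YZ$,
\[
\|(O\otimes\I_W)((\rho^{\Omega'}_{WY})^{1/2}\otimes\I_Z)\|_{\mathrm{HS}}^2=\Tr[O^\dagger O(\rho^{\Omega'}_Y\otimes\I_Z)],
\]
so $W$ drops out. Write $\rho'=\rho^{\Omega'}$ and $\rho=\rho^\Omega$.

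1. By definition, $Q^{\Omega'}_{Z|Y}(\rho_Y'^{1/2}\otimes\I_Z)=\rho_{YZ}'^{1/2}$.
2. Likewise, $Q^{\Omega}_{Z|Y}(\rho_Y'^{1/2}\otimes\I_Z)=\rho_{YZ}^{1/2}(X\otimes\I_Z)$ with $X=\rho_Y^{-1/2}\rho_Y'^{1/2}$.
3. Split the difference as
\[
(\rho_{YZ}'^{1/2}-\rho_{YZ}^{1/2})+\rho_{YZ}^{1/2}((\I-X)\otimes\I_Z).
\]
4. The key cancellation is that $\Tr_Z\rho_{YZ}=\rho_Y$. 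Hence the squared Hilbert--Schmidt norm of the second piece is $\Tr[(\I-X)^\dagger\rho_Y(\I-X)]=\|\rho_Y^{1/2}-\rho_Y'^{1/2}\|_{\mathrm{HS}}^2$. No inverse eigenvalue survives.
5. The Powers--Størmer inequality $\|\sigma^{1/2}-\omega^{1/2}\|_{\mathrm{HS}}^2\le\|\sigma-\omega\|_1$ bounds each piece by the square root of a trace distance of marginals.
6. Lemma~\ref{lem:ambient-marginals} applied to $A=YZ$ and to $A=Y$ controls these trace distances. Both have distance to $\Omega'\setminus\Omega$ at least $r$, and $|Y|\le|YZ|$. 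Each piece is therefore at most $\sqrt{C_{\mathrm{marg}}}|YZ|^{3/4}e^{-\kappa r/16}$.

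Summing the two legs gives exactly $\delta_{W,Y,Z}$.

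I expect the main obstacle to be this second leg. A naive operator-norm comparison of $Q^{\Omega'}_{Z|Y}$ and $Q^\Omega_{Z|Y}$ would involve $\rho_Y^{-1/2}$ and hence small eigenvalues. The step to verify carefully is that weighting by $\rho_Y'^{1/2}$ and using $\Tr_Z\rho_{YZ}=\rho_Y$ removes that dependence. Checking the first leg is routine, since it is already established above.
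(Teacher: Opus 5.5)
Your proof is correct and follows the paper's argument. You use the same intermediate isometry $\I_W\otimes V^{\Omega'}_{Z|Y}$. The first leg comes from Theorem~\ref{an:thm:local-petz} with Eq.~\eqref{an:eq:v-diff}; it vanishes trivially when $W=\varnothing$. The second leg comes from the square-root trace inequality and Lemma~\ref{lem:ambient-marginals}, and your factor-level split is the paper's canonical-purification decomposition written in matrix form. Indeed, $\rho_{YZ}^{1/2}((\I-X)\otimes\I_Z)=Q^{\Omega}_{Z|Y}\bigl(((\rho^{\Omega}_Y)^{1/2}-(\rho^{\Omega'}_Y)^{1/2})\otimes\I_Z\bigr)$, and your cancellation $\Tr_Z\rho^{\Omega}_{YZ}=\rho^{\Omega}_Y$ is exactly the isometry property of $V^\Omega_{Z|Y}$ that the paper invokes.
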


\begin{proof}[Proof of Theorem~\ref{thm:two-cut}]
We first compare ordinary Petz isometries for two faithful states
$\rho,\sigma$ on $YZ$, with marginals $\rho_Y,\sigma_Y$.
The square-root trace inequality~\cite[Theorem~1, $s=1/2$]{AudenaertEtAl2007Chernoff}
gives
\[
 \||\Gamma(\rho)\rangle-|\Gamma(\sigma)\rangle\|_2^2
 =\|\sqrt\rho-\sqrt\sigma\|_{\mathrm{HS}}^2
 \le\|\rho-\sigma\|_1.
\]
Direct cancellation in the fixed-basis Petz isometry yields
$(V_\rho\otimes\I)|\Gamma(\rho_Y)\rangle=|\Gamma(\rho)\rangle$,
and likewise for $\sigma$; this is the identity recorded in
Proposition~\ref{prop:canonical-extension}.
The triangle inequality, isometry of $V_\sigma$, and trace-norm
contraction under partial trace imply
\begin{equation}\label{eq:petz-input-continuity}
\begin{split}
 &\|[(V_\rho-V_\sigma)\otimes\I]
                   |\Gamma(\rho_Y)\rangle\|_2\\
 &\quad\le\||\Gamma(\rho)\rangle-|\Gamma(\sigma)\rangle\|_2
       +\||\Gamma(\rho_Y)\rangle-|\Gamma(\sigma_Y)\rangle\|_2\\
 &\quad\le2\sqrt{\|\rho-\sigma\|_1}.
\end{split}
\end{equation}
For any purification of $\rho_Y$, the squared norm on the first line
is $\Tr[\rho_Y(V_\rho-V_\sigma)^\dagger(V_\rho-V_\sigma)]$.
It therefore has the same value for every reference system and every
purification of that marginal; no change of the fixed output basis is
needed.

Apply Eq.~\eqref{eq:petz-input-continuity} to
$\rho=\rho_{YZ}^{\Omega'}$ and $\sigma=\rho_{YZ}^{\Omega}$,
and use Lemma~\ref{lem:ambient-marginals} with $A=YZ$.
This bounds the change from $V^{\Omega'}_{Z|Y}$ to $V^\Omega_{Z|Y}$,
on $|\psi\rangle$ with $W\mathcal F$ included in its reference, by
$2\sqrt{C_{\mathrm{marg}}}|YZ|^{3/4}e^{-\kappa r/16}$.
The change from $V^{\Omega'}_{Z|WY}$ to
$\I_W\otimes V^{\Omega'}_{Z|Y}$ is at most
$C_Qe^{-\kappa d(W,Z)}$ in operator norm by
Theorem~\ref{an:thm:local-petz} and Eq.~\eqref{an:eq:v-diff}; it vanishes
when $W=\varnothing$.
The triangle inequality proves Eq.~\eqref{eq:two-cut-isometry}.
Expanding the canonical input in its fixed product basis shows that
its squared vector error equals the square of the Hilbert--Schmidt norm
in Eq.~\eqref{eq:two-cut-factor}.
Finally, both outputs in Eq.~\eqref{eq:two-cut-isometry} are unit
vectors. Their trace distance is at most twice their vector distance.
Tracing the reference and using exact Petz extension proves
Eq.~\eqref{eq:two-cut-channel}.
\end{proof}

\section{Petz composition and parallel recovery layers}
\label{sec:analytic-composition}

Locality is now combined with exact Petz composition, before any
reference input or elementary-gate construction is used. The result is an
explicit schedule of ideal finite-patch recovery isometries.

\subsection{Composition estimates}
\label{sec:composition-errors}

We use one telescoping estimate for all circuit and channel compositions.
For maps $A_j,\widetilde A_j:\mathcal H_{j-1}\to\mathcal H_j$ on
compatible finite register spaces,
\begin{equation}\label{eq:general-product-error}
\begin{split}
 &\|A_m\cdots A_1-\widetilde A_m\cdots\widetilde A_1\|\\
 &\quad\le\sum_{j=1}^m
 \left(\prod_{r>j}\|A_r\|\right)
 \|A_j-\widetilde A_j\|
 \left(\prod_{r<j}\|\widetilde A_r\|\right).
\end{split}
\end{equation}
For contractions, in particular initialized unitaries and isometries, this is
\begin{equation}\label{eq:product-error}
 \|A_m\cdots A_1-\widetilde A_m\cdots\widetilde A_1\|
 \le\sum_{j=1}^m\|A_j-\widetilde A_j\|.
\end{equation}
The same identity and diamond-norm one of CPTP maps give the standard
channel bound~\cite[Proposition~3.48]{Watrous2018},
\begin{equation}\label{an:eq:hybrid}
 \|\mathcal M_m\circ\cdots\circ\mathcal M_1
       -\mathcal N_m\circ\cdots\circ\mathcal N_1\|_\diamond
 \le\sum_{j=1}^m\|\mathcal M_j-\mathcal N_j\|_\diamond.
\end{equation}
Both estimates allow appended registers and identity actions on arbitrary
reference systems. A perturbed initial state adds its trace-norm error in
the channel case, or its vector error in the isometry case. Tracing the
outputs of two isometries $V,\widetilde V$ gives channels at diamond
distance at most $2\|V-\widetilde V\|$, by expanding the two outer factors.
This implication is used only for the specified dilations; a channel-error
bound alone does not fix their purifications. Equation~\eqref{eq:product-error}
also bounds the accumulated unitary synthesis error by the sum over all
calls.
For the finite-patch comparison we also use telescoping along an exact
input sequence. If $|\psi_j\rangle=A_j|\psi_{j-1}\rangle$ and all maps
are isometries, then
\begin{equation}\label{eq:input-sequence-telescoping}
\begin{split}
 &\|\widetilde A_m\cdots\widetilde A_1|\psi_0\rangle
                       -|\psi_m\rangle\|_2\\
 &\quad\le\sum_{j=1}^m
                \|(\widetilde A_j-A_j)|\psi_{j-1}\rangle\|_2.
\end{split}
\end{equation}
Indeed, insert the exact prefix before the differing factor and the
approximate suffix after it; the suffix is a contraction. The analogous
trace-norm inequality holds for CPTP maps along an exact density-operator
sequence. Thus input-dependent one-step estimates can be accumulated
without evaluating them on approximate intermediate states.

\subsection{Exact Petz composition}
\label{sec:canonical-telescoping}

\begin{proposition}[Exact one-site factorization; {\cite{Wilde2015,ParzygnatBuscemi2023}}]\label{an:thm:factorization}
Let $\rho_{YZ}>0$ be any faithful density operator; no Gibbs, locality, or temperature
assumption is needed.  Use Definition~\ref{an:def:petz} for this state and omit the superscript $\Omega$.  Write
\begin{equation}\label{an:eq:ordered-z}
\begin{gathered}
 Z=\mathop{\dot\bigcup}_{j=1}^{m}\{z_j\},\qquad \omega_j:=\rho_{YZ^{(j)}},\\
 Z^{(0)}:=\varnothing,\qquad Z^{(j)}:=\{z_1,\ldots,z_j\}.
\end{gathered}
\end{equation}
Define
\begin{equation}\label{an:eq:one-site}
 Q_j:=\omega_j^{1/2}(\omega_{j-1}^{-1/2}\otimes\I_{z_j}),
 \qquad \PP_j(O):=Q_j(O\otimes\I_{z_j})Q_j^\dagger.
\end{equation}
Then
\begin{align}
 Q_{Z|Y}
 &=Q_m(Q_{m-1}\otimes\I_{z_m})\cdots
       (Q_1\otimes\I_{z_2\cdots z_m}),\label{an:eq:factorization}\\
 \PP_{Z|Y}&=\PP_m\circ\cdots\circ\PP_1.\label{an:eq:channel-factorization}
\end{align}
The factorization is exact for every ordering of $Z$. For $Z=\varnothing$, the products and the channel are identities.
\end{proposition}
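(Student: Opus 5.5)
The plan is a telescoping argument in which adjacent square roots cancel. Since $\rho_{YZ}>0$, every partial-trace marginal $\omega_j=\rho_{YZ^{(j)}}$ is also positive definite, so all inverse square roots in Eq.~\eqref{an:eq:one-site} are well defined. The boundary cases fix the conventions: $\omega_0=\rho_Y$ and $\omega_m=\rho_{YZ}$. If $Y=\varnothing$, then $\omega_0$ is the scalar $1$.

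We first prove Eq.~\eqref{an:eq:factorization} by induction on $j$. The inductive claim is
\[
 \Pi_j:=Q_j(Q_{j-1}\otimes\I_{z_j})\cdots(Q_1\otimes\I_{z_2\cdots z_j})
 =\omega_j^{1/2}(\omega_0^{-1/2}\otimes\I_{Z^{(j)}}).
\]
The case $j=1$ is the definition of $Q_1$. For the inductive step, we write
\[
 \Pi_j=Q_j(\Pi_{j-1}\otimes\I_{z_j})
 =\omega_j^{1/2}(\omega_{j-1}^{-1/2}\otimes\I_{z_j})
 (\omega_{j-1}^{1/2}\otimes\I_{z_j})(\omega_0^{-1/2}\otimes\I_{Z^{(j)}}).
\]
The two middle factors act on the same register $YZ^{(j-1)}$ and are exact inverses, so they cancel. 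This yields the claim for $j$. Setting $j=m$ gives $\Pi_m=\rho_{YZ}^{1/2}(\rho_Y^{-1/2}\otimes\I_Z)=Q_{Z|Y}$.

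The only care needed is the tensor bookkeeping: $\I_{z_j}$ commutes past the operators on $YZ^{(j-1)}$, and the embedding identities combine as $\I_{z_2\cdots z_{j-1}}\otimes\I_{z_j}=\I_{z_2\cdots z_j}$.

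For Eq.~\eqref{an:eq:channel-factorization}, we apply the channels in order:
\[
 \PP_m\circ\cdots\circ\PP_1(O)=\Pi_m(O\otimes\I_Z)\Pi_m^\dagger.
\]
This follows by induction, because
\[
 \PP_j\bigl(\Pi_{j-1}(O\otimes\I_{Z^{(j-1)}})\Pi_{j-1}^\dagger\bigr)
 =Q_j(\Pi_{j-1}\otimes\I_{z_j})(O\otimes\I_{Z^{(j)}})(\Pi_{j-1}\otimes\I_{z_j})^\dagger Q_j^\dagger .
\]
Substituting $\Pi_m=Q_{Z|Y}$ then gives $\PP_{Z|Y}$.

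Nothing in the argument uses a particular ordering of $Z$, beyond the choice of the chain $Z^{(j)}$. For $Z=\varnothing$ we have $m=0$ and $Q_{Z|Y}=\rho_Y^{1/2}\rho_Y^{-1/2}=\I$, so both products are empty and the statement holds trivially. No step presents a real obstacle; the only thing to check is the embedding convention.
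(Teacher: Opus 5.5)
Your proposal is correct and is essentially the paper's argument. The paper proves the proposition by citing the serial-composition law of Petz recovery and observing that adjacent factors $\omega_{j-1}^{-1/2}\omega_{j-1}^{1/2}$ cancel, both in the ordered factor product and on the two sides of the channel input; your induction on $\Pi_j$ writes out exactly that cancellation, including the tensor bookkeeping.
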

\begin{proof}
This is the serial-composition law of Petz recovery applied to successive
partial traces~\cite[Sec.~4.3]{Wilde2015}; see also
Ref.~\cite[Theorem~3.3, $t=0$]{ParzygnatBuscemi2023}.
Adjacent square roots and inverse square roots cancel in the ordered
factor product and on both sides of the channel input.
\end{proof}
\begin{proposition}[Canonical Petz extension]\label{prop:canonical-extension}
Let $\omega_{Yz}>0$ and $\omega_Y=\Tr_z\omega_{Yz}$. The ordinary
Petz isometry $V_{z|Y}$ for this pair satisfies
\begin{equation}\label{eq:canonical-extension}
 (V_{z|Y}\otimes\I_{\mathcal E_Y})
       |\Gamma(\omega_Y)\rangle
   =|\Gamma(\omega_{Yz})\rangle,
\end{equation}
with registers identified by their site labels.
For any faithful state $\omega_\Lambda$ and any site order, composing the
corresponding one-site isometries from the scalar state produces
$|\Gamma(\omega_\Lambda)\rangle$ exactly.
\end{proposition}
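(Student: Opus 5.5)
The plan is a direct computation in the fixed product basis, followed by induction over the site order.

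\emph{Step 1: identify the Kraus action on the unnormalized maximally entangled vector.} Write $|\Upsilon_{Yz}\rangle=|\Upsilon_Y\rangle\otimes|\Upsilon_z\rangle$, with $|\Upsilon_z\rangle=\sum_b|b\rangle_z|b\rangle_{\mathcal E_z}$. By Definition~\ref{an:def:petz}, $V_{z|Y}=\sum_b Q_{z|Y}(\I_Y\otimes|b\rangle_z)\otimes|b\rangle_{\mathcal E_z}$. Apply this to $|\Gamma(\omega_Y)\rangle=(\omega_Y^{1/2}\otimes\I_{\mathcal E_Y})|\Upsilon_Y\rangle$. Since $Q_{z|Y}$ acts only on $Yz$, and not on $\mathcal E_Y$ or $\mathcal E_z$, we obtain
\[
 (V_{z|Y}\otimes\I_{\mathcal E_Y})|\Gamma(\omega_Y)\rangle
 =(Q_{z|Y}\otimes\I_{\mathcal E_Y\mathcal E_z})
 \bigl[(\omega_Y^{1/2}\otimes\I_z)\otimes\I_{\mathcal E_Y\mathcal E_z}\bigr]
 |\Upsilon_Y\rangle|\Upsilon_z\rangle .
\]

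\emph{Step 2: cancel the square roots.} Substitute $Q_{z|Y}=\omega_{Yz}^{1/2}(\omega_Y^{-1/2}\otimes\I_z)$. Faithfulness of $\omega_{Yz}$ implies $\omega_Y>0$, so the inverse square root is defined. The product $(\omega_Y^{-1/2}\otimes\I_z)(\omega_Y^{1/2}\otimes\I_z)$ equals $\I_{Yz}$. What remains is $(\omega_{Yz}^{1/2}\otimes\I_{\mathcal E_{Yz}})|\Upsilon_{Yz}\rangle=|\Gamma(\omega_{Yz})\rangle$, once registers are identified by site labels. This is the only place the fixed basis matters: the same basis $\{|b\rangle\}$ defines both $V$ and $|\Upsilon_z\rangle$.

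\emph{Step 3: compose over a site order.} Order the sites $x_1,\dots,x_N$ and set $\omega_j=\Tr_{\{x_{j+1},\dots,x_N\}}\omega_\Lambda$. Each $\omega_j$ is faithful, since a partial trace of a positive-definite operator is positive definite. Start from the scalar state, where $|\Gamma(\omega_0)\rangle=1$. The first step is the case $Y=\varnothing$, for which $V$ produces $\omega_1^{1/2}|\Upsilon_{x_1}\rangle$. Apply Steps 1--2 inductively with $Y=\{x_1,\dots,x_{j-1}\}$ and $z=x_j$. This gives $|\Gamma(\omega_N)\rangle=|\Gamma(\omega_\Lambda)\rangle$ exactly, for any order.

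There is no real obstacle. The only care needed is in the register bookkeeping: $Q$ must act trivially on the purification registers, and the fixed-basis convention must be consistent between the Stinespring isometry and $|\Upsilon\rangle$.
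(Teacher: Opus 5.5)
Your proof is correct and follows essentially the same route as the paper: expand the input in the fixed product basis, cancel $(\omega_Y^{-1/2}\otimes\I_z)$ against $\omega_Y^{1/2}$ so that only $\omega_{Yz}^{1/2}$ acts on $|\Upsilon_Y\rangle|\Upsilon_z\rangle=|\Upsilon_{Yz}\rangle$, and then induct over the site order starting from the scalar state. You also state explicitly that faithfulness of $\omega_{Yz}$ implies $\omega_Y>0$, so every marginal in the chain is faithful; the paper leaves this small detail implicit.
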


\begin{proof}
Write the input as
$\sum_{\boldsymbol a}\omega_Y^{1/2}|\boldsymbol a\rangle_Y
 |\boldsymbol a\rangle_{\mathcal E_Y}$. The output is
\begin{equation}
 \sum_{\boldsymbol a,b}
 \omega_{Yz}^{1/2}(|\boldsymbol a\rangle_Y\otimes|b\rangle_z)
 |\boldsymbol a\rangle_{\mathcal E_Y}|b\rangle_{\mathcal E_z},
\end{equation}
because $\omega_Y^{-1/2}\omega_Y^{1/2}=\I_Y$ cancels as adjacent
factors. This is \eqref{eq:canonical-extension}. Induction over the
site order gives the full-state identity. The $Y=\varnothing$ case
starts from the scalar $1$.
\end{proof}
For any faithful $\rho_\Lambda$ and ordering $z_1,\ldots,z_N$, put
$\rho_k=\rho_{\{z_1,\ldots,z_k\}}$, $\rho_0=1$, and let $\PP_k$
be the Petz channel for $(\rho_k,\rho_{k-1})$. Tracing the references
then gives exact sequential preparation:
\begin{equation}\label{an:eq:ideal-sequential}
 \rho_\Lambda=\PP_N\circ\cdots\circ\PP_1(1).
\end{equation}
This also follows directly from Eq.~\eqref{an:eq:exact-extension}.

\subsection{Arbitrary restored regions}

\begin{corollary}[Recovery of an arbitrary deleted region]\label{an:thm:large-recovery}
Under Assumptions~\ref{an:ass:effective} and~\ref{an:ass:strip}, let $W,Y,Z\subseteq\Omega$ be disjoint,
with $W,Z\ne\varnothing$. Enumerate $Z=\{z_1,\ldots,z_m\}$ in any order, where $m=|Z|$.
Let $C_Q^{(1)}$ denote $C_Q$ from
Theorem~\ref{an:thm:local-petz} with $b_*=1$. Then
\begin{align}
 \norm{\PP^\Omega_{Z|WY}-\id_W\otimes\PP^\Omega_{Z|Y}}_\diamond
 &\le 2C_Q^{(1)}
            \sum_{j=1}^{m}e^{-\kappa d(W,z_j)}\label{an:eq:large-diamond}\\
 &\le 2C_Q^{(1)}|Z|e^{-\kappa d(W,Z)},\nonumber\\
 \norm{(\id_W\otimes\PP^\Omega_{Z|Y})(\rho^\Omega_{WY})
                    -\rho^\Omega_{WYZ}}_1
 &\le 2C_Q^{(1)}|Z|e^{-\kappa d(W,Z)}.\label{an:eq:large-state}
\end{align}
Each right-hand side can also be replaced by its minimum with $2$.
\end{corollary}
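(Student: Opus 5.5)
The plan is to reduce both estimates to the one-site case of Theorem~\ref{an:thm:local-petz} (with $b_*=1$) through the exact factorization of Proposition~\ref{an:thm:factorization}, and then accumulate the errors with the hybrid bound \eqref{an:eq:hybrid}.

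\emph{Factorizing both channels.} Fix the ordering $z_1,\ldots,z_m$ and put $Z^{(j)}=\{z_1,\ldots,z_j\}$. Apply Proposition~\ref{an:thm:factorization} to the faithful state $\rho^\Omega_{WYZ}$ with conditioning region $WY$. This gives
\[
\PP^\Omega_{Z|WY}=\PP^\Omega_{z_m|WYZ^{(m-1)}}\circ\cdots\circ\PP^\Omega_{z_1|WY}.
\]
Apply the same proposition to $\rho^\Omega_{YZ}$ with conditioning region $Y$. This gives
\[
\PP^\Omega_{Z|Y}=\PP^\Omega_{z_m|YZ^{(m-1)}}\circ\cdots\circ\PP^\Omega_{z_1|Y}.
\]
Tensoring the second factorization with $\id_W$ factorizes $\id_W\otimes\PP^\Omega_{Z|Y}$ in the same order. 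Both factorizations use the one-site Petz channels of the same ambient state $\gamma_\Omega$, because every marginal is $\rho^\Omega_{\cdot}$.

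\emph{One-site comparison.} The $j$th factors are compared by Theorem~\ref{an:thm:local-petz}, applied with the disjoint triple $(W,\,YZ^{(j-1)},\,\{z_j\})$ inside $\Omega$. Here $W\neq\varnothing$ and $|\{z_j\}|=1$, so \eqref{an:eq:channel-local} gives
\[
\norm{\PP^\Omega_{z_j|WYZ^{(j-1)}}-\id_W\otimes\PP^\Omega_{z_j|YZ^{(j-1)}}}_\diamond\le2C_Q^{(1)}e^{-\kappa d(W,z_j)}.
\]
The constant is uniform in the growing conditioning region $YZ^{(j-1)}$ because $C_Q$ is independent of the region volumes. The map $\id_W\otimes\PP$ acts on registers $WYZ^{(j-1)}$; appended registers are covered by the remark after \eqref{an:eq:hybrid}. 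Summing over $j$ with \eqref{an:eq:hybrid} proves the first line of \eqref{an:eq:large-diamond}. The second line follows from $d(W,z_j)\ge d(W,Z)$.

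\emph{State bound.} Evaluate the channel difference on $\rho^\Omega_{WY}$, a state of trace norm one. By exact extension \eqref{an:eq:exact-extension}, $\PP^\Omega_{Z|WY}(\rho^\Omega_{WY})=\rho^\Omega_{WYZ}$, so \eqref{an:eq:large-state} follows from the diamond bound. For the replacement by the minimum with $2$: the difference of two CPTP maps has diamond norm at most $2$, and the difference of two density operators has trace norm at most $2$.

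\emph{Main obstacle.} The delicate point is making sure each factor pair is exactly the pair treated by Theorem~\ref{an:thm:local-petz}. The conditioning region for $z_j$ must be $YZ^{(j-1)}$, which is disjoint from $W$, and both channels must come from the same $\Omega$. This holds because Proposition~\ref{an:thm:factorization} applies to any faithful state and any ordering.
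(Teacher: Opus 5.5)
Your proposal is correct and follows essentially the same route as the paper. Both arguments factorize both channels into one-site Petz maps of the same ambient state $\gamma_\Omega$ (Proposition~\ref{an:thm:factorization}), compare the $j$th factors with Theorem~\ref{an:thm:local-petz} at $b_*=1$ uniformly in the growing retained region $YZ^{(j-1)}$, sum with Eq.~\eqref{an:eq:hybrid}, and obtain the state bound by evaluating the difference on $\rho^\Omega_{WY}$ using exact extension, together with the trivial bound $2$.
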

\begin{proof}
For the ordering of Eq.~\eqref{an:eq:ordered-z}, compare
\begin{equation}\label{an:eq:two-one-site}
 \widehat\PP_j:=\PP^\Omega_{z_j|WYZ^{(j-1)}},\qquad
 \PP_j:=\PP^\Omega_{z_j|YZ^{(j-1)}}.
\end{equation}
Theorem~\ref{an:thm:local-petz}, uniformly in the growing retained region,
gives
\begin{equation}\label{an:eq:step-diff}
 \|\widehat\PP_j-\id_W\otimes\PP_j\|_\diamond
 \le 2C_Q^{(1)}e^{-\kappa d(W,z_j)}.
\end{equation}
Unprocessed sites and $\Omega\setminus(WYZ)$ are traced in the defining
marginals, which are included in Assumption~\ref{an:ass:effective}.
Proposition~\ref{an:thm:factorization} and Eq.~\eqref{an:eq:hybrid}
give the channel bound. Applying it to $\rho^\Omega_{WY}$ gives the
state bound by exact extension. Both distances are also at most $2$.
\end{proof}
\begin{corollary}[Additional implementation errors]\label{an:cor:impl}
In Corollary~\ref{an:thm:large-recovery}, let $\PP_j$ be the one-site
maps in \eqref{an:eq:two-one-site}, in the chosen ordering of $Z$.
Suppose $\widetilde{\PP}_j$ is CPTP and
$\norm{\widetilde{\PP}_j-\PP_j}_\diamond\le\eta_j$.
For an input state $\widetilde\rho_{WY}$ with
$\norm{\widetilde\rho_{WY}-\rho^\Omega_{WY}}_1\le\eta_{\rm in}$,
define
\[
 \widetilde\omega_{WYZ}:=
 \bigl[\id_W\otimes(\widetilde\PP_m\circ\cdots\circ
       \widetilde\PP_1)\bigr](\widetilde\rho_{WY}).
\]
Then
\begin{equation}\label{an:eq:all-errors}
\begin{split}
 \norm{\widetilde\omega_{WYZ}-\rho^\Omega_{WYZ}}_1
 \le{}&\eta_{\rm in}+\sum_j\eta_j\\
 &+2C_Q^{(1)}\sum_j e^{-\kappa d(W,z_j)}.
\end{split}
\end{equation}
\end{corollary}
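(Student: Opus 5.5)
The bound will follow from a three-term triangle inequality that isolates the input error, the implementation errors, and the locality error of Corollary~\ref{an:thm:large-recovery}. Write $\mathcal N:=\id_W\otimes(\PP_m\circ\cdots\circ\PP_1)$ for the ideal composition of the one-site maps in \eqref{an:eq:two-one-site}, and $\widetilde{\mathcal N}:=\id_W\otimes(\widetilde\PP_m\circ\cdots\circ\widetilde\PP_1)$ for the implemented one. By Proposition~\ref{an:thm:factorization}, applied to the faithful marginal $\rho^\Omega_{YZ}$ with the chosen ordering of $Z$, the ideal composition satisfies $\mathcal N=\id_W\otimes\PP^\Omega_{Z|Y}$. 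We then split
\[
 \widetilde\omega_{WYZ}-\rho^\Omega_{WYZ}
 =\widetilde{\mathcal N}(\widetilde\rho_{WY}-\rho^\Omega_{WY})
 +(\widetilde{\mathcal N}-\mathcal N)(\rho^\Omega_{WY})
 +\bigl[\mathcal N(\rho^\Omega_{WY})-\rho^\Omega_{WYZ}\bigr].
\]

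Each term is bounded separately.

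\textbf{Input error.} $\widetilde{\mathcal N}$ is CPTP, being a composition of CPTP maps tensored with $\id_W$, so it contracts the trace norm. The first term is therefore at most $\eta_{\rm in}$.

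\textbf{Implementation error.} The hybrid bound \eqref{an:eq:hybrid} gives $\|\widetilde{\mathcal N}-\mathcal N\|_\diamond\le\sum_j\|\widetilde\PP_j-\PP_j\|_\diamond\le\sum_j\eta_j$. Since diamond norms are stable under tensoring with $\id_W$, and $\rho^\Omega_{WY}$ has unit trace norm, the second term is at most $\sum_j\eta_j$.

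\textbf{Locality error.} The third term is the recovery error of $\id_W\otimes\PP^\Omega_{Z|Y}$ on the Gibbs input. To obtain the summed form rather than the $|Z|e^{-\kappa d(W,Z)}$ form, we use exact extension: $\PP^\Omega_{Z|WY}(\rho^\Omega_{WY})=\rho^\Omega_{WYZ}$ by \eqref{an:eq:exact-extension}. The third term thus equals $(\id_W\otimes\PP^\Omega_{Z|Y}-\PP^\Omega_{Z|WY})(\rho^\Omega_{WY})$, and the first line of \eqref{an:eq:large-diamond} bounds it by $2C_Q^{(1)}\sum_je^{-\kappa d(W,z_j)}$.

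Adding the three bounds gives \eqref{an:eq:all-errors}.

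The only point needing care is that the ideal maps $\PP_j$ compose to exactly $\PP^\Omega_{Z|Y}$. This holds because they are all defined from marginals of the same $\gamma_\Omega$, with the unprocessed sites traced out, so Proposition~\ref{an:thm:factorization} applies verbatim. Everything else is bookkeeping.
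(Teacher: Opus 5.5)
Your proposal is correct and matches the paper's proof: the same three-way split into input error (trace-norm contraction), implementation error (the hybrid bound \eqref{an:eq:hybrid}), and exact locality error from Corollary~\ref{an:thm:large-recovery}. Two steps the paper leaves implicit are spelled out in your version: you use Proposition~\ref{an:thm:factorization} to identify the ideal composition with $\PP^\Omega_{Z|Y}$, and you combine the diamond-norm line of \eqref{an:eq:large-diamond} with exact extension to keep the summed form $\sum_j e^{-\kappa d(W,z_j)}$.
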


\begin{proof}
Use trace-norm contraction for the input error, Eq.~\eqref{an:eq:hybrid} for the implemented
channels, and Corollary~\ref{an:thm:large-recovery} for the exact local recovery.
\end{proof}

\subsection{Local preparation and seed errors}
\label{sec:local-representation}

Choose a seed region $A_{\mathrm{seed}}\subseteq\Lambda$ and an ordering
$z_1,\ldots,z_n$ of its complement, where $n=N-|A_{\mathrm{seed}}|$.
Write $\Lambda_k=A_{\mathrm{seed}}\cup\{z_1,\ldots,z_k\}$ and $\Lambda_0=A_{\mathrm{seed}}$.
For two positive integer radii $R,\ell$ define
\begin{equation}\label{eq:patch-regions}
\begin{aligned}
 Y_k&:=\Lambda_{k-1}\cap B_R(z_k),\\
 W_k&:=\Lambda_{k-1}\setminus Y_k,\\
 \Omega_k&:=B_{R+\ell}(z_k),\qquad
 \PP_k^{\mathrm{loc}}:=\PP^{\Omega_k}_{z_k|Y_k}.
\end{aligned}
\end{equation}
The defining local states are $\rho^{\Omega_k}_{Y_kz_k}$ and
$\rho^{\Omega_k}_{Y_k}$, not generally $\gamma_{Y_kz_k}$ and its marginal.
In particular, $\Omega_k\setminus(Y_kz_k)$ is retained in the coefficient
construction as a traced complement, not silently removed from the
Hamiltonian.

Use $C_Q^{(1)}$ from Theorem~\ref{an:thm:local-petz} with $b_*=1$ and fix
\begin{equation}\label{eq:preparation-locality-constant}
 C_{\mathrm{loc}}^Q:=\max\{1,C_Q^{(1)},
                    2\sqrt{C_{\mathrm{marg}}}c_{\mathrm{vol}}^{3/4}\}.
\end{equation}
Since $|Y_kz_k|\le c_{\mathrm{vol}}(1+R)^D$, this constant converts
Theorem~\ref{thm:two-cut} to the geometric bounds below.

\begin{corollary}[Approximation by one-site channels and a seed marginal]
\label{thm:analytic-assembly}
Under the assumptions of Theorem~\ref{thm:two-cut},
\begin{equation}\label{eq:state-product}
\begin{split}
 &\Bigl\|\gamma_\Lambda
  -\bigl[(\id_{W_n}\otimes\PP_n^{\mathrm{loc}})\circ\cdots\circ
  (\id_{W_1}\otimes\PP_1^{\mathrm{loc}})\bigr]
  (\rho^\Lambda_{A_{\mathrm{seed}}})\Bigr\|_1\\
 &\qquad\le 2nC_{\mathrm{loc}}^Q
       \left[e^{-\kappa R}+(1+R)^{3D/4}e^{-\kappa\ell/16}\right].
\end{split}
\end{equation}
If the seed has trace-norm error $\eta_{\mathrm{seed}}$ and each local
channel is replaced by a CPTP map of diamond error at most $\eta_k$,
add $\eta_{\mathrm{seed}}+\sum_k\eta_k$ to the right-hand side.
In particular, $A_{\mathrm{seed}}=\varnothing$ gives preparation from
the scalar state $1$, without any seed-state preparation requirement.
\end{corollary}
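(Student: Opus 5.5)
The plan is to telescope along the exact sequential preparation of $\gamma_\Lambda$ and then apply Theorem~\ref{thm:two-cut} once per step. Put $\rho_k:=\rho^\Lambda_{\Lambda_k}$ for the full-system marginals, so $\rho_0=\rho^\Lambda_{A_{\mathrm{seed}}}$ and $\rho_n=\gamma_\Lambda$. Let $\PP_k^{\mathrm{full}}:=\PP^\Lambda_{z_k|\Lambda_{k-1}}$. Equation~\eqref{an:eq:exact-extension} with $\Omega=\Lambda$ gives $\PP_k^{\mathrm{full}}(\rho_{k-1})=\rho_k$ exactly. Hence $\gamma_\Lambda=\PP_n^{\mathrm{full}}\circ\cdots\circ\PP_1^{\mathrm{full}}(\rho_0)$, which generalizes Eq.~\eqref{an:ideal-sequential} to a nonempty seed. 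Since $\Lambda_{k-1}=W_kY_k$, the comparison map at step $k$ is $\id_{W_k}\otimes\PP_k^{\mathrm{loc}}$, which acts on the same input space.

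Next I will apply the trace-norm analogue of the input-sequence telescoping estimate~\eqref{eq:input-sequence-telescoping} for CPTP maps. Insert the exact prefix up to step $k-1$, the differing factor at step $k$, and the approximate suffix afterward. The suffix is CPTP, so it contracts the trace norm. This gives
\[
 \Bigl\|\gamma_\Lambda-\bigl[\cdots\bigr](\rho_0)\Bigr\|_1\le\sum_{k=1}^n
 \bigl\|(\id_{W_k}\otimes\PP_k^{\mathrm{loc}})(\rho_{k-1})-\rho_k\bigr\|_1 .
\]
Each summand is exactly the left side of Eq.~\eqref{eq:two-cut-channel}. I take $\Omega'=\Lambda$, $\Omega=\Omega_k$, $W=W_k$, $Y=Y_k$, $Z=\{z_k\}$ (so $b_*=1$), and note $Y_kz_k\subseteq B_R(z_k)\subseteq\Omega_k$. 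Each summand is therefore at most $2\delta_{W_k,Y_k,z_k}$.

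The remaining step is geometric. Every site of $W_k$ lies outside $B_R(z_k)$, so $d(W_k,z_k)>R$ and $C_Qe^{-\kappa d(W_k,z_k)}\le C^Q_{\mathrm{loc}}e^{-\kappa R}$; this term vanishes if $W_k=\varnothing$. Next, $Y_kz_k\subseteq B_R(z_k)$ and $\Lambda\setminus\Omega_k$ lies at distance greater than $R+\ell$ from $z_k$, so the triangle inequality gives $r=d(Y_kz_k,\Lambda\setminus\Omega_k)\ge\ell$. The volume bound gives $|Y_kz_k|^{3/4}\le c_{\mathrm{vol}}^{3/4}(1+R)^{3D/4}$. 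By the choice in Eq.~\eqref{eq:preparation-locality-constant}, the second term of $\delta$ is then at most $C^Q_{\mathrm{loc}}(1+R)^{3D/4}e^{-\kappa\ell/16}$. When $\Omega_k=\Lambda$, that term is zero, consistent with the convention $d(\cdot,\varnothing)=+\infty$. Summing over the $n$ steps gives Eq.~\eqref{eq:state-product}.

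For the perturbed version, I use the standard hybrid argument. Replacing the seed changes the output by at most $\eta_{\mathrm{seed}}$, because the composed channel contracts the trace norm. Replacing each local channel adds at most $\eta_k$ by Eq.~\eqref{an:eq:hybrid}, since $\id_{W_k}\otimes\cdot$ preserves diamond distance. The case $A_{\mathrm{seed}}=\varnothing$ is the scalar start with $\rho_0=1$. The only delicate point is checking that Theorem~\ref{thm:two-cut} applies with disjointness and the containment $Y_kz_k\subseteq\Omega_k$ at every step, including edge cases where $W_k$ or $Y_k$ is empty. Theorem~\ref{thm:two-cut} allows empty $W$ and $Y$, as its proof notes, so I expect no real obstacle.
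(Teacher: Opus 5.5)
Your proposal is correct and follows essentially the same route as the paper: exact extension along the full-system marginals $\rho^\Lambda_{\Lambda_k}$, the CPTP form of the input-sequence telescoping estimate \eqref{eq:input-sequence-telescoping}, and the per-step input-weighted bound \eqref{eq:two-cut-channel} with $d(W_k,z_k)>R$, $d(Y_kz_k,\Lambda\setminus\Omega_k)\ge\ell$ and the constant \eqref{eq:preparation-locality-constant}. The implementation and seed errors are then handled exactly as you describe, by the diamond-norm hybrid bound \eqref{an:eq:hybrid} and trace-norm contraction.
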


\begin{corollary}[Preparation by local Petz isometries]\label{cor:local-purification}
Under the assumptions of Theorem~\ref{thm:two-cut}, use the empty seed and the regions in \eqref{eq:patch-regions}. Let
$V_k^{\mathrm{loc}}:=V^{\Omega_k}_{z_k|Y_k}$, extended by identities
on $W_k$ and on all previously retained reference systems. Starting
from the scalar state and keeping every new $\mathcal E_{z_k}$,
these isometries produce a vector $|\Psi_{\mathrm{loc}}\rangle$ with
\begin{equation}\label{eq:local-pure-error}
 \norm{|\Psi_{\mathrm{loc}}\rangle-|\Gamma_\beta\rangle}_2
 \le NC_{\mathrm{loc}}^Q
 \left[e^{-\kappa R}+(1+R)^{3D/4}e^{-\kappa\ell/16}\right].
\end{equation}
If the $k$th implementation has operator-norm isometry error $\xi_k$,
including leakage into work registers, add $\sum_k\xi_k$ to the
right-hand side and tensor the target with zero work registers.
\end{corollary}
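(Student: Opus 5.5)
The plan is to telescope along the exact preparation sequence, using Eq.~\eqref{eq:input-sequence-telescoping}. Fix the ordering $z_1,\ldots,z_N$ with empty seed, so $\Lambda_k=\{z_1,\ldots,z_k\}$. Define the exact isometries $A_k:=V^{\Lambda}_{z_k|\Lambda_{k-1}}$, built from the full-system marginals $\rho^\Lambda_{\Lambda_k}$ and $\rho^\Lambda_{\Lambda_{k-1}}$. By Proposition~\ref{prop:canonical-extension} applied to the faithful state $\gamma_\Lambda$, the exact vectors
\[
|\psi_k\rangle:=A_k\cdots A_1\cdot 1=|\Gamma(\rho^\Lambda_{\Lambda_k})\rangle
\]
satisfy $|\psi_N\rangle=|\Gamma_\beta\rangle$. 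The approximate maps are $\widetilde A_k:=\I_{W_k}\otimes V_k^{\mathrm{loc}}$, extended by identities on the retained registers $\mathcal E_{\Lambda_{k-1}}$. All maps are isometries, so Eq.~\eqref{eq:input-sequence-telescoping} bounds
\[
\||\Psi_{\mathrm{loc}}\rangle-|\Gamma_\beta\rangle\|_2\le\sum_k\|(\widetilde A_k-A_k)|\psi_{k-1}\rangle\|_2.
\]

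For each step I would apply Theorem~\ref{thm:two-cut} with $\Omega'=\Lambda$, $\Omega=\Omega_k=B_{R+\ell}(z_k)$, $Z=\{z_k\}$ (so $b_*=1$), $Y=Y_k$ and $W=W_k$. The hypotheses hold: $Y_kz_k\subseteq B_R(z_k)\subseteq\Omega_k$, and the regions are disjoint. The vector $|\psi_{k-1}\rangle$ purifies $\rho^\Lambda_{W_kY_k}$ with reference $\mathcal F=\mathcal E_{\Lambda_{k-1}}$, so Eq.~\eqref{eq:two-cut-isometry} bounds the $k$th term by $\delta_{W_k,Y_k,z_k}$. Both parts of $\delta$ then become geometric estimates.
\begin{itemize}
\item Every site of $W_k$ lies outside $B_R(z_k)$, so $d(W_k,z_k)\ge R+1\ge R$. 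If $W_k=\varnothing$ this term vanishes, consistent with the convention $d=+\infty$.
\item $r=d(Y_kz_k,\Lambda\setminus\Omega_k)\ge \ell+1\ge\ell$, since $Y_kz_k\subseteq B_R(z_k)$ and the complement lies outside $B_{R+\ell}(z_k)$. If $\Omega_k=\Lambda$, the second term vanishes.
\item $|Y_kz_k|\le|B_R(z_k)|\le c_{\mathrm{vol}}(1+R)^D$, hence $|Y_kz_k|^{3/4}\le c_{\mathrm{vol}}^{3/4}(1+R)^{3D/4}$.
\end{itemize}
With $C_{\mathrm{loc}}^Q$ from Eq.~\eqref{eq:preparation-locality-constant}, each term is therefore at most
\[
C_{\mathrm{loc}}^Q\left[e^{-\kappa R}+(1+R)^{3D/4}e^{-\kappa\ell/16}\right].
\]
Summing over the $n=N$ steps gives Eq.~\eqref{eq:local-pure-error}.

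For implementation errors, let $\widetilde A_k'$ be the implemented initialized maps, compared with $\widetilde A_k\otimes|0\rangle$ on the work registers. I would insert the comparison chain: telescope first from the ideal local product to the implemented product using Eq.~\eqref{eq:product-error}, which applies because all factors are contractions. This adds $\sum_k\xi_k$, and the target is tensored with zero work registers. The triangle inequality then combines the two bounds.

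The main obstacle is making sure the input-dependent bound of Theorem~\ref{thm:two-cut} is applied to exact intermediate states, not approximate ones. This is exactly why I telescope along the exact sequence $|\psi_{k-1}\rangle$ rather than the local one. The remaining care is in the register bookkeeping: the retained purification registers must serve as the reference $\mathcal F$, and the fixed basis in $\mathcal E_{z_k}$ must agree between $A_k$ and $\widetilde A_k$, as Theorem~\ref{thm:two-cut} requires.
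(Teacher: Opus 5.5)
Your proposal is correct and follows the paper's proof essentially step for step. You telescope along the exact canonical-purification sequence via Eq.~\eqref{eq:input-sequence-telescoping}, bound each step by Theorem~\ref{thm:two-cut} with $\Omega'=\Lambda$ and the geometric estimates absorbed into $C_{\mathrm{loc}}^Q$, and separately telescope the implementation errors with Eq.~\eqref{eq:product-error}. Your explicit handling of the cases $W_k=\varnothing$ and $\Omega_k=\Lambda$, and of the reference $\mathcal F=\mathcal E_{\Lambda_{k-1}}$, fills in bookkeeping the paper leaves implicit.
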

\begin{proof}[Proof of Corollaries~\ref{thm:analytic-assembly} and~\ref{cor:local-purification}]
Let $\widehat\PP_k=\PP^\Lambda_{z_k|\Lambda_{k-1}}$ and let
$\widehat V_k$ be its specified isometry. Exact extension prepares
$\rho_{\Lambda_k}^\Lambda$; Proposition~\ref{prop:canonical-extension}
gives the corresponding canonical purifications in the empty-seed case.
The geometry of Eq.~\eqref{eq:patch-regions} satisfies
$d(W_k,z_k)>R$ and $d(Y_kz_k,\Lambda\setminus\Omega_k)\ge\ell$.
Theorem~\ref{thm:two-cut} and Eq.~\eqref{eq:preparation-locality-constant}
therefore bound the one-step vector error on each exact Gibbs
purification by
\[
 C_{\mathrm{loc}}^Q
       [e^{-\kappa R}+(1+R)^{3D/4}e^{-\kappa\ell/16}],
\]
and the one-step trace-norm error on each exact marginal by twice
this expression.
Apply Eq.~\eqref{eq:input-sequence-telescoping} for isometries, retaining
all earlier references, and its CPTP version for the mixed-state
construction from the exact seed marginal. This gives the two locality
bounds without requiring a channel-norm comparison across Gibbs volumes.
For implemented isometries, separately telescope their operator-norm
errors relative to the finite-patch isometries, with fresh zero work
registers included in the target. For implemented CPTP maps, telescope
their diamond-norm errors relative to the same finite-patch maps by
Eq.~\eqref{an:eq:hybrid}. Finally, trace-norm contraction adds the
seed error $\eta_{\mathrm{seed}}$.
\end{proof}
A seed marginal can contain correlations between separated patches; it
is not assumed to factorize. The empty seed avoids a separate seed-state
preparation requirement.
\subsection{Disjoint patches}
\label{sec:coloring}

For an integer $r\ge1$, join two distinct sites whenever their distance
is at most $2r$. The resulting conflict graph has maximum degree at most
$\max_x|B_{2r}(x)|-1$. Greedy coloring in a fixed vertex order partitions
$\Lambda$ into classes $\Lambda^{(1)},\ldots,
\Lambda^{(n_{\mathrm{col}})}$ with
\begin{equation}\label{eq:color-count}
 n_{\mathrm{col}}\le\max_{x\in\Lambda}|B_{2r}(x)|
                    \le c_{\mathrm{vol}}(1+2r)^D.
\end{equation}
Equal-colored sites have distance greater than $2r$, so their radius-$r$
balls are disjoint. 

For $r=R+\ell$, fix such a coloring and order the sites color by color,
using any fixed order within each class. Define
\begin{equation}\label{eq:color-regions}
\begin{aligned}
 \Lambda^{(<c)}&:=\bigcup_{a<c}\Lambda^{(a)},\\
 Y_x&:=\Lambda^{(<c)}\cap B_R(x)
                         &&(x\in\Lambda^{(c)}),\\
 \Omega_x&:=B_{R+\ell}(x),\qquad
 V_x^{\mathrm{loc}}:=V^{\Omega_x}_{x|Y_x}.
\end{aligned}
\end{equation}
Because sites of a common color have distance greater than $2(R+\ell)$,
none is in $B_R(x)$ for another. Thus the sets $Y_x$ agree exactly with
those in \eqref{eq:patch-regions} for the induced site order. Each map adds just $x$. In particular, its input does not change as
other sites of the same color are restored. The equal-colored maps act
on disjoint physical registers and retain distinct new reference systems,
so their parallel product is the serial product for this site order.

For $0<\varepsilon<1/2$, take
\begin{equation}\label{eq:global-radii}
\begin{aligned}
 R&=\left\lceil\frac1\kappa
          \log\frac{16NC_{\mathrm{loc}}^Q}{\varepsilon}\right\rceil,\\
 \ell&=\left\lceil\frac{16}{\kappa}
          \log\frac{16NC_{\mathrm{loc}}^Q(1+R)^{3D/4}}
                        {\varepsilon}\right\rceil.
\end{aligned}
\end{equation}
Both radii are $O(\log(N/\varepsilon))$; the additional
$\log(1+R)$ term in $\ell$ does not change the patch-volume scaling.
The layered product, with identities on earlier registers understood,
\begin{equation}\label{eq:ideal-layered-preparation}
 |\Psi_{\mathrm{loc}}\rangle=
 \left(\prod_{x\in\Lambda^{(n_{\mathrm{col}})}}V_x^{\mathrm{loc}}\right)
 \cdots
 \left(\prod_{x\in\Lambda^{(1)}}V_x^{\mathrm{loc}}\right)(1)
\end{equation}
obeys $\||\Psi_{\mathrm{loc}}\rangle-|\Gamma_\beta\rangle\|_2
\le\varepsilon/8$ by Corollary~\ref{cor:local-purification}.
There are $O([\log(N/\varepsilon)]^D)$ recovery layers and
$O([\log(N/\varepsilon)]^D)$ sites per patch. This is a statement about
specified isometries and their spatial supports, not their elementary-gate
cost. Appendix~\ref{sec:classical-construction} constructs these local
maps; Appendix~\ref{sec:global-complexity} accounts for implementation
errors and resources.

\section{Constructive implementation of one-patch Petz recovery}
\label{sec:classical-construction}\label{sec:circuit-primitives}

The preceding appendices specify a product of finite-patch Petz isometries
whose coherent application approximates the canonical purification. We now
compile one of those isometries. The first ingredient turns imaginary-time locality into exponential-product
circuits and normalization-controlled one-site contractions, encoding the
endpoint corrections from $K_0$ and fixed references. The second uses the
bounded-function integral to construct the noncommutative square-root
factors and complete the specified Petz isometry. The input is an interaction list,
unlike the supplied state encodings and spectral promises of the general
Petz algorithm in Ref.~\cite[Theorem~1]{GilyenEtAl2022}.

\subsection{Reference input and computational model}
\label{sec:reference-input}

A local construction acts on a supplied patch $\Omega$ of $M=|\Omega|$
sites. Its isometry-error budget is $\eta/2$, its induced channel-error budget is $\eta\in(0,1/4)$, and the failure probability
of its input-independent randomized preprocessing is
$\fail\in(0,1/4)$. In the reference and circuit construction the spatial
weight is $\mu=\mu_*$. We use linear combinations of unitaries (LCU) and
quantum singular value transformation (QSVT). A PREPARE unitary prepares a
coefficient-label state; SELECT applies the labelled operation conditionally
on the label.

\begin{definition}[Projected unitary encoding; {\cite{GilyenEtAl2019}}]\label{def:encoding}
Let $A:\mathcal H_{\mathrm{in}}\to\mathcal H_{\mathrm{out}}$.
Specify isometric embeddings $J_{\mathrm{in}}$ and $J_{\mathrm{out}}$
from these spaces into a common register space $\mathcal K$, and set
$\Pi_{\mathrm{in}}=J_{\mathrm{in}}J_{\mathrm{in}}^\dagger$ and
$\Pi_{\mathrm{out}}=J_{\mathrm{out}}J_{\mathrm{out}}^\dagger$.
A unitary $U$ on $\mathcal K$ is an $(\alpha,\zeta)$ encoding of $A$ if
\begin{equation}\label{eq:encoding}
 \norm{A-\alpha J_{\mathrm{out}}^\dagger UJ_{\mathrm{in}}}\le\zeta.
\end{equation}
Thus the encoded block acts from $\mathcal H_{\mathrm{in}}$ to
$\mathcal H_{\mathrm{out}}$, even when their dimensions differ.
The known number $\alpha>0$ is the normalization factor; it need not equal
$\norm A$. A block encoding is the case in which the embeddings append
specified zero ancillas. All embeddings used below have explicit unitary
implementations after adding zero registers. Their controlled versions,
inverses, and the reflections about the two images are included in the cost.
For a square Hermitian target, averaging a block encoding and its adjoint
using one control qubit gives an approximation with a Hermitian projected
block. The full implementing unitary need not be Hermitian.
\end{definition}

We count elementary one- and two-qubit gates, work qubits, classical bit
operations, and independent quantum measurements used in preprocessing.
A fixed $q$-dimensional site occupies $n_q$ qubits; operations outside its
encoded subspace are extended to unitaries on the padded registers.

A compiled local circuit is a unitary acting on its retained input,
new output sites, reference systems, and zero-initialized work qubits.
It uses no measurements during its application to the input. Randomized
list compression and independent measurements of normalization constants are preprocessing;
the resulting finite classical data fix the unitary before it is applied.
Circuit depth counts layers of one- and two-qubit gates with disjoint
register supports. Different patches can use independent work registers.
The gate model permits gates between these specified registers; a chosen
routing architecture with polynomial overhead per patch preserves the
asymptotic bounds below. Normalization-estimation trials, compiled quantum gates,
classical bit operations, and peak number of qubits in use are charged separately.

The local-term access model, including support arithmetic and a controlled
PREPARE operation over interaction terms incident on a specified site,
with all finite-precision costs included, is stated in
Assumption~\ref{ass:microscopic-access} in
Sec.~\ref{sec:microscopic-model}. The following deletion chain defines
the coefficients of one local recovery map. It is not an operation performed
on the physical state during the site-addition sequence.

\begin{definition}[Nested marginals]\label{cr:def:chain}
Choose disjoint nonempty blocks $B_1,\ldots,B_m$, where $1\le m\le M$ and $|B_j|\le b_*$ for a fixed constant $b_*$.  Put
\begin{equation}\label{cr:eq:chain}
 A_0=\Omega,\qquad A_j=A_{j-1}\setminus B_j,\qquad
 \rho_j=\Tr_{\Omega\setminus A_j}\gamma_\Omega.
\end{equation}
Define
\begin{equation}\label{cr:eq:K}
 K_j=-\log\rho_j-\trn{A_j}(-\log\rho_j)\I_{A_j}.
\end{equation}
Thus $\trn{A_j}(K_j)=0$ and $\rho_j=e^{-K_j}/\Tr(e^{-K_j})$;
$\E_B=d_B^{-1}\Tr_B$ remains the normalized partial trace.
Each preparation step restores one site $z$. For the map $Y\to Yz$,
let $C=\Omega\setminus(Yz)$, delete the sites of $C$ first, and choose
$B_m=\{z\}$. Then $A_{m-1}=Yz$, $A_m=Y$, and $m\le M$.
If $C=\varnothing$, this is the single deletion $m=1$.
Only the final recovery is applied to the physical input. Earlier deletions
are used to construct its coefficients, with temporary registers for the
traced sites. The local construction also allows a final block of fixed
size $|B_m|\le b_*$, but the global preparation uses $|B_m|=1$.
\end{definition}

The effective-interaction hypotheses give the following bounds for this
chain; the exact decompositions are needed only for analysis.

\begin{proposition}[Analytic bounds for the exact marginal family]
\label{cr:ass:analytic}
Under Assumptions~\ref{an:ass:effective} and \ref{an:ass:strip},
each $K_j$ has weighted interaction norm at most
$\Delta_0<\lambda$, and
\begin{equation}\label{cr:eq:exact-U}
 U_j^{\mathrm{ex}}:=K_{j-1}-K_j\otimes\I_{B_j}
\end{equation}
has a specified decomposition of weighted total sum at most
$2u_0b_*$. The decompositions are used only in proofs.
\end{proposition}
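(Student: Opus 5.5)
The plan is to identify each $K_j$ with the centered effective Hamiltonian $K_{A_j}$ of Assumption~\ref{an:ass:effective}, and then read both bounds off estimates already proved.

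\emph{Step 1: identification.} By Eq.~\eqref{an:eq:heff}, $-\log\rho_j=K^\circ_{A_j}+\log\mathcal Z^\Omega_{A_j}\I_{A_j}$ with $K^\circ_{A_j}=\beta\He^\Omega_{A_j}$. Centering with $\trn{A_j}$ removes every scalar, so Eq.~\eqref{cr:eq:K} gives $K_j=K^\circ_{A_j}-\trn{A_j}(K^\circ_{A_j})\I_{A_j}=K_{A_j}$. For $j=0$ we have $\He^\Omega_\Omega=H_\Omega$, so $K_0$ is the traceless part of $\beta H_\Omega$, as in Eq.~\eqref{m:eq:microscopic-input}. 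The interaction of $K_j$ is therefore $\{\beta h_S+\beta h_S^{\beta,\Omega;A_j}\}_{S\subseteq A_j}$; the scalar is kept separately, as the weighted norm convention prescribes.

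\emph{Step 2: weighted norm.} For $x\in A_j$, the microscopic terms contribute at most $\beta J_H$ by Eq.~\eqref{an:eq:jh}. The corrections contribute at most $\beta C_\beta e^{-\mu_1 d(x,\Omega\setminus A_j)}\le\beta C_\beta$ by (E1). Since $\mu_*\le\mu_0$, the $w_{\lambda,\mu_*}$ weights are dominated by the $w_{\lambda,\mu_0}$ weights. Hence $\|K_j\|_{\lambda,\mu_*}\le\beta(J_H+C_\beta)\le\Delta_0<\lambda$ by Assumption~\ref{an:ass:strip}.

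\emph{Step 3: the difference.} The scalars in $U_j^{\mathrm{ex}}$ cancel except for a constant $c\I$. Write $U_j^{\mathrm{ex}}=U^\Omega_{B_j|A_j}+c\I$, with $U^\Omega_{B_j|A_j}=K^\circ_{A_{j-1}}-K^\circ_{A_j}\otimes\I_{B_j}$; this is Eq.~\eqref{an:eq:k-circ} with $R=A_j$ and $Z=B_j$. Lemma~\ref{an:lem:relative} bounds its weighted total sum by $u_0|B_j|\le u_0b_*$; the extra factor $e^{\kappa d(S,Z)}\ge1$ only helps.

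It remains to bound $|c|$, which I expect to be the main obstacle, because the centering traces differ between the two regions. Use $\trn{A_{j-1}}(U_j^{\mathrm{ex}})=\trn{A_{j-1}}(K_{j-1})-\trn{A_j}(K_j)=0$, since both $K$'s are traceless. This gives $c=-\trn{A_{j-1}}(U^\Omega_{B_j|A_j})$. Each term $u_S$ contributes at most $\|u_S\|$ to this normalized trace, so $|c|\le\sum_S\|u_S\|\le u_0b_*$. Adding the two contributions gives a weighted total sum, per Eq.~\eqref{an:eq:total-mass}, of at most $2u_0b_*$, as claimed. If the paper's convention instead absorbs the scalar differently, the same trace identity still bounds it by the nonscalar mass.
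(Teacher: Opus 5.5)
Your proposal is correct and follows the paper's own argument. The first bound comes from Eq.~\eqref{an:eq:heff}, (E1), and the high-temperature condition. The second comes from Lemma~\ref{an:lem:relative} applied to $U^\Omega_{B_j|A_j}$, together with the fact that centering subtracts the normalized trace of this difference, whose modulus is at most its weighted total sum $u_0b_*$. Your explicit identity $c=-\trn{A_{j-1}}(U^\Omega_{B_j|A_j})$ is exactly the step the paper states in words.
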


\begin{proof}[Proof of Proposition~\ref{cr:ass:analytic}]
The first bound follows from \eqref{an:eq:heff} and
\eqref{an:eq:strip}. Before centering, the second difference is
$U^\Omega_{B_j|A_j}$, whose weighted total sum is at most $u_0b_*$
by Lemma~\ref{an:lem:relative}. Centering subtracts the normalized trace
of this difference. That scalar has modulus at most its operator norm,
which is at most the same weighted total sum. This proves the stated bound.
\end{proof}

Here $u_0$ is specified in Eq.~\eqref{an:eq:root-u}.
The local effective-interaction assumptions already provide the existence
of a bounded-strength decomposition. On a finite patch, a strict strength
margin also permits finite-precision approximations of that decomposition.
Spatial decay and stability make finite-neighborhood calculations
a natural route to such coarse data, consistent with the local
reconstruction viewpoint of Refs.~\cite{ScaletEtAl2025,BluhmCapelPerezHernandez2025}.
The following input condition specifies this algorithmic realization
and its generation cost; it does not introduce a further correlation
observable.

\paragraph{Reference data.}
For $j=1,\ldots,m$, a reference list describes a centered Hermitian operator
\begin{equation}\label{cr:eq:reference-input}
 \overline L_j=\overline c_j\I+
       \sum_{X\in\mathcal I_j}\overline\ell_{j,X},\qquad
 \overline a_{j,X}\ge\|\overline\ell_{j,X}\|.
\end{equation}
It supplies finite nonempty supports $X\subseteq A_j$, finite-precision
matrices, and the indicated norm bounds. All numerical errors are included
in the reference accuracy. The cost $\mathcal C_{\mathrm{ref}}(\Omega)$
counts generation, finite-precision construction, and scanning of all
$m$ lists. The exact microscopic $K_0$ is supplied separately through
Assumption~\ref{ass:microscopic-access}. We write $L_0:=K_0$ only to
use uniform chain formulas below; it is never approximated or compressed.

\begin{assumption}[Reference input and local preprocessing]
\label{cr:ass:classical}
\textit{(i) Reference quality.}
The lists in Eq.~\eqref{cr:eq:reference-input} are supplied with
\begin{align}
 \sup_x\sum_{X\ni x}\overline a_{j,X}w_{\lambda,\mu}(X)
       &\le J_{\mathrm{ref}}<\lambda,
          \label{cr:eq:reference-strength}\\
 \|\overline L_j-K_j\|&\le r_{\mathrm{ref}},\qquad 1\le j\le m.
          \label{cr:eq:reference-error}
\end{align}
\textit{(ii) Local generation.}
When the classical generation bound is asserted, a specified procedure
produces these same lists, for every patch and deletion chain used, with
\begin{equation}\label{eq:classical-local-error}
 r_{\mathrm{ref}}\le C_{\mathrm{cl}}m
       e^{-c_{\mathrm{cl}}\ell_{\mathrm{cl}}},\qquad
 \ell_{\mathrm{cl}}\ge1,
\end{equation}
for fixed $C_{\mathrm{cl}},c_{\mathrm{cl}}>0$, within the local arithmetic
and finite-precision bounds of Sec.~\ref{sec:local-reference-model}.
\end{assumption}

Theorem~\ref{cr:thm:local-main} uses only clause (i) of
Assumption~\ref{cr:ass:classical}, with the reference-generation cost
$\mathcal C_{\mathrm{ref}}(\Omega)$ kept explicit. Clause (ii) is used in
Proposition~\ref{prop:classical-budget} to bound this cost; dense computations
are allowed with cost polynomial in each neighborhood's Hilbert-space dimension.
Neither clause requires references at the final recovery tolerance;
Eq.~\eqref{cr:eq:theta-p} fixes the required inverse-polynomial accuracy.

\paragraph{Finite-matrix synthesis.}\label{sec:finite-matrix-model}
We use the standard circuit and projected-encoding model
\cite{ShendeBullockMarkov2006,GilyenEtAl2019}. A known contraction on
$s$ sites has a unitary dilation compiled at cost
$\exp(O(s))\poly(\log(1/\zeta))$. Local matrices are Hermitian; taking
their Hermitian part when necessary increases neither support nor norm.
Products, linear combinations, and QSVT use the explicit encodings below.
Only the retained small matrices and short lists are used coherently after
compression. Classical dense arithmetic and every finite-precision
quantum operation are charged in their respective resource budgets.

Choose a fixed $\kappa_{\mathrm{ref}}>0$ such that
\begin{equation}\label{cr:eq:margins}
 J_*:=\max\{\Delta_0,J_{\mathrm{ref}}+2\kappa_{\mathrm{ref}}\}<\lambda.
\end{equation}
Choose a fixed $t_*>1$ with $t_*J_*<\lambda$, and set
\begin{equation}\label{cr:eq:theta-p}
\begin{gathered}
 \theta_{\mathrm{int}}:=1-t_*^{-1}>0,\qquad
 p_{\mathrm{ref}}>6/\theta_{\mathrm{int}},\\
 e_M:=c_e(M+1)^{-p_{\mathrm{ref}}}.
\end{gathered}
\end{equation}
The positive constant $c_e$ is chosen sufficiently small as specified in Lemma~\ref{lem:reference-stability}; it depends only on the fixed bounds and margins.

For the chain in Definition~\ref{cr:def:chain}, write
\begin{equation}\label{cr:eq:petz}
 Q_j=Q^\Omega_{B_j|A_j},\qquad
 V_j=V^\Omega_{B_j|A_j},\qquad
 \PP_j=\PP^\Omega_{B_j|A_j}.
\end{equation}
These use the fixed bases of Definition~\ref{an:def:petz}.
The family notation is convenient for the algebraic identities below;
only $V_m=V^\Omega_{z|Y}$ is compiled as the physical recovery.
In particular, intermediate $V_j$ and the auxiliary $X_j,T_j$ are not
inputs to a subsequent update.

\subsection{Microscopic access and elementary encoding rules}\label{sec:microscopic-model}

The input model specializes PREPARE/SELECT access~\cite{GilyenEtAl2019} to incident interaction terms and uses reversible bookkeeping~\cite{Bennett1973}. Its availability and cost are specified explicitly.

\begin{assumption}[Access to the microscopic Hamiltonian]
\label{ass:microscopic-access}
The bounded-degree adjacency list is supplied explicitly or its neighbor
queries are computable in $\poly(\log N)$ bit operations. Balls of a
specified integer radius can therefore be enumerated by breadth-first
search with polynomial cost in their volume and $\log N$.
Choose a fixed ordering of sites and nonzero interaction labels. Labels with
the same support and type may be combined, so a site has only a bounded
number of incident labels. For every $h_X$, a support list, a bound
$\alpha_{h_X}\ge\norm{h_X}$, and its fixed-dimensional matrix entries can be
computed to $b$ bits in $\poly(\log N,b)$ classical time. Put
\begin{equation}
 g:=\sup_x\sum_{X\ni x}\alpha_{h_X}=O(1),\qquad
 g_x:=\sum_{X\ni x}\alpha_{h_X}.
\end{equation}
For $g>0$, the controlled PREPARE unitary selects interaction terms
containing the specified site and acts as
\begin{equation}\label{eq:incidence}
 |x\rangle|0\rangle\longmapsto |x\rangle
 \left(\sum_{X\ni x}\sqrt{\frac{\alpha_{h_X}}{g}}|X\rangle
       +\sqrt{1-\frac{g_x}{g}}|\perp\rangle\right).
\end{equation}
$|\perp\rangle$ denotes an invalid label. Indexed SELECT applies an $\alpha_{h_X}$-normalized block encoding of $h_X$ and uses a zero projected block on invalid labels.
Both have controlled implementations of gate and ancilla cost
$\poly(\log N,\log(1/\zeta))$ at unitary error $\zeta$.
All nonzero input weights have finite polynomial-bit descriptions.
\end{assumption}

If $g=0$, the target is maximally mixed and each recovery simply appends
maximally entangled site--reference pairs; hence we consider $g>0$ below.

\paragraph{Microscopic encoding and support operations.}
To encode $H_A$ for a supplied nonempty region $A$, prepare $x\in A$
uniformly and apply Eq.~\eqref{eq:incidence}, accepting $X\subseteq A$
only when $x$ is the first site of $X$. The accepted squared amplitude
is $\alpha_{h_X}/(g|A|)$, so SELECT and inverse preparation encode
$H_A/(g|A|)$ at cost $\poly(|A|,\log N,\log(1/\zeta))$.
For the nested-commutator construction in
Sec.~\ref{app:relative-circuits}, supports are stored as sorted site-label
lists padded to the chosen truncation size. Intersection tests,
first-site selection, and unions are computed reversibly while retaining
their inputs and computation records~\cite{Bennett1973}; their costs are
polynomial in the support size and $\log N$. Patch tests may scan the
explicit $M$-site list, at charged cost $\poly(M,\log N)$. The scalar
functions and controlled rotations used below are evaluated reversibly
at cost polynomial in their argument bit lengths and desired precision.

\paragraph{Standard encoding rules.}
We use the product and linear-combination constructions of
Ref.~\cite[Lemmas~52--53]{GilyenEtAl2019}. For encoded operators $A_i$
with normalization factors $\alpha_i$, the factors multiply for a product,
while $\sum_i c_iA_i$ has normalization $\sum_i|c_i|\alpha_i$. Products use fresh projection ancillas and the
intermediate subspaces of Definition~\ref{def:encoding}; the Gram product
needed for the corrections is specified in
Eq.~\eqref{eq:gram-projection}. The error bounds are
Eqs.~\eqref{eq:general-product-error}--\eqref{eq:product-error}, including
$n_{\mathrm{calls}}\zeta$ for $n_{\mathrm{calls}}$ unitary calls of error
at most $\zeta$. QSVT applies bounded polynomials of the required parity
with query cost linear in their degree; even and odd parts are combined
by LCU. We use uniform singular-value amplification with its stated
output margin~\cite[Theorem~30]{GilyenEtAl2019}, allocating input errors
within that margin.

\paragraph{Classical phase synthesis.}
The bounded approximation polynomials used below have explicitly
computable coefficients and the required parity. Their phase sequences
are computed before recovery with classical cost and bit precision
polynomial in the degree $d_{\mathrm{pol}}$ and $\log(1/\xi)$, using
real QSP~\cite[Corollary~10]{GilyenEtAl2019} and the finite-precision
product decomposition of Ref.~\cite[Sec.~4]{Haah2019}.
Rescaling within the approximation budget provides a strict bound below
one when needed. Rotation errors $O(\xi/(d_{\mathrm{pol}}+1))$ contribute
$O(\xi)$ by Eq.~\eqref{eq:product-error}. For the polynomially many
quadrature nodes, precomputation and coherent selection of their phase
sequences also have polynomial cost. The Dyson time register instead
uses the binary arithmetic of Lemma~\ref{lem:time-ordered-lcu}.

\paragraph{Local implementation guarantee.}
The following is the central one-patch result. The reference tolerance
in Eq.~\eqref{cr:eq:theta-p} is fixed before the final quantum accuracy
is chosen. The subsequent sections prove the theorem by constructing
the marginal corrections and then the square-root factors.

\begin{theorem}[Efficient local Petz implementation]\label{cr:thm:local-main}
Under Assumptions~\ref{an:ass:effective}, \ref{an:ass:strip}, \ref{cr:ass:classical}(i), and~\ref{ass:microscopic-access}, assume
\begin{equation}\label{cr:eq:accuracy-input}
 r_{\mathrm{ref}}\le e_M.
\end{equation}
Input-independent randomized preprocessing constructs a unitary
$\mathscr U_m$ for the specified isometry $V_m$ in
Eq.~\eqref{cr:eq:petz}. Let $\mathsf A_m$ be its work register,
and define its initialized action by
\begin{equation}\label{eq:local-initialized-action}
 \widetilde V_m|\psi\rangle
 :=\mathscr U_m
   (|\psi\rangle_{A_m}|0\rangle_{B_m\mathcal E_m\mathsf A_m}).
\end{equation}
Except with probability at most $\fail$,
\begin{equation}\label{eq:local-isometry-accuracy}
 \norm{\widetilde V_m-V_m\otimes|0\rangle_{\mathsf A_m}}
 \le\eta/2.
\end{equation}
Tracing out $\mathcal E_m\mathsf A_m$ defines a channel
$\widetilde\PP_m$ with
$\norm{\widetilde\PP_m-\PP_m}_\diamond\le\eta$.
The quantum gate and work-qubit costs, including independent normalization-estimation
trials, are bounded by
\begin{equation}\label{cr:eq:main-local-cost}
 \poly\!\left(M,m,\log N,\log\frac1\eta,
                         \log\frac1\fail\right).
\end{equation}
The classical cost is $\mathcal C_{\mathrm{ref}}(\Omega)$ plus a
polynomial in the parameters of \eqref{cr:eq:main-local-cost}.
The references require only the precision \eqref{cr:eq:accuracy-input},
independently of $\eta$. The theorem includes a one-site output
$B_m=\{z\}$ and any traced complement $\Omega\setminus(Yz)$,
including the empty complement and $Y=\varnothing$.
\end{theorem}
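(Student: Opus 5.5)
The plan follows the overview of Sec.~\ref{sec:overview:corrections}--\ref{sec:overview:local-implementation}, turned into three encoding stages plus a final amplification. The first step is purely algebraic. We compress each supplied list $\overline L_j$ into a short list $L_j$, by truncation plus weighted sampling. This keeps the strength below $J_{\mathrm{ref}}+2\kappa_{\mathrm{ref}}\le J_*$ and gives $\|L_j-K_j\|\le 2e_M$ with failure probability absorbed into part of $\fail$. We then verify the exact identities. These are the factorization \eqref{m:eq:reference-factorization} with $X_{Yz}=X_{m-1}$ and $X_Y=X_m$ defined through the contraction recursion \eqref{m:eq:contraction-step}, and the Petz-factor formula \eqref{m:eq:central-factorization}. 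Both follow by cancelling adjacent exponentials of references and using $\rho_j\propto e^{-K_j}$. We also show that $\|X_j\|$ is small, say below $1/128$. For this we use Lemma~\ref{an:lem:itlr} and Lemma~\ref{an:lem:relative-evol} applied to $L_{j-1},L_j$: the generator $G_j(t)=\tfrac12\tau^t_{L_j}(L_{j-1}-L_j)$ has norm $O(1)$ uniformly in $M$. In addition, $E_1\cdots E_j$ differs from the exact $e^{-K_0/2}e^{K_j/2}$ by $O(M e_M e^{O(1)})$ after using $t_*>1$ to control $e^{\pm t L}$ perturbations. The choice $p_{\mathrm{ref}}>6/\theta_{\mathrm{int}}$ is designed exactly to make this residual inverse-polynomially small after the accumulated $O(m)$ factors.

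The second stage encodes $E_j$. We expand $G_j(t)$ in nested commutators $\sum_n (t/2)^n\ad^n_{L_j}(L_{j-1}-L_j)/n!$, truncated at order $O(\log(M/\zeta))$, which is justified by the convergent majorant of Lemma~\ref{an:lem:connected}. We realize it as an LCU over overlapping support sequences using Assumption~\ref{ass:microscopic-access} for $K_0$ and the compressed lists for $L_j$. We amplify the $O(M)$ normalization to a constant by uniform singular value amplification, and implement the time-ordered evolution $E_j=\mathcal T\exp(-\int G_j)$ by a truncated Dyson series (Lemma~\ref{lem:time-ordered-lcu}). Since $\|E_j\|\le e^{g_0}$, this gives a constant-normalization encoding at polynomial cost. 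Next we build $W_j$ by appending a maximally entangled pair on $B_j$, so that $W_j^\dagger(\cdot)W_j$ realizes $a_j^{-1}\E_{B_j}[E_j^\dagger(\cdot)E_j]$. The scalar $a_j$ is estimated from success probabilities on fresh test states (Lemma~\ref{lem:normalization-estimation}) to relative accuracy $O(1/m)$, using $O(m^2\log(m/\fail))$ trials. The product of the $W_j$ then encodes $\I+X_A$ as a Gram product with normalization $(1+O(1/m))^m=O(1)$, and subtracting $\I$ by LCU gives $X_A/\alpha_X$ with small constant $\alpha_X$.

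The third stage implements $T_A$ via \eqref{m:eq:integral}. We truncate the $s$-integral to $|s|\le O(\|L\|+\log(1/\zeta))$, where $\|L\|=O(M)$, and use the exponential decay of $\sech$. We apply a positive-weight quadrature with $\poly(M,\log 1/\zeta)$ nodes. At each node, $P_s$ and $\sech(s+L/2)$ are QSVT polynomials in an $O(M)$-normalized encoding of $L_A$, with degree $\poly(M,\log 1/\zeta)$. The factor $(\I+P_sX)^{-1}$ is a Neumann series whose coefficient sum is $\sum_k \alpha_X^k<2$. Then $T^{-1}_{Yz}=T^\dagger(\I+X)$ comes from \eqref{m:eq:condition}, and the middle factor $e^{-L_{Yz}/2}e^{L_Y/2}$ is again the bounded expansional from stage two. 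Inserting these into \eqref{m:eq:central-factorization}, with $aq$ estimated as above, gives an encoding of $V_m$ with constant normalization. Fixed-point (uniform) amplification of this near-isometry yields $\mathscr U_m$. Because errors are controlled in operator norm, including leakage, we get \eqref{eq:local-isometry-accuracy} with all $\zeta$'s set to $\eta/\poly(M,m)$. The diamond bound then follows from the common-reference estimate of Sec.~\ref{sec:composition-errors}, and costs compose polynomially because only a fixed number of nested QSVT/LCU layers occur.

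The main obstacle is the normalization control in the second stage: keeping the encoding normalization of the chained Gram product bounded independently of $m$. This requires the estimated $a_j$ to be accurate to $O(1/m)$ while tolerating statistical failure, and requires that the coarse references (error $e_M$ rather than $\eta$) still give $\|X_A\|$ small enough for the Neumann series. I would first try to prove that the defect $\|W_j^\dagger W_j-\I\|$ is $O(e^{O(1)}Me_M+1/m)$ and telescope it with \eqref{eq:general-product-error}.
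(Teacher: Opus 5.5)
Your overall architecture matches the paper's. You compress the lists, build an LCU for $\tau^t_{L_j}(U_j)$, amplify it to constant normalization and feed it into a truncated Dyson series. You form the maximally entangled contraction $W_j$ with $a_j$ estimated to relative accuracy $O(1/m)$, take the Gram product $\mathbb W_j^\dagger\mathbb W_j-\I$, and use the resolvent integral with QSVT, a Neumann series, positive quadrature and $T^{-1}=T^\dagger(\I+X)$.

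The gap is at the step you yourself flag as the main obstacle, and the justification you give for it would fail. You bound $\|G_j(t)\|$ by applying Lemmas~\ref{an:lem:itlr} and~\ref{an:lem:relative-evol} to $L_{j-1},L_j$. Those estimates need the perturbation to have a decomposition of bounded weighted total sum. For the exact effective Hamiltonians this comes from (E2) through Lemma~\ref{an:lem:relative}. The reference lists, supplied or compressed, satisfy only the strength bound and the operator-norm accuracy of Assumption~\ref{cr:ass:classical}(i), with no (E2)-type stability. Thus $L_j-K_j$ is small only as a sum, not term by term. Consequently $L_{j-1}-L_j$ has weighted mass $O(M)$, and the termwise complex-time bound gives only $\|G_j\|=O(M)$.

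Your second estimate, that $E_1\cdots E_j$ lies within $O(Me_Me^{O(1)})$ of $e^{-K_0/2}e^{K_j/2}$, is worse. Both unnormalized products can have norms exponential in $M$. Nothing in your plan turns $\|L_j-K_j\|\le2e_M$ into a bound on $e^{-tL/2}(L_j-K_j)e^{tL/2}$ other than the termwise $O(M)$ bound or the useless $e^{t\|L\|}\cdot2e_M$. So you have no route to the per-step defect bound you propose to telescope.

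The missing idea is the interpolation of Lemma~\ref{lem:reference-stability}. The function $F(z)=e^{-zL/2}\mathcal D_j^{\mathrm{err}}e^{zL/2}$ is analytic. It is bounded by $2e_M$ on $\operatorname{Re}z=0$, where it is a unitary conjugation, and by $C_1(M+1)$ on $\operatorname{Re}z=\pm t_*$ by \eqref{cr:eq:complex}. The three-lines theorem then gives $\|F(t)\|\le h_M=C_2(M+1)^{1-\theta_{\mathrm{int}}}(2e_M)^{\theta_{\mathrm{int}}}$ for $|t|\le1$. This bound, not a linear $O(Me_M)$ residual, is why $p_{\mathrm{ref}}\theta_{\mathrm{int}}>6$ is imposed.

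Splitting $U_j=U_j^{\mathrm{ex}}+\mathcal D_{j-1}^{\mathrm{err}}-\mathcal D_j^{\mathrm{err}}\otimes\I_{B_j}$ then gives $g_*=O(1)$. More importantly, it gives the multiplicative ordering $e^{-2h_M}\sigma_j\preceq\rho_j\preceq e^{2h_M}\sigma_j$. Operator inequalities survive $\Tr_{B_j}$, and $\Tr_{B_j}\rho_{j-1}=\rho_j$, so $(W_j^0)^\dagger W_j^0$ is proportional to $\I$ up to factors $e^{\pm4h_M}$. With the estimated $a_j$ this yields $\|W_j^\dagger W_j-\I\|\le16h_M+b_{\mathrm{iso}}/16<b_{\mathrm{iso}}=1/(256m)$, with no product comparison and no exponential factor. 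The bound $\|X_j\|\le(1+b_{\mathrm{iso}})^j-1<1/128$ follows.

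Two smaller repairs are also needed.
\begin{itemize}
\item Subtracting $\I$ from the Gram product by LCU gives normalization about $3$, not a small constant. A uniform singular-value amplification exploiting $\|X\|<1/128$ is required to reach $\alpha_X=1/32$. It is this small normalization, not merely the small norm, that keeps the Neumann coefficient sum $\sum_r(\alpha_P\alpha_X)^r$ bounded.
\item $T_Y$ from the quadrature has normalization $O(M+1)$, so the encoded $V_m$ has normalization $\poly(M)$ rather than $O(1)$. This is harmless only if the last step is a singular-vector (sign-function) transformation. Uniform amplification to an $\eta$-accurate isometry would need an output margin below $\eta$ and hence $\poly(1/\eta)$ calls.
\end{itemize}
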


\subsection{Reference compression and imaginary-time stability}
\label{app:references}

The supplied references must be both inexpensive to access and stable
under the nonunitary operations used in the compiler. Compression first
reduces each list to polynomially many small-support matrices while
preserving its accuracy and weighted interaction bound. We then show
that the remaining reference error stays small under imaginary-time
conjugation. The first guarantee supplies cheap term encodings; the
second controls the reference exponential products and, after
normalization, the cumulative Gram corrections.

\paragraph{Compression and quantum access.}
Matrix and scalar concentration~\cite{Tropp2012,Hoeffding1963} are applied
to a weighted sampling of the supplied interactions. The choice of
sampling weights below preserves both operator accuracy and the
per-site interaction bound needed by the imaginary-time estimates.

\begin{lemma}[Compressed references and quantum access]
\label{cr:lem:sparsify}
Under Assumption~\ref{cr:ass:classical}(i), suppose
$r_{\mathrm{ref}}\le e_M$, with the margins and reference tolerance in
Eqs.~\eqref{cr:eq:margins}--\eqref{cr:eq:theta-p}.
Classical randomized preprocessing produces centered references $L_j$,
$j=1,\ldots,m$, such that, with probability at least $1-\fail/4$,
\begin{equation}\label{cr:eq:L-properties}
 \|L_j-K_j\|\le 2e_M,\qquad
 J^{\mathrm{bd}}_{L_j}\le J_*<\lambda.
\end{equation}
Each list has $\poly(M,\log(m/\fail))$ terms, supported on
$O(\log(M+1))$ sites, and $\|L_j\|=O(M)$. We keep $L_0=K_0$ unchanged.
Controlled term encodings, incident-term PREPARE operations of the form
in Eq.~\eqref{eq:incidence}, and support updates have per-call cost
\begin{equation}\label{cr:eq:list-gates}
 \poly\!\left(M,\log N,\log\frac m\fail,\log\frac1\zeta\right)
\end{equation}
for $0<\zeta<1/2$. Generating and scanning the original lists is
charged to $\mathcal C_{\mathrm{ref}}(\Omega)$; the subsequent
small-matrix processing and circuit compilation have polynomial cost.
\end{lemma}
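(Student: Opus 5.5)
The plan is to build $L_j$ by a two-stage procedure on each supplied list $\overline L_j=\overline c_j\I+\sum_{X\in\mathcal I_j}\overline\ell_{j,X}$: deterministic truncation followed by importance sampling. All large contributions are kept exactly; only the many weak ones are sparsified.

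\textbf{Truncation.} First discard every term whose support exceeds $s:=\lceil c\log(M+1)\rceil$ sites. Then discard every term with $\overline a_{j,X}w_{\lambda,\mu}(X)$ below a threshold $\theta=(M+1)^{-p'}$. The supports lie in $A_j\subseteq\Omega$, so there are at most $M$ anchor sites. The discarded large-support mass is then at most $M J_{\mathrm{ref}}e^{-\lambda s}$ in operator norm, because $\|\overline\ell_{j,X}\|\le \overline a_{j,X}\le \overline a_{j,X}w_{\lambda,\mu}(X)e^{-\lambda|X|}$ and summing over anchors gives a factor $M$. Choosing $c$ large makes this loss at most $e_M/4$. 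For the small terms I will not bound the discarded mass by counting them. Instead I fold them into the sampling step as a residual part $\overline L_j^{\mathrm{small}}$ whose per-term norms are all at most $\theta$.

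\textbf{Sampling.} Next, sample $n=\poly(M,\log(m/\fail))$ indices i.i.d. with probabilities $p_X\propto \overline a_{j,X}$ over the retained terms of the residual family. Form the unbiased estimator
\[
\frac1n\sum_k \frac{\overline\ell_{j,X_k}}{p_{X_k}}.
\]
Each summand has norm at most $\sum_X\overline a_{j,X}\le MJ_{\mathrm{ref}}$. Matrix Bernstein~\cite{Tropp2012}, applied on the $q^M$-dimensional space, then gives deviation $O(MJ_{\mathrm{ref}}\sqrt{M\log(q)\log(m/\fail)/n})$. This is at most $e_M/4$ for polynomial $n$, with failure probability $\fail/(8m)$ per $j$.

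The weighted-interaction bound is the delicate part, because a sampled term carries the inflated norm $\overline a_{j,X}/(np_X)\cdot(\text{multiplicity})$. Weighting the sampling by $\overline a_{j,X}w_{\lambda,\mu}(X)$ rather than by $\overline a_{j,X}$ alone makes each sampled term's weighted contribution equal to a fixed quantum $\Sigma/n$, with $\Sigma\le MJ_{\mathrm{ref}}$. The weighted load at a site $x$ is then $\Sigma/n$ times a binomial count with mean at most $J_{\mathrm{ref}}n/\Sigma$. Scalar Hoeffding or Chernoff~\cite{Hoeffding1963}, together with a union bound over the $M$ sites and $m$ lists, keeps every load below $J_{\mathrm{ref}}+\kappa_{\mathrm{ref}}$ once $n\gtrsim \Sigma^2\kappa_{\mathrm{ref}}^{-2}\log(Mm/\fail)$. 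The new norm bounds $a_S$ are taken as the sampled weights, and the large retained terms have their exact bounds. Combined, this gives $J^{\mathrm{bd}}_{L_j}\le J_{\mathrm{ref}}+2\kappa_{\mathrm{ref}}\le J_*$.

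\textbf{Error and the remaining claims.} The total error is
\[
\|L_j-K_j\|\le r_{\mathrm{ref}}+e_M/2+e_M/4\le 2e_M,
\]
after recentering. Recentering subtracts $\tr_{A_j}$ of the perturbation, which costs at most its norm, so the constants are budgeted accordingly. The union bound over $j$ gives total failure probability $\fail/4$. The bound $\|L_j\|=O(M)$ follows from $J^{\mathrm{bd}}\le J_*$ summed over $M$ anchors. For the access costs, each retained term acts on $O(\log(M+1))$ sites, so Finite-matrix synthesis compiles its dilation at cost $e^{O(s)}\poly(\log 1/\zeta)=\poly(M,\log1/\zeta)$. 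The list has polynomial length, so an incident-term PREPARE of the form in Eq.~\eqref{eq:incidence} is a state preparation over polynomially many stored amplitudes, at polynomial cost. Support updates follow the reversible rules of Sec.~\ref{sec:microscopic-model}.

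I expect the main obstacle to be making operator accuracy and per-site weighted strength hold simultaneously. Norm-proportional sampling controls the former, but it can concentrate weight at a site; that is why I commit to weight-proportional sampling plus a per-site Chernoff bound.
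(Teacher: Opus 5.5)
Your proposal is correct and follows essentially the paper's argument. You truncate by support size, sample labels with probability proportional to $\overline a_{j,X}w_{\lambda,\mu}(X)$ so that each draw carries weighted mass $\Sigma/n$, control the operator error by matrix concentration on the $q^M$-dimensional space, bound the sitewise loads by scalar Hoeffding with a union bound, then recenter (at most doubling the error) and union-bound over $j$. The differences are inessential: the paper samples all retained terms instead of keeping above-threshold terms exactly, and it adds a diameter cutoff (dropping terms with $\mu\,\diam(X)$ above a logarithmic threshold, at negligible operator-norm cost) only so that the weights, and hence the stored coefficients, stay polynomially bounded for finite-precision purposes.
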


\begin{proof}
For one supplied list, suppress $j$ and write
$\overline L=c\I+\sum_X\ell_X$, with supplied bounds $a_X\ge\|\ell_X\|$.
Set $e=e_M/4$ and $w_X=\wt(X)$, so
$W_{\mathrm{all}}:=\sum_Xa_Xw_X\le MJ_{\mathrm{ref}}$.
Discard terms with
\[
 |X|>k_c:=\max\!\left\{1,\left\lceil
 \lambda^{-1}\log\frac{16(MJ_{\mathrm{ref}}+1)}e
 \right\rceil\right\},
\]
and, when $\mu>0$, those with
$\mu\diam(X)>\log(16(W_{\mathrm{all}}+1)/e)$.
Each cutoff costs at most $e/16$ in operator norm. The retained set
$\mathcal R$ has support size $O(\log(M+1))$ and weights polynomially
bounded in $M$.
Put $W=\sum_{X\in\mathcal R}a_Xw_X$; if $W\le e/16$, use the centered
zero operator. Otherwise sample $n$ independent labels according to
\begin{equation}\label{cr:eq:sampling}
 \begin{gathered}
 p_X=\frac{a_Xw_X}{W},\qquad
 Y_t=\frac{\ell_{X_t}}{np_{X_t}},\qquad
 \alpha_t=\frac{W}{nw_{X_t}}.
 \end{gathered}
\end{equation}
The mean of $\sum_tY_t$ is the retained interaction sum, and
$\|nY_t\|\le W$. At a site $x$, its sampled weighted coefficient sum is
$(W/n)\sum_t\mathbf1_{\{x\in X_t\}}$, with mean at most
$J_{\mathrm{ref}}$. Matrix Hoeffding~\cite[Theorem~1.3]{Tropp2012} and
scalar Hoeffding~\cite{Hoeffding1963}, respectively, control the operator
error and all sitewise excesses. Taking
\[
 n=O\!\left((1+MJ_{\mathrm{ref}})^2
 (e^{-2}+\kappa_{\mathrm{ref}}^{-2})
 \left[M+\log\frac m\fail\right]\right)
\]
makes the combined failure probability at most $\fail/(4m)$, with
sampling error at most $e/16$ and sitewise excess at most
$\kappa_{\mathrm{ref}}$. Only $\log(q^M)=M\log q$ enters the matrix
concentration bound.

Subtract the normalized trace to center the sampled sum; this at most
doubles its error and leaves the non-scalar weights unchanged.
Finite-probability sampling and matrix rounding are allocated the
remaining tolerance, with conservative norm bounds using a further
$\kappa_{\mathrm{ref}}$ of slack. The resulting error from
$\overline L$ is at most $2e$, and its weighted coefficient bound is at
most $J_{\mathrm{ref}}+2\kappa_{\mathrm{ref}}$. Combining with
$\|\overline L-K_j\|\le e_M$ and taking a union bound over all $m$
lists proves Eq.~\eqref{cr:eq:L-properties}.

Since $e_M$ is inverse-polynomial in $M$, $n$ is polynomial and every
retained matrix acts on $O(\log(M+1))$ sites. Its generic unitary dilation
and synthesis therefore cost $\poly(M,\log(1/\zeta))$
\cite{ShendeBullockMarkov2006}. Explicit multiplexing over the compressed
list and scans of its incident terms give Eq.~\eqref{cr:eq:list-gates}.
The retained matrices, weights, and rotations need only polynomial bit
precision; negligible weights can be rounded within the same budgets.
Finally, centering and the weighted bound imply $\|L_j\|=O(M)$.
\end{proof}

\paragraph{Stability under imaginary-time conjugation.}
Compression alone gives an additive bound on $L_j-K_j$.
The exponential-product circuit and its normalization require this
error to remain small after conjugation by reference evolutions.
The following interpolation argument supplies that stronger control,
without requiring access to the exact interaction coefficients.

The complex-time estimate of
Ref.~\cite[Proposition~2.1]{BluhmCapelPerezHernandez2025}, applied termwise,
gives the following bound. If $L=L^\dagger$ has weighted interaction norm
at most $J<\lambda$ and the specified decomposition of $O$ has weighted
total sum at most $B$, then for real $t$
\begin{equation}\label{cr:eq:complex}
 \|e^{-tL/2}Oe^{tL/2}\|\le\frac{\lambda B}{\lambda-|t|J},
 \qquad |t|J<\lambda.
\end{equation}
The same bound holds at time $t+iu$, independently of $u\in\mathbb R$,
because an imaginary shift is a unitary conjugation. Combined with the
three-lines theorem \cite{Hirschman1952,SutterBertaTomamichel2017}, this
estimate turns the supplied operator-norm accuracy into the stability
needed by the normalization experiments.

\begin{lemma}[Stability of the reference states]\label{lem:reference-stability}
On the successful event of Lemma~\ref{cr:lem:sparsify}, put
$\mathcal D_j^{\mathrm{err}}=L_j-K_j$, where
$\mathcal D_0^{\mathrm{err}}=0$, and
$\sigma_j=e^{-L_j}/\Tr(e^{-L_j})$. Let $L$ be any compressed reference
Hamiltonian used below with weighted coefficient bound at most $J_*$,
with identities appended to put $L$ and $\mathcal D_j^{\mathrm{err}}$
on the same space. For a fixed constant $C_2$,
\begin{equation}\label{cr:eq:h}
 \begin{gathered}
 \sup_{|t|\le1}\|e^{-tL/2}\mathcal D_j^{\mathrm{err}}e^{tL/2}\|
       \le h_M,\\
 h_M:=C_2(M+1)^{1-\theta_{\mathrm{int}}}(2e_M)^{\theta_{\mathrm{int}}}.
 \end{gathered}
\end{equation}
The fixed constant $c_e$ in Eq.~\eqref{cr:eq:theta-p} can be chosen so that
\begin{equation}\label{cr:eq:b-h}
 b_{\mathrm{iso}}:=\frac1{256m},\qquad
 h_M\le b_{\mathrm{iso}}/128\qquad(1\le m\le M).
\end{equation}
The exact normalized marginals then satisfy
\begin{equation}\label{cr:eq:state-order}
 e^{-2h_M}\sigma_j\preceq\rho_j\preceq e^{2h_M}\sigma_j.
\end{equation}
\end{lemma}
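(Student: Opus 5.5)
The plan has three parts: an interpolation (three-lines) bound for \eqref{cr:eq:h}, a parameter choice for \eqref{cr:eq:b-h}, and a Golden--Thompson/operator-inequality argument for \eqref{cr:eq:state-order}.

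\textbf{Interpolation bound \eqref{cr:eq:h}.} Fix $L$ with weighted coefficient bound at most $J_*$, and set $\mathcal D:=\mathcal D_j^{\mathrm{err}}$. Consider the analytic operator-valued function
\[
f(w)=e^{-wL/2}\,\mathcal D\,e^{wL/2}
\]
on the strip $|\operatorname{Re}w|\le t_*$.

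We bound $f$ at three levels.
\begin{itemize}
\item On the imaginary axis $\operatorname{Re}w=0$, conjugation is unitary, so $\|f\|=\|\mathcal D\|\le 2e_M$ by \eqref{cr:eq:L-properties}.
\item On the edges $\operatorname{Re}w=\pm t_*$, write $\mathcal D=L_j-K_j$ as the difference of two specified decompositions. By \eqref{cr:eq:L-properties} and Proposition~\ref{cr:ass:analytic}, each part has weighted total sum $O(M)$: the per-site bound $J_*$ summed over $M$ sites, plus the centered scalar, which is harmless. Then \eqref{cr:eq:complex} at time $t_*$ (using $t_*J_*<\lambda$, and its imaginary-shift invariance) gives $\|f\|\le C(M+1)$ on both edges.
\item Since everything is finite-dimensional, $f$ is bounded on the strip, so Hadamard's three-lines theorem applies to $\langle u,f(w)v\rangle$.
\end{itemize}

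Apply three-lines separately on the half-strips $[0,t_*]$ and $[-t_*,0]$. For $|t|\le1$, the interpolation parameter $|t|/t_*\le 1/t_*=1-\theta_{\mathrm{int}}$. Hence
\[
\|f(t)\|\le (2e_M)^{1-|t|/t_*}\,\bigl(C(M+1)\bigr)^{|t|/t_*}.
\]
When $C(M+1)\ge 2e_M$, which holds after enlarging $C$, the right-hand side is monotone in $|t|$, so its maximum over $|t|\le 1$ is $h_M$ with $C_2=\max(C,1)$.

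\textbf{Parameter choice \eqref{cr:eq:b-h}.} Substituting $e_M=c_e(M+1)^{-p_{\mathrm{ref}}}$ gives
\[
h_M\propto c_e^{\theta_{\mathrm{int}}}(M+1)^{1-\theta_{\mathrm{int}}(1+p_{\mathrm{ref}})}.
\]
Since $p_{\mathrm{ref}}>6/\theta_{\mathrm{int}}$, the exponent is below $-5$. So $h_M\le c\,c_e^{\theta_{\mathrm{int}}}(M+1)^{-5}$, which is at most $1/(2^{15}M)\le b_{\mathrm{iso}}/128$ for $m\le M$, once $c_e$ is a small fixed constant.

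\textbf{Operator inequality \eqref{cr:eq:state-order}.} The goal is to compare $e^{-K_j}$ with $e^{-L_j}$ using $K_j=L_j-\mathcal D$. The naive approach has a problem: the crude bound $e^{-K}\preceq e^{\|\mathcal D\|}e^{-L}$ fails for noncommuting operators. The plan is therefore to use the expansional
\[
F(s)=e^{-s(L-\mathcal D)/2}e^{sL/2},\qquad F'=-F\,G,\qquad G(s)=-\tfrac12\,e^{-sL/2}\mathcal D\,e^{sL/2}.
\]
Here we take $L=L_j$ itself, which is allowed by the lemma's hypothesis on $L$. By \eqref{cr:eq:h}, $\|G(s)\|\le h_M/2$ for $s\in[0,1]$.

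Writing
\[
e^{-K_j}=F(1)\,e^{-L_j}\,F(1)^\dagger=e^{-L_j/2}\,\bigl[E^\dagger E\bigr]\,e^{-L_j/2},\qquad E:=F(1)\,e^{-L_j/2}\ \text{rearranged},
\]
we need $E^\dagger E$ to lie between $e^{\mp h_M}\I$. This requires bounds on both $\|F\|$ and $\|F^{-1}\|$ from the Grönwall argument in Lemma~\ref{an:lem:relative-evol}. But the relevant sandwich is $e^{-L/2}F^\dagger F e^{-L/2}$ versus $e^{-L}$, and this needs $e^{L/2}F^\dagger F e^{L/2}$, which again involves conjugations that \eqref{cr:eq:h} controls only up to $|t|\le1$.

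This factor ordering is the main obstacle. The fallback is to run the same expansional with the symmetric splitting
\[
e^{-(L-\mathcal D)}=e^{-L/2}\,\Phi\,e^{-L/2},
\]
and to derive a differential inequality for $\Phi(s)$ whose generator involves only $e^{\pm sL/2}\mathcal D e^{\mp sL/2}$ with $|s|\le1$. Grönwall then gives $e^{-2h_M}\I\preceq\Phi\preceq e^{2h_M}\I$. Normalizing by traces costs at most another factor in each direction, and this is absorbed into the $2h_M$ exponents: the trace ratio satisfies the same bounds, so the bound should be tracked via $e^{\pm h_M}$ per step.
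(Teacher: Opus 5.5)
Your treatment of \eqref{cr:eq:h} and \eqref{cr:eq:b-h} is the paper's own argument. It uses three lines on the strips with real endpoints $0,\pm t_*$, boundary norms $2e_M$ (from \eqref{cr:eq:L-properties}) and $C_1(M+1)$ (from \eqref{cr:eq:complex}), and then $p_{\mathrm{ref}}\theta_{\mathrm{int}}>6$ with a small fixed $c_e$.

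The gap is in \eqref{cr:eq:state-order}. You set up exactly the right expansional, $F(s)=e^{-sK_j/2}e^{sL_j/2}$ with $F'=-FG$ and $\|G(s)\|\le h_M/2$ on $[0,1]$. You then start from $e^{-K_j}=F(1)e^{-L_j}F(1)^\dagger$, which is the wrong ordering for a Loewner comparison. From it you conclude that you would need control of $e^{L/2}F^\dagger F e^{L/2}$, i.e.\ of conjugations beyond $|t|\le1$. That obstacle is spurious.

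The identity to use is $F(1)^\dagger F(1)=e^{L_j/2}e^{-K_j}e^{L_j/2}$.
\begin{itemize}
\item Gr\"onwall applied to $F'=-FG$ and $(F^{-1})'=GF^{-1}$ gives $\|F(1)\|,\|F(1)^{-1}\|\le e^{h_M/2}$.
\item Hence $e^{-h_M}\I\preceq e^{L_j/2}e^{-K_j}e^{L_j/2}\preceq e^{h_M}\I$.
\item The congruence $X\mapsto e^{-L_j/2}Xe^{-L_j/2}$ preserves the order, so $e^{-h_M}e^{-L_j}\preceq e^{-K_j}\preceq e^{h_M}e^{-L_j}$ with no further conjugation.
\item Taking traces puts $\Tr e^{-K_j}/\Tr e^{-L_j}$ in $[e^{-h_M},e^{h_M}]$.
\item Dividing, with $\rho_j=e^{-K_j}/\Tr e^{-K_j}$, gives exactly the factors $e^{\mp2h_M}$.
\end{itemize}
This is the paper's proof, and it only ever uses conjugations with $0\le t\le1$.

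Your fallback operator $\Phi$ is precisely $F(1)^\dagger F(1)$, and more generally $\Phi(s)=F(s)^\dagger F(s)$. So no separate differential inequality is needed. A direct one would be awkward anyway, since $\Phi'=-(G^\dagger\Phi+\Phi G)$ with non-Hermitian $G$; it is cleanest to go through $F$ and $F^{-1}$ as above.

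As written, the fallback also claims only $e^{-2h_M}\I\preceq\Phi\preceq e^{2h_M}\I$. After trace normalization this yields $e^{\mp4h_M}$, not the stated $e^{\mp2h_M}$, and the closing remark about absorbing the trace factor does not repair this. The third part needs the $e^{\pm h_M}$ bound on the unnormalized Gram operator, which the correct identification of $F(1)^\dagger F(1)$ supplies immediately.
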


\begin{proof}
The analytic decomposition of $K_j$ and the supplied list of $L_j$
give a decomposition of their difference with weighted total sum $O(M)$,
including the centering scalar. This uses only the existence of the
analytic decomposition, not algorithmic access to its coefficients.
Also $\|\mathcal D_j^{\mathrm{err}}\|\le2e_M$ by
Eq.~\eqref{cr:eq:L-properties}. The same bounds hold after embedding in
the common space.

Apply the operator-norm three-lines estimate
\cite[Theorem~3.1, $p_0=p_1=\infty$]{SutterBertaTomamichel2017} to
$F(z)=e^{-zL/2}\mathcal D_j^{\mathrm{err}}e^{zL/2}$ on the strips with
real endpoints $0,\pm t_*$. The boundary norms are at most $2e_M$ and
$C_1(M+1)$ by Eq.~\eqref{cr:eq:complex}. For $0\le t\le1$,
\[
 \|F(\pm t)\|
 \le(2e_M)^{1-t/t_*}[C_1(M+1)]^{t/t_*}
 \le h_M,
\]
after increasing $C_2$ and choosing $c_e$ small enough that
$2e_M\le C_1(M+1)$. Here
$\theta_{\mathrm{int}}=1-1/t_*>0$. Since
$p_{\mathrm{ref}}\theta_{\mathrm{int}}>6$, the power of $M+1$ in $h_M$
is less than $-5$. A sufficiently small fixed $c_e$ therefore gives
Eq.~\eqref{cr:eq:b-h} uniformly for $m\le M$, independently of $\eta$.

For the state comparison, define
$F_j(t)=e^{-tK_j/2}e^{tL_j/2}$. Differentiation gives
\[
 F_j'(t)=-\tfrac12F_j(t)e^{-tL_j/2}(K_j-L_j)e^{tL_j/2}.
\]
Equation~\eqref{cr:eq:h} bounds the generator by $h_M/2$. Its integral
equation and the inverse equation give
$\|F_j(1)\|,\|F_j(1)^{-1}\|\le e^{h_M/2}$. Hence
\[
 e^{-h_M}\I\preceq
 e^{L_j/2}e^{-K_j}e^{L_j/2}
 \preceq e^{h_M}\I.
\]
Congruence by $e^{-L_j/2}$ and normalization of the traces prove
Eq.~\eqref{cr:eq:state-order}.
\end{proof}

\subsection{LCU and time-ordered evolution}
\label{app:relative-circuits}

The first constructive ingredient converts the imaginary-time
expansion into an explicit encoding from the interaction lists.
The analytic majorant is the overlapping-support bound of
Ref.~\cite{BluhmCapelPerezHernandez2025}; the construction below
prepares the required support sequences and phases while preserving
its normalization. Standard LCU~\cite{GilyenEtAl2019} then implements
the resulting sum.

\begin{lemma}[LCU for imaginary-time evolution]\label{lem:imaginary-time-lcu}
Let $L=L^\dagger$ and $U$ have explicit interaction lists with efficiently
implementable controlled term encodings and PREPARE operations over terms
containing a specified site, of the form in Eq.~\eqref{eq:incidence}.
Suppose the non-scalar list of $L$ has weighted coefficient bound at most
$0<J<\lambda$, and the weighted sum of the supplied norm bounds for $U$,
including its scalar term, is at most $B_U>0$. Then
\begin{equation}\label{cr:eq:tauU}
 \tau_L^t(U):=e^{-tL/2}Ue^{tL/2},\qquad 0\le t\le1,
\end{equation}
has a controlled encoding with normalization at most $B_U/(1-J/\lambda)$.
At fixed $J/\lambda<1$, operator error $\zeta\in(0,1/2)$ requires degree
$O(\log((B_U+1)/\zeta))$. The number of list and support operations is
polynomial in this degree and in the maximum support size. Their
bit-precision costs and the input-list access costs are included.
\end{lemma}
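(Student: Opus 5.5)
The approach is to expand $\tau_L^t(U)$ in its nested-commutator (Taylor) series, realize each truncated term as a weighted sum over overlapping support sequences, and assemble that sum by LCU so that the normalization equals the majorant of Lemma~\ref{an:lem:connected}. Since $\tau_L^t(U)=\sum_{n\ge0}\frac{(-t/2)^n}{n!}\ad_L^n(U)$, I will expand $U=c\I+\sum_S U_S$ and $L=\sum_X L_X$; the scalar part of $L$ drops out and $\ad_L(c\I)=0$. For each $S$, the only nonzero contributions to $\ad_L^n(U_S)$ come from sequences $X_1,\dots,X_n$ with $X_1\cap S\ne\varnothing$ and $X_j\cap\mathcal S_{j-1}\ne\varnothing$. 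Each such sequence contributes a nested commutator $[L_{X_n},[\dots,[L_{X_1},U_S]]]$. I will write this as a signed sum of $2^n$ products of term encodings, with norm at most $2^n\prod_j a_{X_j}\,\|U_S\|$ in terms of the supplied bounds.

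The combinatorial weights come directly from Lemma~\ref{an:lem:connected}. Applied with $\Phi^{(j)}=L$ and $J_j=J$, it bounds the total weight at order $n$ by $\frac{(t/2)^n}{n!}2^n e^{\lambda|S|}\frac{n!}{\lambda^n}J^n\|U_S\|\le (tJ/\lambda)^n w(S)\|U_S\|$. Summing over $n$ and $S$, and adding $|c|$, gives the normalization $B_U/(1-tJ/\lambda)\le B_U/(1-J/\lambda)$. Truncating at degree $n_{\max}=O(\log((B_U+1)/\zeta)/\log(\lambda/J))$ costs at most $\zeta/2$.

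The circuit realizing the sum has three stages.
\begin{enumerate}
\item PREPARE first selects $S$ with amplitude $\propto\sqrt{\|U_S\|w(S)}$ (plus the scalar label) and $n$ with amplitude $\propto\sqrt{(tJ/\lambda)^n}$.
\item The support sequence is then generated sequentially, mimicking the backward summation in the proof of Lemma~\ref{an:lem:connected}. At step $j$, a site $x$ of the current union $\mathcal S_{j-1}$ is chosen uniformly, with amplitude $\sqrt{1/|\mathcal S_{j-1}|}$. Next, the incident-term PREPARE of Eq.~\eqref{eq:incidence} for $L$ selects $X_j\ni x$ with amplitude $\sqrt{a_{X_j}e^{\lambda|X_j|}/J}$, where the site-incident weighting uses the bound $J$. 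Finally, the union is updated reversibly, with the factor $e^{\lambda|X_j|}$ compensated through the $e^{-\lambda|\mathcal S_n|}$ bookkeeping of Eq.~\eqref{an:eq:connected-proof}.
\item SELECT applies the commutator-sign and product of term encodings controlled on the label register.
\end{enumerate}
The union sizes are at most $|S|+n_{\max}\cdot(\text{max support})$, so all support arithmetic costs are polynomial in the degree and support size, as Assumption~\ref{ass:microscopic-access} permits.

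The main obstacle is matching amplitudes to the majorant without overcounting. A given sequence can be generated through several choices of the anchor site $x$; the multiplicity equals $|X_j\cap\mathcal S_{j-1}|$, and this is exactly why the majorant divides by $|\mathcal S_{j-1}|$. A second difficulty is that $e^{-\lambda|\mathcal S_n|}\le n!\lambda^{-n}\prod|\mathcal S_{j-1}|^{-1}$ is only an inequality, so the factor $n!\lambda^{-n}\prod_j|\mathcal S_{j-1}|^{-1}e^{\lambda|\mathcal S_n|-\lambda|S|}\ge1$ must be absorbed. To handle both, I will let the coefficient of each generated path be $\prod_j |X_j\cap\mathcal S_{j-1}|^{-1}$ times the true coefficient, implemented by a reversible controlled rotation that computes it from the stored supports. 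Every such rescaling factor is at most one, so it is implemented as a subnormalized rotation and never increases the normalization. The result is that each distinct sequence is produced with total amplitude product equal to its true weight, and the normalization is bounded by the majorant. Precision of the rotations and of the term encodings is allocated $\zeta/4$ via Eq.~\eqref{eq:product-error}, giving polynomial cost in the degree, the support size, and the precision logarithms.
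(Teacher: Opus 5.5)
Your construction is essentially the paper's. You expand $\tau_L^t(U)$ in nested commutators, grow the support sequence by picking a uniform site of $\mathcal S_{j-1}$ and an incident term of $L$ with weight $a_Xe^{\lambda|X|}/J$, and remove the slack of Lemma~\ref{an:lem:connected} with a computable rotation, so that the LCU normalization equals the majorant $B_U\sum_n(tJ/\lambda)^n$. The one structural difference is how you treat the anchor multiplicity. You keep all $\prod_j|X_j\cap\mathcal S_{j-1}|$ anchor choices and divide each by that count. The paper instead keeps a proposal only when the chosen site is the first site of $X_r\cap\mathcal S_{r-1}$. Both choices give each sequence exactly its true coefficient.

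There is, however, an error in the one nontrivial computation of the proof, namely the slack factor. Your PREPARE emits $e^{\lambda|S|}$ when selecting $S$, and $e^{\lambda|X_j|}$ for the full support, overlaps included, at every extension. The pre-rejection coefficient of a path therefore exceeds its target share by
\[
 \frac{n!\,e^{\lambda(|S|+\sum_{j=1}^n|X_j|)}}{\lambda^n\prod_{j=1}^n|\mathcal S_{j-1}|}\,\prod_{j=1}^n|X_j\cap\mathcal S_{j-1}|,
\]
not by the factor $n!\lambda^{-n}\prod_j|\mathcal S_{j-1}|^{-1}e^{\lambda|\mathcal S_n|-\lambda|S|}$ you wrote. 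Your expression is not always at least one: for $n=1$ and $X_1\subseteq S$ it equals $1/(\lambda|S|)$, which is below one once $\lambda|S|>1$. Even where your factor is at least one, rotating by its reciprocal leaves each sequence with an extra factor $e^{\lambda(2|S|+\sum_j|X_j|-|\mathcal S_n|)}>1$. The block would then not be proportional to $\tau_L^t(U)$.

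Bookkeeping through $e^{-\lambda|\mathcal S_n|}$ cannot repair this. Each $X_j$ meets $\mathcal S_{j-1}$, so $|\mathcal S_n|<|S|+\sum_j|X_j|$ for every $n\ge1$. The rotation must use exactly the reciprocal of the displayed ratio. That reciprocal is at most one because $|\mathcal S_{j-1}|\le|S|+\sum_j|X_j|$ and $e^x\ge x^n/n!$. Apart from your multiplicity factor, it is precisely the paper's $p_{\mathrm{acc}}$ in Eq.~\eqref{cr:eq:accept}. With this correction your argument goes through as stated.
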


\begin{proof}
Write the non-scalar terms as $L=\sum_X\ell_X+c_L\I$ and
$U=\sum_Xu_X+c_U\I$, with supplied normalization factors
$a_X^{(L)}\ge\norm{\ell_X}$ and $a_X^{(U)}\ge\norm{u_X}$.
Here an interaction label includes its support and, if necessary, a tag
for distinct terms with the same support. The scalar $c_L$ has no
commutator, and $c_U$ contributes only at degree zero. Their direct
encodings handle the purely scalar cases; a zero operator is encoded
by a zero block with a positive normalization.

For the non-scalar part, choose $X_0$ with squared amplitude
$a_{X_0}^{(U)}e^{\lambda|X_0|}/B_U$, assigning any unused probability
to an invalid label. Set $\mathcal S_0=X_0$.
At extension $r$, choose a uniform site in $\mathcal S_{r-1}$ and use
the PREPARE operation with weights $a_X^{(L)}e^{\lambda|X|}$ and
denominator $J$ to propose a term $X_r$ of $L$. Retain the proposal only if the chosen site is the first
site of $X_r\cap\mathcal S_{r-1}$. Choose the left or right term of
the commutator with probability $1/2$, and compute
$\mathcal S_r=\mathcal S_{r-1}\cup X_r$ reversibly while retaining the
previous support. Put the minus sign for a right multiplication in
SELECT. For each sequence $\boldsymbol X=(X_0,\ldots,X_n)$, abbreviate
\[
 w(\boldsymbol X):=a_{X_0}^{(U)}\prod_{r=1}^n a_{X_r}^{(L)}.
\]
Before a final rejection step its squared amplitude is
\begin{equation}
 \frac{w(\boldsymbol X)e^{\lambda\sum_{r=0}^n|X_r|}}
 {B_U(2J)^n\prod_{r=1}^n|\mathcal S_{r-1}|}.
\end{equation}
The computable acceptance probability is
\begin{equation}\label{cr:eq:accept}
 p_{\mathrm{acc}}(\boldsymbol X)=
 \frac{\lambda^n\prod_{r=1}^n|\mathcal S_{r-1}|}
 {n!e^{\lambda\sum_{r=0}^n|X_r|}}\le1.
\end{equation}
Indeed, $|\mathcal S_{r-1}|\le\sum_{r=0}^n|X_r|$, and
$e^x\ge x^n/n!$ for $x\ge0$. The accepted squared amplitude is therefore
\[
 \frac{w(\boldsymbol X)}{B_U n!(2J/\lambda)^n}.
\]
SELECT multiplies the term encodings in the exact order of the selected
nested commutator, using separate projection ancillas for its factors
as in the product construction of Ref.~\cite[Lemma~53]{GilyenEtAl2019}. Invalid labels select a zero block.
Undoing the support computations and label preparation produces an
encoding of $\ad_L^n(U)$ with normalization
\begin{equation}\label{eq:commutator-normalization}
 \alpha_n=B_U n!(2J/\lambda)^n.
\end{equation}
The coefficient $(-t/2)^n/n!$ in the expansion of $\tau_L^t(U)$ cancels
the factorial. Its sign is implemented on the degree label in SELECT.
The total LCU weight is bounded by
$B_U\sum_{n\ge0}(tJ/\lambda)^n$, and the geometric tail gives the stated
cutoff. The degree labels, support sequences, and acceptance probabilities
have polynomial-size descriptions. Finite-precision preparation and
arithmetic use the assumed reversible routines. Diameter weights can be
dropped from the preparation probabilities, since doing so preserves the
upper bounds $J$ and $B_U$.
\end{proof}

We next use the truncated-Dyson block encoding
\cite{LowWiebe2018,KieferovaSchererBerry2019}, including its extension
to general matrix generators by Berry and Costa
\cite[Sec.~3.1]{BerryCosta2024}. The form needed here uses the unit
interval, a uniform binary clock, and earlier times on the left.

\begin{lemma}[Time-ordered encoding on the unit interval]
\label{lem:time-ordered-lcu}
Let $G:[0,1]\to\mathcal B(\mathcal H)$ be continuously differentiable,
with coherently controlled encodings of uniform normalization
$0<\alpha_G\le C_0$ for a fixed $C_0$, available at binary-specified
times to every desired accuracy. Let
$D_G\ge\sup_t\norm{G'(t)}$. For $0<\zeta<1/2$, the solution
$F_G'=-F_GG$, $F_G(0)=\I$, admits an error-$\zeta$ encoding of
$F_G(1)$ with normalization at most $e^{C_0}$. It uses
$O(\log(1/\zeta))$ generator calls and additional gates polynomial in
$\log(D_G+1)$ and $\log(1/\zeta)$, including binary clock arithmetic
and finite-precision synthesis. The generator-call costs are counted
separately.
\end{lemma}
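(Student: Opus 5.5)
The plan is a truncated Dyson series with a binary time grid, as in Refs.~\cite{LowWiebe2018,KieferovaSchererBerry2019,BerryCosta2024}, adapted to right-multiplication and to $\alpha_G\le C_0$. Iterating the integral equation $F_G(t)=\I-\int_0^tF_G(s)G(s)\,ds$ gives
\[
 F_G(1)=\sum_{n\ge0}(-1)^n\int_{0\le s_1\le\cdots\le s_n\le1}G(s_1)G(s_2)\cdots G(s_n)\,ds_1\cdots ds_n .
\]
The earliest time sits on the left, and this is the ordering named in the statement. The $n$th term has norm at most $\alpha_G^n/n!$. Truncating at order $K$ therefore leaves a tail of at most $\sum_{n>K}C_0^n/n!\le\zeta/3$ once $K=O(\log(1/\zeta))$. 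The constant $C_0$ is absorbed into the $O(\cdot)$.

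Next, discretize each simplex integral with a uniform grid of $2^b$ midpoints. Because $G$ is $C^1$ with $\|G'\|\le D_G$, replacing $G(s_i)$ by $G$ at grid points costs $O(nC_0^{n-1}D_G2^{-b})$ per product. The coincident-time and ordering corrections near the simplex boundary cost $O(n^2C_0^n2^{-b})$ per term. Summed with the weights $1/n!$, the discretization error is $O((D_G+1)e^{C_0}K^2\,2^{-b})$, so $b=O(\log(D_G+1)+\log(1/\zeta))$ suffices. This is where the $\log(D_G+1)$ gate dependence enters.

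The LCU is built from the following registers and steps.
\begin{enumerate}
\item An order register holds $n\le K$ with amplitudes proportional to $\sqrt{\alpha_G^n/n!}$. The sign $(-1)^n$ is applied by a phase on this label.
\item Clock registers hold $n$ uniform $b$-bit times. Reversible comparison arithmetic sorts them and records the permutation; alternatively, uniform sampling followed by sorting supplies the $1/n!$ simplex volume automatically. Either way the cost is $\poly(K,b)$ gates.
\item Controlled SELECT calls apply the generator encodings $G(s_1),\dots,G(s_n)$ in sorted order, each with its own projection ancillas as in \cite[Lemma~53]{GilyenEtAl2019}.
\end{enumerate}
The resulting normalization is $\sum_{n\le K}\alpha_G^n/n!\le e^{C_0}$. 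Each generator encoding is called at accuracy $\zeta/(3Ke^{C_0})$, and the total generator error is then controlled by Eq.~\eqref{eq:general-product-error}. In total there are $K=O(\log(1/\zeta))$ generator calls, each controlled on the order register. Finite-precision synthesis of the order amplitudes costs $\poly(\log(1/\zeta))$.

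The main obstacle is making the clock ordering and the discretization bound rigorous without the normalization degrading past $e^{C_0}$. Uncomputing the sorting garbage must leave no junk entangled with the projected block. I would handle this with Bennett-style uncomputation of the sort after SELECT, keeping the clock values only in registers that are reversed. I would also bound the boundary and tie terms by the count of grid tuples with coincident entries, which is $O(n^2 2^{-b})$ of the total.
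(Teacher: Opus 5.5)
Your proposal is correct and is essentially the paper's construction: a truncated Dyson series with earlier times on the left, a uniform power-of-two clock with midpoints, reversible sorting, normalization $\sum_{r\le K}\alpha_G^r/r!\le e^{C_0}$, and $O(\log(1/\zeta))$ generator calls. The only difference is the discretization bound. The paper compares $F_G$ with the exact evolution under the midpoint step generator $G_{n_t}$ by one Duhamel estimate, $\le e^{2C_0}D_G/n_t$, and notes that the sorted uniform-clock sum is exactly that step evolution's Dyson series, with repeated labels receiving the correct multiplicity weights. So your separate tie/boundary correction is in fact zero, and the obstacle you flag about coincident times disappears.
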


\begin{proof}
Choose the least power-of-two number $n_t\ge\max\{1,4e^{2C_0}D_G/\zeta\}$
of uniform cells, with midpoints $\xi_\ell=(\ell+1/2)/n_t$.
Hadamards prepare the uniform clock on $\log n_t/\log2$ qubits;
reversible arithmetic computes each $\xi_\ell$. For the step generator
$G_{n_t}(t)=G(\xi_\ell)$, the order-$k_{\mathrm D}$ polynomial is
\begin{equation}\label{eq:step-simplex}
 \sum_{r=0}^{k_{\mathrm D}}\frac{(-1)^r}{r!n_t^r}
 \sum_{\ell_1,\ldots,\ell_r=0}^{n_t-1}
 \mathcal T_{\boldsymbol\ell}
 [G(\xi_{\ell_1})\cdots G(\xi_{\ell_r})],
\end{equation}
where $\mathcal T_{\boldsymbol\ell}$ orders the labels increasingly,
with earlier times on the left. Repeated labels have the same
multiplicity weights as the ordered integrals. Preparing independent
clocks and reversibly sorting them implements this polynomial by the
cited Dyson construction, with $O(k_{\mathrm D})$ generator calls and
normalization $\sum_{r=0}^{k_{\mathrm D}}\alpha_G^r/r!\le e^{C_0}$.
The factorial tail is at most $\zeta/4$ for
$k_{\mathrm D}=O_{C_0}(\log(1/\zeta))$.

Variation of constants for the right-ordered, possibly nonunitary
solutions gives
\begin{equation}\label{eq:duhamel-general}
\begin{aligned}
 \norm{F_G(1)-F_{G_{n_t}}(1)}
 &\le e^{2C_0}\int_0^1\norm{G(t)-G_{n_t}(t)}\,dt\\
 &\le\frac{e^{2C_0}D_G}{n_t}.
\end{aligned}
\end{equation}
This is at most $\zeta/4$. The remaining error is allocated to the
controlled encodings and elementary gates using the product bound.
Only the clock bit length enters the additional cost.
\end{proof}

For the reference exponential products, the important step is to
reduce the generator normalization before using the Dyson circuit.
The following estimate performs this reduction for the actual
one-site reference changes.

\begin{proposition}[Products of imaginary-time evolutions]\label{cr:prop:E}
Under the successful-event assumptions of Lemma~\ref{cr:lem:sparsify}, define
\begin{equation}\label{cr:eq:E}
 U_j=L_{j-1}-L_j\otimes\I_{B_j},\qquad
 E_j(t)=e^{-tL_{j-1}/2}e^{t(L_j\otimes\I_{B_j})/2}.
\end{equation}
Each $E_j(1)$ has a constant-normalization encoding with error $\zeta$ and quantum gate cost polynomial in $M,m,\log N,\log(1/\zeta),\log(1/\fail)$.  Its true norm and inverse norm are bounded by a fixed constant.  This construction does not call a high-precision circuit for any $K_i$, marginal state, or correction.
\end{proposition}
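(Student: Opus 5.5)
The plan is to realize $E_j(1)$ as the solution $F_G(1)$ of the ordinary differential equation in Lemma~\ref{lem:time-ordered-lcu}, with generator
\[
 G_j(t)=\tfrac12\tau^t_{L_j\otimes\I_{B_j}}(U_j),
\]
where $U_j=L_{j-1}-L_j\otimes\I_{B_j}$. Differentiation gives $E_j'(t)=-E_j(t)G_j(t)$ and $E_j(0)=\I$, exactly as in Eq.~\eqref{an:eq:ode}. The generator will be encoded by Lemma~\ref{lem:imaginary-time-lcu} applied to the compressed list of $L_j$, whose weighted bound is $J_*<\lambda$ by Lemma~\ref{cr:lem:sparsify}. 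The Dyson construction then produces $E_j(1)$ with normalization at most $e^{C_0}$, provided the generator normalization is a fixed constant $C_0$.

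The main obstacle is that the naive list for $U_j$ has weighted total sum $O(M)$, since $L_{j-1}$ and $L_j$ are separately compressed and need not share terms. A direct LCU encoding would therefore have normalization $O(M)$ and Dyson normalization $e^{O(M)}$. I would handle this by splitting
\[
 U_j=U_j^{\mathrm{ex}}+\mathcal D_{j-1}^{\mathrm{err}}-\mathcal D_j^{\mathrm{err}}\otimes\I_{B_j},
\]
where $U_j^{\mathrm{ex}}$ is defined in Eq.~\eqref{cr:eq:exact-U} and $\mathcal D^{\mathrm{err}}$ in Lemma~\ref{lem:reference-stability}. By Proposition~\ref{cr:ass:analytic} and Eq.~\eqref{cr:eq:complex}, the conjugated exact part has norm at most $\lambda\cdot 2u_0b_*/(\lambda-J_*)$. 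By Eq.~\eqref{cr:eq:h}, the conjugated error parts are at most $2h_M$. Hence $\sup_t\|G_j(t)\|\le g_*$ for a fixed constant $g_*$.

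Turning this norm bound into an encoding with constant normalization requires care. I would first encode $G_j(t)$ via Lemma~\ref{lem:imaginary-time-lcu}, with normalization $O(M)/(1-J_*/\lambda)$ and degree $O(\log(M/\zeta))$. I would then apply uniform singular-value amplification \cite[Theorem~30]{GilyenEtAl2019} with gain $O(M)$ to reach normalization $C_0:=2g_*$. This is legitimate because the true norm is at most $g_*$, which lies within the required margin. The amplification costs $\poly(M,\log(1/\zeta))$ queries, and Hermitian symmetrization is handled as in Definition~\ref{def:encoding}. Time control enters through the binary clock: the degree-$n$ coefficient $(-t/2)^n/n!$ is applied as a controlled rotation computed reversibly from $t$.

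The derivative bound $D_G$ needed by Lemma~\ref{lem:time-ordered-lcu} comes from $G_j'(t)=-\tfrac14\tau^t_{L_j}([L_j,U_j])$. Its norm is crudely $O(M^2)$, since $\|L_j\|=O(M)$ and $\|G_j\|=O(1)$; this enters the cost only logarithmically.

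The norm bounds follow from the same ODE. Using the integral and inverse equations as in Lemma~\ref{an:lem:relative-evol}, one gets $\|E_j(1)\|,\|E_j(1)^{-1}\|\le e^{g_*}$. The $j=1$ case is identical with $L_0=K_0$ exact, where $\mathcal D_0^{\mathrm{err}}=0$, and $K_0$ is encoded via Assumption~\ref{ass:microscopic-access}.

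For errors, the product bound \eqref{eq:product-error} collects the generator-call, amplification, and synthesis errors. The cost is then polynomial in $M$, $m$, $\log N$, $\log(1/\zeta)$, and $\log(1/\fail)$, the last entering through the list sizes in Eq.~\eqref{cr:eq:list-gates}. Only the interaction lists are used coherently, so no high-precision circuit for any $K_i$, marginal, or correction is called.
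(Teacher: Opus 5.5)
Your proposal follows the paper's proof essentially step for step: the same differential equation $E_j'=-E_jG_j$, the splitting $U_j=U_j^{\mathrm{ex}}+\mathcal D_{j-1}^{\mathrm{err}}-\mathcal D_j^{\mathrm{err}}\otimes\I_{B_j}$ to obtain $\|G_j(t)\|\le g_*$, uniform singular-value amplification of the $O(M)$-normalized LCU encoding down to a constant before the Dyson circuit, and the integral equations giving $\|E_j(1)\|,\|E_j(1)^{-1}\|\le e^{g_*}$. One small correction: $G_j(t)$ is a similarity transform of $U_j$ and is not Hermitian, so the Hermitian symmetrization of Definition~\ref{def:encoding} must not be applied to it; it is also unnecessary, because singular-value amplification and the Dyson construction work for general blocks. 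Your derivative bound can also be sharpened to $\|G_j'\|=\tfrac12\|[L_j,G_j]\|=O(M)$.
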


\begin{proof}
Differentiate $E_j(t)$ to obtain $E_j'(t)=-E_j(t)G_j(t)$, where
\begin{equation}\label{cr:eq:generator}
 G_j(t)=\tfrac12e^{-tL_j/2}U_je^{tL_j/2}.
\end{equation}
The two supplied lists have weighted total sum $O(M)$, so
Lemma~\ref{lem:imaginary-time-lcu} initially encodes $G_j(t)$ with
normalization $O(M)$. For its true norm, use
\[
 U_j=U_j^{\mathrm{ex}}+\mathcal D_{j-1}^{\mathrm{err}}
                    -\mathcal D_j^{\mathrm{err}}\otimes\I_{B_j}.
\]
Proposition~\ref{cr:ass:analytic} and
Eqs.~\eqref{cr:eq:complex}, \eqref{cr:eq:h} give
\begin{equation}\label{cr:eq:g-bound}
 \sup_{0\le t\le1}\|G_j(t)\|
 \le\frac{\lambda u_0b_*}{\lambda-J_*}+h_M=:g_*.
\end{equation}
Fix a constant upper bound for $g_*$ and apply uniform singular-value
amplification~\cite[Theorem~30]{GilyenEtAl2019} to normalization
$2(g_*+1)$, with fixed output slack and $\poly(M,\log(1/\zeta))$ calls.
This amplification precedes the time-ordered construction.

Since $\|G'_j(t)\|\le\|L_j\|\|G_j(t)\|=O(M)$,
Lemma~\ref{lem:time-ordered-lcu} applies with a power-of-two clock of
$n_{\mathrm{clk}}=\poly(M,1/\zeta)$ midpoint cells, requiring only
$O(\log M+\log(1/\zeta))$ clock qubits. The commutator circuit accepts
this binary time coherently. Its order-$r$ Dyson weight is at most
$[2(g_*+1)]^r/r!$, giving total normalization at most $e^{2(g_*+1)}$
and logarithmic truncation order. The explicit clock preparation,
reversible sorting, and Duhamel bound in that lemma include all time
arithmetic and discretization errors without a table of time cells.
Finally, the integral equations give
$\|E_j(1)\|,\|E_j(1)^{-1}\|\le e^{g_*}$.
\end{proof}

\subsection{Normalization estimates}

The rectangular one-step map appends a maximally entangled pair on the
deleted site and its reference. Contraction with that pair realizes the
normalized partial trace, allowing the cumulative Gram operator to retain
the exact marginal information without preparing or tracing a dense
Gibbs state. The scalar measurements below use independent inputs.

\label{app:normalization-estimation}

For each $j=1,\ldots,m$, append a normalized maximally entangled pair
on $B_j\mathcal E_j$ and apply $E_j(1)$ to the physical registers. This defines
\begin{equation}\label{cr:eq:W0}
 W_j^0|\psi\rangle=\frac1{\sqrt{d_{B_j}}}
 \sum_{b=1}^{d_{B_j}} E_j(1)(|\psi\rangle\otimes|b\rangle_{B_j})
                       \otimes|b\rangle_{\mathcal E_j}.
\end{equation}
For the positive estimate $a_j$ of
$\trn{A_j}((W_j^0)^\dagger W_j^0)$ obtained by the independent
measurements in Lemma~\ref{lem:normalization-estimation}, put
\begin{equation}\label{cr:eq:W-S}
 W_j=a_j^{-1/2}W_j^0,\qquad S_j=W_j^\dagger W_j-\I.
\end{equation}
Here $W_j:\mathcal H_{A_j}\to
\mathcal H_{A_{j-1}}\otimes\mathcal E_j$, and $S_j$ is an analytical
measure of its deviation from the isometry condition.
The following independent experiments determine $a_j$; their accuracy
controls $S_j$ without measuring the eventual recovery input.
\begin{lemma}[Normalization estimates and isometry bounds]
\label{lem:normalization-estimation}
Under the successful-event assumptions of Lemma~\ref{cr:lem:sparsify},
use the maps in \eqref{cr:eq:W0}--\eqref{cr:eq:W-S} and
$b_{\mathrm{iso}}=1/(256m)$. With total probability at least $1-\fail/4$,
independent experiments return positive finite numbers $a_j$ such that
\begin{equation}\label{eq:isometry-deviation}
 \norm{S_j}\le b_{\mathrm{iso}}.
\end{equation}
They require $\poly(m,\log(m/\fail))$ trials per step. An encoding of each
$W_j$ with normalization
\begin{equation}\label{eq:map-normalizations}
 \alpha_W=1+2b_{\mathrm{iso}}
\end{equation}
can then be constructed to operator error $\zeta$ at quantum and classical
compilation cost polynomial in
$M,m,\log N,\log(1/\zeta),\log(1/\fail)$.
\end{lemma}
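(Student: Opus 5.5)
The plan is to show first that $(W_j^0)^\dagger W_j^0$ is, up to a small relative error, a scalar multiple of $\I_{A_j}$, with the error controlled by $h_M$. Then a relative-accuracy estimate of that scalar from independent trials gives $a_j$, and amplification gives the stated encoding.

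\emph{Step 1: the Gram operator.} Contracting the maximally entangled pair in \eqref{cr:eq:W0} gives
\[
(W_j^0)^\dagger W_j^0=\E_{B_j}\!\left[E_j(1)^\dagger E_j(1)\right].
\]
Using \eqref{cr:eq:E}, this equals $e^{L_j/2}\,\E_{B_j}\!\left[e^{-L_{j-1}}\right]e^{L_j/2}$.
\begin{itemize}
\item By \eqref{cr:eq:state-order} for index $j-1$, we have $e^{-L_{j-1}}=\Tr(e^{-L_{j-1}})\,\sigma_{j-1}$, and $\sigma_{j-1}$ lies between $e^{-2h_M}\rho_{j-1}$ and $e^{2h_M}\rho_{j-1}$.
\item The partial trace over $B_j$ preserves the Loewner order and sends $\rho_{j-1}$ to $\rho_j$.
\item Applying \eqref{cr:eq:state-order} again, at index $j$, replaces $\rho_j$ by $\sigma_j\propto e^{-L_j}$ at a further factor $e^{\pm2h_M}$. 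For $j=1$ the first comparison is exact, since $L_0=K_0$.
\item Congruence by $e^{L_j/2}$ then yields
\[
c_j e^{-4h_M}\I\preceq (W_j^0)^\dagger W_j^0\preceq c_je^{4h_M}\I,
\qquad
c_j=\frac{\Tr e^{-L_{j-1}}}{d_{B_j}\Tr e^{-L_j}}.
\]
\end{itemize}
Hence $\trn{A_j}((W_j^0)^\dagger W_j^0)$ lies in the same window. The relative spread of $(W_j^0)^\dagger W_j^0$ about it is at most $e^{8h_M}-1\le b_{\mathrm{iso}}/8$, by \eqref{cr:eq:b-h}.

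\emph{Step 2: estimating $a_j$.} Proposition~\ref{cr:prop:E} gives an encoding of $E_j(1)$ with constant normalization $\alpha_E\le e^{2(g_*+1)}$. Together with the pair preparation, this encodes $W_j^0$ with the same normalization. The estimation procedure is as follows.
\begin{itemize}
\item Feed in half of a fresh maximally entangled state on $A_j$ and a copy register. Apply the encoding and measure the projection onto the output image.
\item The success probability is then $\trn{A_j}((W_j^0)^\dagger W_j^0)/\alpha_E^2$, up to the synthesis error $\zeta_0$.
\item This probability is bounded below by a constant. Indeed, $\|E_j(1)^{-1}\|\le e^{g_*}$ gives $E_j^\dagger E_j\succeq e^{-2g_*}\I$, hence $(W_j^0)^\dagger W_j^0\succeq e^{-2g_*}\I$.
\end{itemize}
Hoeffding's inequality with $O(b_{\mathrm{iso}}^{-2}\log(m/\fail))=\poly(m,\log(m/\fail))$ trials therefore gives relative accuracy $b_{\mathrm{iso}}/4$, with failure probability $\fail/(4m)$. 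The synthesis error $\zeta_0$ is taken as a small constant multiple of $b_{\mathrm{iso}}$, so it adds only polynomial cost. A union bound over $j$ gives total failure probability at most $\fail/4$. Combining this relative accuracy with Step 1 gives $\norm{S_j}\le b_{\mathrm{iso}}/8+b_{\mathrm{iso}}/4+O(b_{\mathrm{iso}}^2)\le b_{\mathrm{iso}}$.

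\emph{Step 3: the encoding of $W_j$.} By Step 2, the singular values of $W_j=a_j^{-1/2}W_j^0$ are at most $(1+b_{\mathrm{iso}})^{1/2}\le1+b_{\mathrm{iso}}/2$. This leaves a margin of order $b_{\mathrm{iso}}$ below the target normalization $1+2b_{\mathrm{iso}}$. Uniform singular-value amplification \cite[Theorem~30]{GilyenEtAl2019} takes the constant-normalization encoding of $a_j^{-1/2}W_j^0$ to normalization $\alpha_W$. Its polynomial degree is $O(b_{\mathrm{iso}}^{-1}\log(1/\zeta))=O(m\log(1/\zeta))$, and the input error is allocated within the margin. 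The cost is polynomial in $M,m,\log N,\log(1/\zeta),\log(1/\fail)$, using Proposition~\ref{cr:prop:E} and Lemma~\ref{cr:lem:sparsify}.

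The main obstacle is the bookkeeping in Steps 2 and 3. The estimate must hold in \emph{relative} accuracy of order $1/m$, while the amplification margin is also of order $1/m$. The encoding error of $W_j^0$ shifts the measured probability, and the amplification needs a strict margin. I would fix $\zeta_0=b_{\mathrm{iso}}e^{-2g_*}\alpha_E^{-2}/64$ for the estimation circuits and check that all three contributions fit inside $b_{\mathrm{iso}}$ with room left for the amplification slack.
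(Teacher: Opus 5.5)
Your proposal is correct and follows the paper's proof essentially step for step. The matching steps are: the contraction identity for $(W_j^0)^\dagger W_j^0$; the comparison chain $\sigma_{j-1}\to\rho_{j-1}\to\rho_j\to\sigma_j$, which gives the $e^{\pm4h_M}$ window and the relative spread $e^{8h_M}-1$ about the normalized-trace mean; the success-probability estimate from a fresh maximally entangled input, bounded below by a constant via $\|E_j(1)^{-1}\|\le e^{g_*}$; Hoeffding's inequality with a union bound over $j$; and uniform singular-value amplification with an $O(b_{\mathrm{iso}})$ margin. The differences are only in constant allocations (your relative tolerance $b_{\mathrm{iso}}/4$ versus the paper's $b_{\mathrm{iso}}/16$) and in the paper's extra range check on the stored $a_j$ with an identity fallback, which the lemma itself does not need.
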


\begin{proof}
Contracting the maximally entangled pair in $W_j^0$ gives, for every
operator $O$ on $A_{j-1}$,
\begin{equation}\label{cr:eq:compression-W}
 (W_j^0)^\dagger(O\otimes\I)W_j^0
 =e^{L_j/2}\E_{B_j}
   \bigl(e^{-L_{j-1}/2}Oe^{-L_{j-1}/2}\bigr)e^{L_j/2}.
\end{equation}
Let $Z_j=\Tr(e^{-L_j})$ and $\sigma_j=e^{-L_j}/Z_j$.
The exact relation $\rho_j=\Tr_{B_j}\rho_{j-1}$ and
Lemma~\ref{lem:reference-stability} imply
\begin{equation}
 e^{-4h_M}\sigma_j\preceq\Tr_{B_j}\sigma_{j-1}
 \preceq e^{4h_M}\sigma_j.
\end{equation}
Consequently,
\begin{equation}
 e^{-4h_M}\I\preceq
 \frac{d_{B_j}Z_j}{Z_{j-1}}(W_j^0)^\dagger W_j^0
 \preceq e^{4h_M}\I.
\end{equation}
Define the mean
$a_j^*:=\trn{A_j}((W_j^0)^\dagger W_j^0)$. Division by this mean yields
\begin{equation}\label{cr:eq:mean-defect}
 \left\|\frac{(W_j^0)^\dagger W_j^0}{a_j^*}-\I\right\|
 \le e^{8h_M}-1\le16h_M.
\end{equation}
The bounds on $E_j(1)$ and $E_j(1)^{-1}$ from Proposition~\ref{cr:prop:E}
give $e^{-2g_*}\le a_j^*\le e^{2g_*}$. These bounds do not require the
partition functions $Z_j$.

Fix a constant-normalization encoding of $W_j^0$ with known factor
$\alpha_{0,j}$. Prepare a normalized maximally entangled vector on
$A_j$ and an independent copy of $A_j$. Applying the encoding to the
first register gives ideal success probability
\begin{equation}\label{eq:normalization-probability}
 p_j^{(0)}=\frac{a_j^*}{\alpha_{0,j}^2}.
\end{equation}
A trial measures whether all output projection ancillas are in their
specified zero states. The output of $W_j^0$ and its environment are not measured for this probability. Each trial uses newly prepared inputs,
independent of the later recovery input. When $A_j=\varnothing$, the
entangled input is simply the scalar state.

Estimate \eqref{eq:normalization-probability} by the sample mean
$\widehat p_j$, and store a rounded value of
$a_j=\alpha_{0,j}^2\widehat p_j$. The known constant lower bound on
$a_j^*$ allows an additive probability tolerance proportional to
$b_{\mathrm{iso}}$ to give relative error at most $b_{\mathrm{iso}}/16$.
Hoeffding's inequality and a union bound over $j$ require
$O(b_{\mathrm{iso}}^{-2}\log(m/\fail))$ trials per step. Allocate
fixed fractions of that tolerance to sampling, circuit-implementation
bias, and rounding. On the resulting joint successful event,
\begin{equation}
 \norm{S_j}\le
 \frac{16h_M+b_{\mathrm{iso}}/16}{1-b_{\mathrm{iso}}/16}
 <b_{\mathrm{iso}}.
\end{equation}
The stored $a_j$ is used in the definition of $\widehat G_j$ in
\eqref{cr:eq:Ghat}. Therefore
$\Tr_{B_j}\widehat G_{j-1}=a_jd_{B_j}\widehat G_j$ is exact for that
stored value; estimation affects the bound on $S_j$, not the target
normalized marginal. If a stored value is nonpositive or outside
$[e^{-2g_*}/2,2e^{2g_*}]$, the construction returns the initialized
identity unitary specified in Sec.~\ref{app:completion}.

The initial encoding of $W_j$ has constant normalization. Since
$\norm{W_j}\le\sqrt{1+b_{\mathrm{iso}}}$, amplification to
$\alpha_W=1+2b_{\mathrm{iso}}$ leaves an output singular-value margin
proportional to $b_{\mathrm{iso}}$. It requires
$O(b_{\mathrm{iso}}^{-1}\log(1/\zeta))$ calls, with constants depending
only on the fixed input bounds~\cite[Theorem~30]{GilyenEtAl2019}.
The input encoding error is chosen smaller than a fixed fraction of
$b_{\mathrm{iso}}$ to preserve this margin, and smaller as required by $\zeta$ and final error budget. Phase synthesis and finite gate
errors are included as in Sec.~\ref{sec:microscopic-model}. The bound on
$S_j$ is used analytically in Proposition~\ref{cr:prop:residual}; it does
not require compiling a separate encoding of $S_j$.
\end{proof}

\subsection{Cumulative products and endpoint corrections}

The first constructive step now combines the exponential-product circuits
with one-site normalized partial traces. The target correction is a square
operator determined by the microscopic $K_0$ and the chosen reference.
The $W_j$ realize its contractions coherently with controlled normalization;
their flat product never calls a preceding correction circuit. Only the
two endpoints are compiled as corrections. The identities below connect
this implementation to Eqs.~\eqref{m:eq:square-correction} and
\eqref{m:eq:contraction-step}.

\begin{proposition}[Correction encoding]\label{cr:prop:residual}
Under the successful-event guarantees of
Lemma~\ref{cr:lem:sparsify} and
Lemma~\ref{lem:normalization-estimation}, define
\begin{equation}\label{cr:eq:Ghat}
\begin{gathered}
 \widehat G_0=e^{-K_0},\qquad
 \widehat G_j=a_j^{-1}\E_{B_j}(\widehat G_{j-1}),\\
 X_j=e^{L_j/2}\widehat G_je^{L_j/2}-\I_{A_j},\qquad X_0=0.
\end{gathered}
\end{equation}
With $E_i=E_i(1)$ embedded in $\Omega$, these corrections have the
square-operator representation
\begin{equation}\label{cr:eq:square-correction}
 \I_{A_j}+X_j=
 \frac{\E_{\Omega\setminus A_j}
    \!\left[(E_1\cdots E_j)^\dagger(E_1\cdots E_j)\right]}
      {\prod_{i=1}^{j}a_i}.
\end{equation}
The empty product gives the identity for $j=0$.
For their circuit implementation, form
\begin{equation}\label{cr:eq:correction-product}
 \mathbb W_0=\I_{A_0},\qquad
 \mathbb W_j=(\mathbb W_{j-1}\otimes\I_{\mathcal E_j})W_j.
\end{equation}
Then
\begin{equation}\label{cr:eq:correction-gram}
 X_j=\mathbb W_j^\dagger\mathbb W_j-\I_{A_j}.
\end{equation}
Here $\mathbb W_j:\mathcal H_{A_j}\to
\mathcal H_{A_0}\otimes\bigotimes_{i=1}^j\mathcal E_i$ is a flat
ordered product of independently compiled $W_1,\ldots,W_j$, with
identities on unused references.
Each $\widehat G_j$ is a positive scalar multiple of $\rho_j$.
For $0\le j\le m$, putting $b=b_{\mathrm{iso}}$ gives
\begin{equation}\label{cr:eq:correction-small}
 \|X_j\|\le(1+b)^j-1\le e^{1/256}-1<1/128.
\end{equation}
For any selected $j$, $X_j$ has an encoding with a Hermitian projected
block, arbitrary operator error $\zeta\in(0,1/2)$, and normalization
$\alpha_X=1/32$. For $j\ge1$, it uses $O(j\log(2/\zeta))$ calls to
independently compiled $W_i$ encodings and their adjoints. Their required
accuracies are inverse-polynomial in $m$ and $1/\zeta$, and the
compilation costs have the polynomial form of
Theorem~\ref{cr:thm:local-main}. The one-patch compiler selects only
$j=m-1,m$, giving $X_{Yz}$ and $X_Y$.
\end{proposition}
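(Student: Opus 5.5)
The plan is to establish the algebraic identities first, then the norm bound, and finally the encoding. All three rest on the contraction identity \eqref{cr:eq:compression-W} and on telescoping of adjacent reference exponentials.

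\emph{Algebraic identities.} Since $E_1\cdots E_j=e^{-K_0/2}(e^{L_j/2}\otimes\I)$ (adjacent $e^{\pm L_i/2}$ cancel, using $L_0=K_0$), we get
\[
(E_1\cdots E_j)^\dagger(E_1\cdots E_j)=e^{L_j/2}e^{-K_0}e^{L_j/2}.
\]
Because $e^{L_j/2}$ acts only on $A_j$, it commutes with $\E_{\Omega\setminus A_j}$. Also $\E_{\Omega\setminus A_j}=\E_{B_j}\circ\cdots\circ\E_{B_1}$, and induction on \eqref{cr:eq:Ghat} gives $\widehat G_j=(\prod_{i\le j}a_i)^{-1}\E_{\Omega\setminus A_j}(e^{-K_0})$. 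Together these yield \eqref{cr:eq:square-correction}. The same formula shows that $\widehat G_j$ is a positive multiple of $\Tr_{\Omega\setminus A_j}e^{-K_0}\propto\rho_j$.

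For \eqref{cr:eq:correction-gram}, we induct on $j$ and prove $\mathbb W_j^\dagger\mathbb W_j=e^{L_j/2}\widehat G_je^{L_j/2}$. The base case $j=0$ is trivial. For the step, write
\[
\mathbb W_j^\dagger\mathbb W_j=W_j^\dagger(\mathbb W_{j-1}^\dagger\mathbb W_{j-1}\otimes\I)W_j.
\]
Apply \eqref{cr:eq:compression-W} with $O=e^{L_{j-1}/2}\widehat G_{j-1}e^{L_{j-1}/2}$ and the factor $a_j^{-1}$. The outer $e^{-L_{j-1}/2}$ cancel, leaving $e^{L_j/2}a_j^{-1}\E_{B_j}(\widehat G_{j-1})e^{L_j/2}$.

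\emph{Norm bound.} I would use the recursion $\I+X_j=W_j^\dagger((\I+X_{j-1})\otimes\I)W_j$. From it,
\[
\|\I+X_j\|\le\|W_j\|^2\|\I+X_{j-1}\|\le(1+b)\|\I+X_{j-1}\|,
\]
and symmetrically
\[
\lambda_{\min}(\I+X_j)\ge\lambda_{\min}(W_j^\dagger W_j)\,\lambda_{\min}(\I+X_{j-1})\ge(1-b)\lambda_{\min}(\I+X_{j-1}).
\]
Both steps use \eqref{eq:isometry-deviation}. Hence the spectrum of $\I+X_j$ lies in $[(1-b)^j,(1+b)^j]$, and $1-(1-b)^j\le(1+b)^j-1$. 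With $b=1/(256m)$ and $j\le m$ this gives $\|X_j\|\le e^{1/256}-1<1/128$.

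\emph{Encoding.} I would take the product encoding of $\mathbb W_j$ from the $j$ compiled $W_i$, each of normalization $\alpha_W$. The resulting normalization is $\alpha_W^j\le e^{2/256}$, a constant. The errors add by \eqref{eq:general-product-error} with constant prefactors, so inverse-polynomial accuracy per factor suffices. The Gram product $\mathbb W_j^\dagger\mathbb W_j$ is formed from the encoding and its adjoint with a shared projection ancilla, which gives a Hermitian block of normalization $\alpha_W^{2j}$. Subtracting $\I$ by a two-term LCU gives $X_j$ with constant normalization of order $2$.

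The main obstacle is then reducing the normalization to $1/32$, i.e.\ amplifying an encoding whose normalization is an order-one constant while $\|X_j\|<1/128$. I would apply uniform singular-value amplification \cite[Theorem~30]{GilyenEtAl2019}, or a QSVT odd or even polynomial amplifying by a constant factor on $[-1/128,1/128]\cdot(\text{norm})^{-1}$. There is a margin between $1/128$ and $1/32$, so the degree is $O(\log(1/\zeta))$ for a constant gain factor. Taking the Hermitian part by averaging with the adjoint keeps the block Hermitian.

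The call count is then $O(\log(2/\zeta))$ uses of the Gram encoding, each using $2j$ calls to $W_i$ encodings or their adjoints, which gives $O(j\log(2/\zeta))$. The required accuracy of each $W_i$ is $\zeta$ divided by this call count, which is inverse-polynomial in $m$ and $1/\zeta$. Lemma~\ref{lem:normalization-estimation} then supplies the polynomial compilation cost.
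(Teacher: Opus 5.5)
Your proposal is correct and follows essentially the paper's route. It uses the telescoping $E_1\cdots E_j=e^{-K_0/2}(e^{L_j/2}\otimes\I)$ and the contraction identity \eqref{cr:eq:compression-W} for the square-operator and Gram identities, the per-factor bounds $(1-b)\I\preceq W_i^\dagger W_i\preceq(1+b)\I$ for \eqref{cr:eq:correction-small}, and a Gram-minus-identity encoding of normalization at most $(1+2b)^{2j}+1<3$ followed by constant-gain odd amplification to $\alpha_X=1/32$. One wording fix: the Gram block must be $J_{\mathrm{in}}^\dagger U^\dagger\Pi_{\mathrm{out}}UJ_{\mathrm{in}}$, with an intermediate output projection or fresh ancillas for the adjoint call, because reusing the same ancilla with no intermediate projection gives the identity block.
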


\begin{proof}
Adjacent exponentials cancel, so
\[
 E_1\cdots E_j=e^{-K_0/2}
       (e^{L_j/2}\otimes\I_{\Omega\setminus A_j}).
\]
Taking the normalized partial trace proves
Eq.~\eqref{cr:eq:square-correction} from Eq.~\eqref{cr:eq:Ghat}.
Because $E_j$ is supported on $A_{j-1}$, splitting that trace into the
successive $\E_{B_i}$ gives
\begin{equation}\label{cr:eq:square-contraction-step}
 \I_{A_j}+X_j=a_j^{-1}\E_{B_j}
  \!\left[E_j^\dagger(\I_{A_{j-1}}+X_{j-1})E_j\right].
\end{equation}
The $\widehat G_j$ are positive scalar multiples of the exact marginals.
Equation~\eqref{cr:eq:compression-W} identifies the right-hand side of
Eq.~\eqref{cr:eq:square-contraction-step} with the coherent contraction

\begin{equation}\label{cr:eq:correction-step}
 \I_{A_j}+X_j
 =W_j^\dagger\bigl((\I_{A_{j-1}}+X_{j-1})\otimes\I_{\mathcal E_j}\bigr)W_j.
\end{equation}
Induction from $X_0=0$ proves Eq.~\eqref{cr:eq:correction-gram}.
Using $(1-b)\|v\|^2\le\|W_iv\|^2\le(1+b)\|v\|^2$ at each factor,
\[
 (1-b)^j\I\preceq\mathbb W_j^\dagger\mathbb W_j\preceq(1+b)^j\I.
\]
Since $1-(1-b)^j\le(1+b)^j-1$ and $jb\le1/256$, this gives
Eq.~\eqref{cr:eq:correction-small}. The standard product rule~\cite[Lemma~53]{GilyenEtAl2019} gives normalization at most $(1+2b)^j$
for $\mathbb W_j$. Its Gram product minus the identity therefore has
normalization at most
\begin{equation}\label{cr:eq:correction-normalization}
 (1+2b)^{2j}+1\le e^{1/64}+1<3.
\end{equation}
For an exact projected encoding
$\mathbb W_j/\alpha=J_{\mathrm{out}}^\dagger UJ_{\mathrm{in}}$, the Gram
operator is the projected product
\begin{equation}\label{eq:gram-projection}
 \frac{\mathbb W_j^\dagger\mathbb W_j}{\alpha^2}
 =J_{\mathrm{in}}^\dagger U^\dagger
   \Pi_{\mathrm{out}}U J_{\mathrm{in}},
 \qquad \Pi_{\mathrm{out}}=J_{\mathrm{out}}J_{\mathrm{out}}^\dagger.
\end{equation}
The intermediate projector selects the encoded output before applying
its adjoint. Its known reflection supplies a normalization-one encoding;
standard block multiplication uses fresh projection ancillas and retains
all intermediate physical and reference registers. The same identity
holds for the encoded approximants used below.
Pad the subtracted block to normalization $3$. A real odd amplification polynomial of gain $96$
gives $\alpha_X=1/32$ in $O(\log(2/\zeta))$ calls, with fixed output
slack from Eq.~\eqref{cr:eq:correction-small}
\cite[Theorem~30]{GilyenEtAl2019}.

For encoded approximants of error $\delta_W\le b/8$,
Eq.~\eqref{eq:general-product-error} yields
\[
 \|\widetilde{\mathbb W}_j-\mathbb W_j\|
 \le j\delta_W(\sqrt{1+b}+\delta_W)^{j-1}\le Cj\delta_W.
\]
The Gram error has the same order because both product norms are bounded.
Take $\delta_W$ sufficiently smaller than $\zeta/j$ and allocate comparable
error to subtraction, scalar rounding, and Hermitian symmetrization.
The raw approximation then has norm at most $1/64$, retaining fixed
amplification slack. Its linear target transfers operator error without a
condition-number factor; polynomial approximation and phase synthesis
use the remaining budget. Precision costs are polynomial in $\log m+\log(1/\zeta)$. Each raw call uses $O(j)$ independent $W_i$
circuits, never an earlier correction circuit, proving the call bound.
For $j=0$ a zero block of normalization $1/32$ suffices.
\end{proof}

\subsection{Integral representation and bounded functions}
\label{app:inverse}
\label{sec:correction-factors}

The corrections are now encoded, but the Petz factor still requires
$T(L,X)$, which is not generally $(\I+X)^{-1/2}$. The second constructive
step converts this noncommutative square-root factor into bounded
functions of the accessible $L$ and the encoded $X$. The standard fractional-power resolvent representation
\cite{Balakrishnan1960} yields the following ordered identity. Its
implementation, rather than the scalar integral formula, is the
second constructive ingredient quantified in
Theorem~\ref{lem:inverse-circuit}.
\begin{lemma}[Integral representation of the correction factor]
\label{lem:relative-inverse}
Let $L=L^\dagger$, $X=X^\dagger$, and $\norm X<1$. Put $A=e^{-L/2}$ and
\begin{equation}\label{cr:eq:T-def}
 G=A(\I+X)A,\qquad T(L,X)=AG^{-1/2}.
\end{equation}
Write $T=T(L,X)$ and define
\begin{equation}\label{eq:bounded-functions}
 P_s=(\I+e^{2s+L})^{-1},\qquad f_s(L)=\sech(s+L/2).
\end{equation}
For complex arguments in a strip without poles, $P_z$ and $f_z(L)$
denote the analytic continuations of these same functions. For real $s$,
the norm-convergent identity is
\begin{equation}\label{eq:relative-inverse-integral}
 T=\frac1\pi\int_{\mathbb R}(\I+P_sX)^{-1}f_s(L)\,ds.
\end{equation}
Moreover $TT^\dagger=(\I+X)^{-1}$ and $T^{-1}=T^\dagger(\I+X)$.
\end{lemma}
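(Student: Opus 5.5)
The algebraic identities follow directly from the definitions in Eq.~\eqref{cr:eq:T-def}, and the integral comes from the scalar Balakrishnan formula for $G^{-1/2}$ after a change of variables and a factorization of the shifted resolvent through $A$.

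\textbf{Positivity and the algebraic identities.} Since $\norm X<1$, we have $\I+X\succeq(1-\norm X)\I>0$. Because $A=e^{-L/2}$ is positive definite and invertible, $G=A(\I+X)A$ is positive definite, so $G^{\pm1/2}$ exist. Then
\[
 TT^\dagger=AG^{-1}A=AA^{-1}(\I+X)^{-1}A^{-1}A=(\I+X)^{-1},
\]
and
\[
 T^\dagger(\I+X)=G^{-1/2}A(\I+X)AA^{-1}=G^{-1/2}GA^{-1}=G^{1/2}A^{-1}=T^{-1}.
\]

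\textbf{Scalar representation and change of variables.} For $g>0$, the scalar identity $g^{-1/2}=\pi^{-1}\int_0^\infty\lambda^{-1/2}(g+\lambda)^{-1}\,d\lambda$ holds. Applied through the spectral calculus of $G$, it gives a norm-convergent Bochner integral for $G^{-1/2}$. Substituting $\lambda=e^{2s}$, so that $\lambda^{-1/2}d\lambda=2e^{s}ds$, gives
\[
 G^{-1/2}=\frac2\pi\int_{\mathbb R}e^{s}(G+e^{2s}\I)^{-1}\,ds .
\]
Left multiplication by the fixed operator $A$ commutes with the integral, so $T=\frac2\pi\int_{\mathbb R}e^{s}A(G+e^{2s})^{-1}\,ds$.

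\textbf{Factorizing the shifted resolvent.} The key noncommutative step is to write
\[
 G+e^{2s}\I=A\bigl[\I+X+e^{2s}A^{-2}\bigr]A=A\bigl[(\I+e^{2s+L})+X\bigr]A ,
\]
using $A^{-2}=e^{L}$, which commutes with $e^{2s}$. Then factor
\[
 (\I+e^{2s+L})+X=(\I+e^{2s+L})(\I+P_sX).
\]
The factor $\I+P_sX$ is invertible because $0\preceq P_s\preceq\I$ gives $\norm{P_sX}\le\norm X<1$, so its Neumann series converges with
\[
 \norm{(\I+P_sX)^{-1}}\le(1-\norm X)^{-1}.
\]
Hence
\[
 A(G+e^{2s})^{-1}=\bigl[(\I+e^{2s+L})+X\bigr]^{-1}A^{-1}=(\I+P_sX)^{-1}P_se^{L/2}.
\]

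\textbf{Identifying the hyperbolic secant.} The remaining factor $2e^{s}P_se^{L/2}$ is a function of $L$ alone. By the scalar computation
\[
 \frac{2e^{s+x/2}}{1+e^{2s+x}}=\frac{2}{e^{-s-x/2}+e^{s+x/2}}=\sech(s+x/2),
\]
it equals $f_s(L)$. This yields Eq.~\eqref{eq:relative-inverse-integral}.

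\textbf{Convergence.} Norm convergence on $\mathbb R$ follows from the uniform resolvent bound together with $\norm{f_s(L)}\le 2e^{-|s|+\norm L/2}$, which is integrable. The analytic-continuation remark only fixes notation, so nothing needs to be proved there.

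\textbf{Main obstacle.} The only delicate point is keeping the operator ordering correct in the factorization step. Whether $A$, $P_s$ and $f_s(L)$ end up on the right of $(\I+P_sX)^{-1}$ rather than the left depends on writing $T=AG^{-1/2}$ and pulling $A$ inside the integral. The convergence justification for exchanging the spectral integral with left multiplication is routine by the bounds above.
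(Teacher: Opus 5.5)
Your proof is correct and takes essentially the paper's route. Both arguments use the Balakrishnan integral for $G^{-1/2}$, substitute $v=e^{2s}$, and factor $A(G+e^{2s}\I)^{-1}=(\I+P_sX)^{-1}P_se^{L/2}$ (which the paper asserts by direct multiplication), then identify $2e^{s}P_se^{L/2}=\sech(s+L/2)$; your explicit positivity, Neumann-bound, and integrability checks supply details the paper leaves implicit.
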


\begin{proof}
For $G>0$, spectral calculus gives
$G^{-1/2}=\pi^{-1}\int_0^\infty v^{-1/2}(G+v\I)^{-1}dv$.
Write $P_v=A^2(A^2+v\I)^{-1}$. Direct multiplication verifies
\begin{equation}
 A(G+v\I)^{-1}=(\I+P_vX)^{-1}A(A^2+v\I)^{-1}.
\end{equation}
With $v=e^{2s}$,
$2\sqrt v A(A^2+v\I)^{-1}=f_s(L)$ and $P_v=P_s$.
This proves the integral. Inverting $G=A(\I+X)A$ between the two
outer $A$ factors proves the identity for $TT^\dagger$; multiplying it
by $\I+X$ gives the inverse formula. No commutation of $X$ and $L$ is used.
\end{proof}

The spectral functions in Eq.~\eqref{eq:bounded-functions} are
implemented by standard polynomial approximation and QSVT
\cite{Trefethen2019ATAP,GilyenEtAl2019}. The required additional estimate
is uniform in the integration variable $s$. For an encoding of $L$
with normalization $\Lambda_L\ge\norm L$, define the envelope
\begin{equation}\label{eq:envelope}
 \nu_{\Lambda_L}(s)=
 \begin{cases}
 1,&|s|\le\Lambda_L/2,\\
 \sech(|s|-\Lambda_L/2),&|s|>\Lambda_L/2.
 \end{cases}
\end{equation}

\begin{lemma}[Uniform encodings of the bounded functions]
\label{lem:bounded-functions}
Let $L=L^\dagger$ have an encoding with a Hermitian projected block
and normalization $\Lambda_L$. Uniformly for real $s$, the functions
$P_s$ and $f_s(L)/\nu_{\Lambda_L}(s)$ admit Hermitian contraction
approximants with operator error $\epsilon\in(0,1/2)$, encoding
normalizations $\alpha_P=3/2$ and $\alpha_f=2$, respectively, and
polynomial degree
\[
 O\!\left((\Lambda_L+1)\log\frac{\Lambda_L+1}{\epsilon}\right).
\]
Coefficient and phase synthesis have polynomial cost in this degree
and $\log(1/\epsilon)$; finite input-encoding and gate errors are
allocated by the common precision convention in
Sec.~\ref{sec:microscopic-model}.
\end{lemma}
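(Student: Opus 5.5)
The proof reduces to Chebyshev approximation of functions analytic in a fixed Bernstein ellipse, followed by standard QSVT eigenvalue transformation. Rescale the variable to $x=\lambda/\Lambda_L\in[-1,1]$, where $\lambda$ ranges over the spectrum of $L$. On the encoded block, with $\Lambda:=\Lambda_L$, the two targets become the scalar families
\[
 g_s(x)=\frac{1}{1+e^{2s+\Lambda x}},\qquad
 h_s(x)=\frac{\sech(s+\Lambda x/2)}{\nu_{\Lambda}(s)} .
\]
For each $s$ I will construct a real polynomial $p_s$ approximating $g_s$, and similarly $q_s$ approximating $h_s$. Each polynomial must have degree $O((\Lambda+1)\log((\Lambda+1)/\epsilon))$, with $\sup_{[-1,1]}|g_s-p_s|\le\epsilon$ (likewise for $h_s$) and a constant bound on $[-1,1]$ uniform in $s$. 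Then I apply QSVT to the Hermitian projected block of $L$.

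The approximation step uses analyticity in a Bernstein ellipse $E_\rho$ with $\rho=1+c/(\Lambda+1)$ for a small fixed $c$. Points of $E_\rho$ satisfy $|\operatorname{Im}x|\le O(c/\Lambda)$ and $|\operatorname{Re}x|\le 1+O(c^2/\Lambda^2)$. Choosing $c$ small therefore places $\Lambda x$ in the region $|\operatorname{Im}\Lambda x|\le\pi/2$ with $\operatorname{Re}\Lambda x\in[-\Lambda-1,\Lambda+1]$.

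For $g_s$, write $w=2s+\Lambda x$. When $|\operatorname{Im}w|<\pi/2$, the point $e^{w}$ lies in the open right half-plane, so $|1+e^{w}|\ge1$. Hence $|g_s|\le1$ on $E_\rho$, uniformly in $s$, and there are no poles.

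For $h_s$, write $a+ib=s+\Lambda x/2$ with $|b|\le\pi/4$. Then $|\cosh(a+ib)|^2=\sinh^2a+\cos^2b\ge\max\{1/2,\sinh^2a\}$. If $|s|\le\Lambda/2$, this gives $|h_s|\le\sqrt2$. If $|s|>\Lambda/2$, then $|a|\ge|s|-\Lambda/2-1$ on the ellipse, so $|\sech(a+ib)|\le Ce^{-|a|}\le C'e^{-(|s|-\Lambda/2)}\le 2C'\nu_\Lambda(s)$. In both cases $|h_s|\le C_h$ on $E_\rho$, with $C_h$ independent of $s$ and $\Lambda$.

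The standard Chebyshev truncation bound $2C\rho^{-n}/(\rho-1)$ then yields degree $n=O((\Lambda+1)\log((\Lambda+1)/\epsilon))$. The truncated series is automatically within $\epsilon$ of a function bounded by $1$ on $[-1,1]$: $0<g_s<1$, and $h_s\le1$ because $\sech(s+\Lambda x/2)\le\nu_\Lambda(s)$ for real $x$. The Chebyshev coefficients are explicit integrals of explicit analytic functions, so they are computable to the needed precision at polynomial cost.

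To implement the polynomials, I split each into even and odd parts and use the QSVT eigenvalue-transformation result for Hermitian blocks with real non-parity polynomials. That result implements a polynomial of sup norm at most $1/2$ exactly, at degree linear query cost, via a single LCU of the two parity parts. For $h_s$, I rescale $q_s/(2(1+\epsilon))$, which is bounded by $1/2$; this produces $\alpha_f=2$ after absorbing $\epsilon$. For $g_s$, I instead write $P_s=\tfrac12\I+(P_s-\tfrac12\I)$, where $P_s-\tfrac12\I$ has spectrum in $(-1/2,1/2)$. I encode $p_s-1/2$ (after slight shrinkage) with normalization $1$, and combine it with the identity block by LCU, giving normalization $1/2+1=3/2=\alpha_P$. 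Hermiticity of the approximants follows because the polynomials are real and the block is Hermitian. Phase synthesis is polynomial in the degree and $\log(1/\epsilon)$ by the conventions of Sec.~\ref{sec:microscopic-model}, and input-encoding errors propagate linearly with the degree.

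The main obstacle is the uniform-in-$s$ control of $h_s$ near $|s|\approx\Lambda/2$ and for very large $|s|$. There the normalization by $\nu_\Lambda(s)$ must exactly compensate the exponential decay on a complex neighbourhood, including the slight real overshoot of the ellipse. I will handle this with the explicit $\max\{1/2,\sinh^2a\}$ lower bound. The ellipse parameter is chosen so that the overshoot costs only a fixed factor $e^{O(1)}$, which is then absorbed into $C_h$.
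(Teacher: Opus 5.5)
Your proposal is correct and follows essentially the same route as the paper: a Bernstein ellipse with $\varrho-1\asymp(\Lambda_L+1)^{-1}$, $s$-uniform bounds on the analytic continuations of the logistic (equivalently $\tanh$) function and of the envelope-normalized $\sech$, Chebyshev truncation, and QSVT on the parity parts, with $P_s=\tfrac12\I-\tfrac12\tanh(s+L/2)$ producing the weights $3/2$ and $2$. The differences are cosmetic: you bound $|1+e^{w}|\ge1$ and split cases on $|s|$ versus $\Lambda_L/2$ via $|\cosh(a+ib)|^2=\sinh^2a+\cos^2b$, whereas the paper uses $|\tanh(x+iv)|\le1$, $|\sech(x+iv)|\le\sech x/\cos v$, and $|(\log\sech x)'|\le1$ to absorb the small real overshoot of the ellipse.
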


\begin{proof}
The scalar case $\Lambda_L=0$ is immediate. Otherwise set
$g_s(y)=\tanh(s+\Lambda_Ly/2)$ and
$h_s(y)=\sech(s+\Lambda_Ly/2)/\nu_{\Lambda_L}(s)$.
Both are real and bounded by one on $[-1,1]$. On the Bernstein
ellipse with parameter $\varrho=1+[8(\Lambda_L+1)]^{-1}$,
$z=s+\Lambda_Ly/2$ has $|\operatorname{Im}z|\le1/16$ and its real
part is within $1/256$ of a point $s+\Lambda_Ly_0/2$, $y_0\in[-1,1]$.
The inequalities
\[
\begin{aligned}
 |\sech(x+iv)|&\le\frac{\sech x}{\cos v},\qquad
 |(\log\sech x)'|\le1,\\
 |\tanh(x+iv)|&\le1\qquad(|v|\le1/16)
\end{aligned}
\]
therefore give a constant bound for both analytic functions on that
ellipse, independent of $s$ and $\Lambda_L$. The Chebyshev estimate
\cite[Theorems~8.1--8.2]{Trefethen2019ATAP} yields truncation error
$C(\Lambda_L+1)e^{-cn/(\Lambda_L+1)}$ at degree $n$, proving the
stated degree.

Approximating $(1-\epsilon/2)g_s$ and $(1-\epsilon/2)h_s$ to error
$\epsilon/2$ gives real polynomials $p_g,p_h$ bounded by one.
Apply QSVT to their even and odd parts. Then
$\widetilde P_s=(\I-p_g(L/\Lambda_L))/2$ is a positive contraction
with LCU weight $3/2$, and $p_h(L/\Lambda_L)$ has LCU weight $2$.
The same coefficient decay gives polynomial-cost coefficient
computation on a uniform Fourier grid; stable evaluations of
$h_s$ use $\log\sech$. The known scalar envelope restores $f_s(L)$,
so its reciprocal never enters the quantum normalization.
Phase synthesis and coherent selection over the polynomially many
quadrature nodes use Sec.~\ref{sec:microscopic-model}.
\end{proof}

\subsection{Quadrature and circuit construction}

\paragraph{Positive quadrature.}
\label{sec:positive-quadrature}
We use positive Gauss--Legendre quadrature and its analytic convergence
bound~\cite[Theorem~4.5]{Trefethen2008}, keeping the envelope-weighted
sum that determines the LCU normalization. Suppose that $F$ is
operator-norm analytic on $|\operatorname{Im}z|<b_0$ and, for a fixed
$0<b<b_0$, satisfies
$\norm{F(s+iu)}\le M_b\nu_{\Lambda_L}(s)$ for $|u|\le b$, with
$M_b\ge1$. There are real nodes $s_l$ and positive weights $w_l$
whose operator-norm error for $\int_{\mathbb R}F(s)\,ds$ is at most
$\epsilon$, with
\begin{equation}\label{eq:quad-mass}
 \sum_lw_l\nu_{\Lambda_L}(s_l)\le\Lambda_L+\pi+1
\end{equation}
and
\begin{equation}\label{eq:quad-size}
\begin{split}
 K_{\mathrm{quad}}=O_b\biggl[&
 (\Lambda_L+\log(M_b/\epsilon)+1)\\
 &\times\log\frac{M_b(\Lambda_L+2)}{\epsilon}\biggr].
\end{split}
\end{equation}
To obtain this rule, truncate at
$B=\Lambda_L/2+\log(8M_b/\epsilon)$, giving tail error at most
$\epsilon/2$, and partition $[-B,B]$ at $\pm\Lambda_L/2$ and into
intervals shorter than the fixed strip width. The cited exponential
convergence applies to operators by testing against norm-one linear
functionals. A logarithmic number of nodes per interval gives the
remaining error and Eq.~\eqref{eq:quad-size}. Applying the same rule
to each analytic piece of $\nu_{\Lambda_L}$, with error at most one,
gives Eq.~\eqref{eq:quad-mass} because its integral is
$\Lambda_L+\pi$. All constants depend only on the fixed strip width.

The following circuit uses the ordered integral identity of
Lemma~\ref{lem:relative-inverse}, derived from the fractional-power
resolvent representation~\cite{Balakrishnan1960}, together with standard
LCU and QSVT~\cite{GilyenEtAl2019}. The normalization and complexity
bounds below quantify its implementation in the regime $\norm X<1$.

\begin{theorem}[Circuit for the correction factor]
\label{lem:inverse-circuit}
Let $L=L^\dagger$ and $X=X^\dagger$ act on the same space, and
assume $\norm X\le\delta_0$ for a fixed $0<\delta_0<2/3$, together with an
encoding of $X$ of normalization factor at most $\delta_0$ to arbitrary
desired accuracy. The fixed bound $\delta_0<2/3$ leaves a positive
margin in the Neumann coefficient sum.
Under the access of Lemma~\ref{lem:bounded-functions}, the operator $T$
in Lemma~\ref{lem:relative-inverse} has an error-$\zeta$ encoding with
\begin{equation}\label{eq:relative-inverse-alpha}
 \alpha_T\le\frac{2(\Lambda_L+\pi+1)}{\pi(1-3\delta_0/2)}
\end{equation}
and $\poly(\Lambda_L,\log(1/\zeta))$ calls and arithmetic operations.
The hidden constants depend on the fixed positive gaps in the convergence inequalities, not on
the smallest eigenvalue of $G$.
\end{theorem}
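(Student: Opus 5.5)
The plan is to turn the ordered identity of Lemma~\ref{lem:relative-inverse} directly into an LCU over quadrature nodes. Each node carries a truncated Neumann series built from the encodings of $P_s$ and $f_s(L)$ in Lemma~\ref{lem:bounded-functions} and the given encoding of $X$. The target normalization in Eq.~\eqref{eq:relative-inverse-alpha} then factors as $\pi^{-1}$, times the envelope-weighted quadrature mass of Eq.~\eqref{eq:quad-mass}, times the LCU weight $2$ of $f_s/\nu_{\Lambda_L}$, times the Neumann coefficient sum $\sum_k(3\delta_0/2)^k=(1-3\delta_0/2)^{-1}$. I would organize the argument around this bookkeeping and check that every approximation can be paid for inside it.

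\emph{Step 1 (analytic integrand).} Set $F(z)=(\I+P_zX)^{-1}f_z(L)$ and verify the hypotheses of the positive-quadrature statement on a strip $|\operatorname{Im}z|\le b$, with $b$ a fixed small constant fixed by $\delta_0$.
\begin{itemize}
\item For an eigenvalue $x$ of $L$ and $z=s+iu$, the eigenvalue of $P_z$ is $1/(1+e^{2s+x}e^{2iu})$. Its modulus is at most $1/\cos(2u)$ when $\cos 2u>0$, since then $\operatorname{Re}(e^{2s+x}e^{2iu})>0$. Hence $\norm{P_zX}\le\delta_0/\cos(2b)<1$ for small $b$, and $F$ is analytic on the strip.
\item The resolvent is bounded there by $(1-\delta_0/\cos 2b)^{-1}$.
\item For $|x|\le\Lambda_L$ and $|s|>\Lambda_L/2$ we have $|s+x/2|\ge|s|-\Lambda_L/2$. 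Together with $|\sech(w+iv)|\le\sech w/\cos v$, this gives $\norm{f_z(L)}\le\nu_{\Lambda_L}(s)/\cos b$.
\end{itemize}
So $M_b=O(1)$, uniformly in $s$ and $\Lambda_L$. The positive Gauss--Legendre rule then gives $K_{\mathrm{quad}}=\poly(\Lambda_L,\log(1/\zeta))$ nodes with quadrature error $\le\zeta/4$.

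\emph{Step 2 (one node).} At each node $s_l$, approximate $(\I+P_{s_l}X)^{-1}$ by $\sum_{k=0}^{n_N}(-\widetilde P_{s_l}\widetilde X)^k$. Here $\widetilde P_{s_l}$ comes from Lemma~\ref{lem:bounded-functions} with normalization $3/2$, and $\widetilde X$ is the given encoding with normalization $\delta_0$.
\begin{itemize}
\item The $k$th power is a product encoding with normalization $(3\delta_0/2)^k$. The LCU over $k$ therefore has weight at most $(1-3\delta_0/2)^{-1}$, and $n_N=O(\log(1/\zeta))$ makes the tail negligible.
\item Here one needs $3\delta_0/2<1$ (not merely $\delta_0<1$); this is exactly why $\delta_0<2/3$ is imposed.
\item Multiply on the right by the encoding of $f_{s_l}(L)/\nu_{\Lambda_L}(s_l)$, with normalization $2$.
\item The outer PREPARE uses amplitudes $\propto\sqrt{w_l\nu_{\Lambda_L}(s_l)}$, and SELECT uses the coherently selected phase sequences of Sec.~\ref{sec:microscopic-model}.
\end{itemize}
Multiplying the weights reproduces Eq.~\eqref{eq:relative-inverse-alpha}. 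The query count is $K_{\mathrm{quad}}$-controlled selection times $O(n_N)$ times the degree of Lemma~\ref{lem:bounded-functions}, all polynomial in $\Lambda_L$ and $\log(1/\zeta)$.

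\emph{Step 3 (error allocation).} The total error splits into four parts:
\begin{itemize}
\item the quadrature error;
\item the Neumann truncation error;
\item the function-approximation errors $\epsilon$ at each node;
\item the encoding error of $X$.
\end{itemize}
The third and fourth parts must be propagated through the degree-$n_N$ products, using Eq.~\eqref{eq:general-product-error} with factors of norm at most $3/2$ and $\delta_0+\epsilon$. Per-node errors are then summed with weights $w_l\nu_{\Lambda_L}(s_l)$, whose total is at most $\Lambda_L+\pi+1$. Choosing $\epsilon$ and the $X$-accuracy as $\zeta/\poly(\Lambda_L)$ suffices, and no eigenvalue of $G$ appears anywhere.

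I expect the main obstacle to be the perturbed Neumann series. Approximants of $P_s$ and $X$ may have norms slightly exceeding their targets, so the margin $\norm{\widetilde P\widetilde X}\le(3/2)(\delta_0+\epsilon)<1$ must be kept uniformly over all nodes. I would handle this by rescaling within the budget, as in Lemma~\ref{lem:bounded-functions}. I would also check that the product errors grow only like $k(3\delta_0/2+\epsilon)^{k-1}$, which remains summable.
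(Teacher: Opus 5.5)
Your proposal is correct and follows essentially the same route as the paper. The paper also uses positive quadrature of the analytic integrand on a fixed strip. At each node it places a truncated Neumann series of LCU weight $\sum_r(\alpha_P\delta_0)^r\le(1-3\delta_0/2)^{-1}$, multiplied by the normalization-$2$ encoding of $f_s(L)/\nu_{\Lambda_L}(s)$, and it sums all primitive errors against the envelope mass $\Lambda_L+\pi+1$.

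The differences are minor:
\begin{itemize}
\item The paper controls the perturbed Neumann series with the resolvent identity, using the uniform margin $\|\widetilde P_s\widetilde X\|\le\delta_*=(1+\delta_0)/2$. Since $\widetilde P_s$ is a contraction, this margin follows directly from the error budget, so the obstacle you anticipate reduces to bookkeeping.
\item Because the tail is weighted by the quadrature mass, the Neumann cutoff must be $O(\log((\Lambda_L+1)/\zeta))$ rather than $O(\log(1/\zeta))$.
\end{itemize}
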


\begin{proof}
Write $A_L=\Lambda_L+\pi+1$ and
$\delta_*=(1+\delta_0)/2<1$. Let $\widetilde X$ be a Hermitian encoded approximation to $X$, and let
$\widetilde P_s$ be the polynomial contraction from
Lemma~\ref{lem:bounded-functions}. First bound the error for these
operator approximants; errors in the finite unitaries that implement
their encodings are allocated after fixing the degrees and quadrature.
Choose operator errors
$\epsilon_X,\epsilon_P$ such that
\begin{equation}\label{eq:relative-budget}
\begin{split}
 \epsilon_X+\delta_0\epsilon_P
 \le\min\biggl\{&\frac{1-\delta_0}{2},\\
 &\frac{\zeta(1-\delta_0)(1-\delta_*)}{64A_L}\biggr\}.
\end{split}
\end{equation}
Then $\norm{\widetilde P_s\widetilde X}\le\delta_*$, and the
resolvent identity gives
\begin{equation}\label{eq:resolvent-error}
 \norm{(\I+P_sX)^{-1}-(\I+\widetilde P_s\widetilde X)^{-1}}
 \le\frac{\epsilon_X+\delta_0\epsilon_P}{(1-\delta_0)(1-\delta_*)}.
\end{equation}
Take a Neumann cutoff $r_N$ with
$\delta_*^{r_N+1}/(1-\delta_*)\le\zeta/(64A_L)$.
The sum of LCU coefficient weights is controlled by
\begin{equation}
 \sum_{r=0}^{r_N}(\alpha_P\delta_0)^r\le(1-3\delta_0/2)^{-1}.
\end{equation}
Choose the normalized $f_s(L)$ error at most
$\zeta(1-\delta_*)/(64A_L)$.
The approximate integrand then differs from the exact integrand by
at most $3\zeta\nu_{\Lambda_L}(s)/(64A_L)$ and has normalization
$2\nu_{\Lambda_L}(s)/(1-3\delta_0/2)$.

For complex $z=s+iu$, the scalar spectral functions obey
$\norm{P_z}\le1/|\cos u|$ and
$\norm{f_z(L)}\le\nu_{\Lambda_L}(s)/|\cos u|$.
Choose fixed $b<\pi/2$ with $\cos b>\delta_0$.
The exact integrand is analytic on that strip and bounded by
$\nu_{\Lambda_L}(s)/(\cos b-\delta_0)$.
The positive quadrature above gives error $\zeta/4$
with the envelope-weight bound \eqref{eq:quad-mass}.
Positive weights add the primitive errors with that same envelope.
Allocate the remaining error to node/weight preparation, finite scalar
arithmetic, and QSVT synthesis; their call count is polynomial, so the
unitary hybrid bound applies. The sum of node normalization weights gives
\eqref{eq:relative-inverse-alpha}.
The degree in Lemma~\ref{lem:bounded-functions} is
polynomial in $\Lambda_L$ and logarithmic in the reciprocal primitive
error, including an additional $\log(\Lambda_L+1)$ factor.
All truncation orders and polynomial degrees therefore retain the stated
$\poly(\Lambda_L,\log(1/\zeta))$ bound at fixed margins.
No isolated encoding of $G^{-1/2}$ or $e^{L/2}$ occurs.
\end{proof}

\begin{lemma}[Correction factors for the Petz circuit]\label{cr:lem:T}
Let $L=L^\dagger$ and $X=X^\dagger$ have encodings of normalizations
$\Lambda_L=O(M)$ and $\alpha_X=1/32$, available to every desired
accuracy, and assume $\|X\|\le1/32$.
The operator $T$ of Eq.~\eqref{cr:eq:T-def} and its inverse have
error-$\zeta$ encodings using $\poly(M,\log(1/\zeta))$ calls.
Their normalization factors can be $O(M+1)$ and $2$, respectively, and
\begin{equation}\label{cr:eq:T-condition}
\begin{gathered}
 TT^\dagger=(\I+X)^{-1},\qquad T^{-1}=T^\dagger(\I+X),\\
 (1+1/32)^{-1/2}\le\sigma_k(T)\le(1-1/32)^{-1/2}.
\end{gathered}
\end{equation}
Here $\sigma_k(T)$ denotes a singular value.
\end{lemma}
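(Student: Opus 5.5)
The plan is to derive the algebraic identities and singular-value bounds directly from Lemma~\ref{lem:relative-inverse}, and then build both encodings from Theorem~\ref{lem:inverse-circuit} followed by uniform singular-value amplification. The singular-value amplification step for $T^{-1}$ is the only place needing care.

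\textbf{Identities and singular values.} Since $\|X\|\le1/32<1$, Lemma~\ref{lem:relative-inverse} applies verbatim and gives $TT^\dagger=(\I+X)^{-1}$ and $T^{-1}=T^\dagger(\I+X)$. The eigenvalues of $\I+X$ lie in $[1-1/32,\,1+1/32]$. Hence the eigenvalues of $TT^\dagger$, which are the squared singular values of $T$, lie in $[(1+1/32)^{-1},(1-1/32)^{-1}]$. Taking square roots gives the stated bounds in Eq.~\eqref{cr:eq:T-condition}. In particular $\|T\|\le(31/32)^{-1/2}$ and $\|T^{-1}\|=1/\sigma_{\min}(T)\le(33/32)^{1/2}$, both fixed constants close to one.

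\textbf{Encoding of $T$.} I will invoke Theorem~\ref{lem:inverse-circuit} with $\delta_0=1/32<2/3$. Both hypotheses hold: the assumed bound $\|X\|\le1/32$, and an encoding of $X$ with normalization $\alpha_X=1/32\le\delta_0$. This gives an error-$\zeta$ encoding of $T$ with
\[
\alpha_T\le\frac{2(\Lambda_L+\pi+1)}{\pi(1-3/64)}=O(M+1),
\]
using $\poly(\Lambda_L,\log(1/\zeta))=\poly(M,\log(1/\zeta))$ calls. The inverse of that unitary is an encoding of $T^\dagger$ with the same normalization.

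\textbf{Encoding of $T^{-1}$.} I encode $\I+X$ by LCU of the identity and the $X$-encoding, with normalization $1+1/32$. Multiplying by the $T^\dagger$ encoding gives an encoding of $T^{-1}=T^\dagger(\I+X)$ with normalization $\alpha_T(1+1/32)=O(M+1)$. Its true norm, however, is at most $(33/32)^{1/2}<2$, so there is a fixed multiplicative margin. I then apply uniform singular-value amplification~\cite[Theorem~30]{GilyenEtAl2019}, exactly as in Proposition~\ref{cr:prop:E} and Lemma~\ref{lem:normalization-estimation}, to reduce the normalization to $2$. This uses gain $O(M+1)$, a fixed output slack, and $O((M+1)\log(1/\zeta))$ calls.

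\textbf{Main obstacle: error bookkeeping through the amplification.} The amplification multiplies any input encoding error by the gain $O(M)$. Amplification also needs the input singular values to stay within the fixed margin below $2/\text{gain}$. I will therefore compile the $T$ and $X$ encodings to inner accuracy $\zeta'=c\,\zeta/(M+1)$, with $c$ a fixed constant, and also small enough to preserve the amplification margin. Since all costs in Theorem~\ref{lem:inverse-circuit} are polylogarithmic in the inverse accuracy, this changes the cost only by $\poly(M,\log(1/\zeta))$. The product-error bound \eqref{eq:general-product-error} controls the product step, and phase synthesis and gate errors are absorbed via the convention of Sec.~\ref{sec:microscopic-model}. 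No Hermiticity is needed for $T^{-1}$, because amplification uses an odd singular-value polynomial. Combining the three steps gives the stated normalizations $O(M+1)$ and $2$, the $\poly(M,\log(1/\zeta))$ call bound, and Eq.~\eqref{cr:eq:T-condition}.
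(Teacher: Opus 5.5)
Your proposal is correct and matches the paper's proof step for step. It takes the identities and singular-value bounds from Lemma~\ref{lem:relative-inverse}, encodes $T$ by Theorem~\ref{lem:inverse-circuit} with $\delta_0=1/32$, and encodes $T^{-1}=T^\dagger(\I+X)$ as a product, then amplifies it to normalization $2$ using $\|T^{-1}\|<2$. Two small bookkeeping points do not change the conclusion, since all costs are polylogarithmic in the inner accuracy: uniform amplification with gain $O(M+1)$ costs $O((M+1)\log((M+1)/\zeta))$ calls, and its robustness bound may require inner error polynomially smaller than your $c\,\zeta/(M+1)$, for example a degree-dependent or squared tolerance.
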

\begin{proof}
Apply Theorem~\ref{lem:inverse-circuit} with $\delta_0=1/32$.
Lemma~\ref{lem:relative-inverse} gives the identities and spectral bounds.
The product of the adjoint encoding with an LCU encoding of $\I+X$
has normalization $\alpha_T(1+1/32)=O(M+1)$ and error bounded by a
fixed multiple of the errors in $T,X$. Since $\|T^{-1}\|<2$, uniform
singular-value amplification~\cite[Theorem~30]{GilyenEtAl2019} reduces
this normalization to $2$ with fixed output slack and
$O((M+1)\log((M+1)/\zeta))$ calls. Polynomially smaller input errors
and the phase-synthesis costs preserve the stated complexity.
\end{proof}
For comparison with Ref.~\cite[Theorems~1--2 and Appendix~C.1]{GilyenChenDoriguelloKastoryano2024}, use $T(L,X)$ from
Eq.~\eqref{cr:eq:T-def}. For
$\sigma_L=e^{-L}/\Tr(e^{-L})$ and $B_{\mathrm{rej}}\succeq0$ with $\|B_{\mathrm{rej}}\|<1$,
the square-root factor used there for a rejection Kraus operator is
\begin{equation}\label{eq:rejection-factor-comparison}
\begin{aligned}
 K_{\mathrm{rej}}
 &:=\bigl[\sigma_L^{1/2}(\I-B_{\mathrm{rej}})
                    \sigma_L^{1/2}\bigr]^{1/2}\sigma_L^{-1/2}\\
 &=T(L,-B_{\mathrm{rej}})^{-1}.
\end{aligned}
\end{equation}
The normalization of $\sigma_L$ cancels in the second line. That work
constructs such factors using operator series and time integrals.
Here the inputs are the compressed $L_j$ and the independently encoded
$X_j$ of Proposition~\ref{cr:prop:residual}; the bounded resolvent integral
in Sec.~\ref{app:inverse} gives the normalization and error bounds
needed for their composition into the specified Petz isometry.

The target is the ordinary Petz isometry of
Refs.~\cite{Petz1986Sufficiency,GilyenEtAl2022}. The identity below
verifies that the factorized circuit realizes that same map, including
cancellation of the normalization constants.

\begin{proposition}[Petz-factor and isometry identities]\label{cr:prop:V}
For \eqref{cr:eq:Ghat}, define
\begin{equation}
 T_j=e^{-L_j/2}\widehat G_j^{-1/2},\qquad T_0=\I.
\end{equation}
Then the exact Petz factor and isometry are
\begin{equation}\label{eq:petz-factor-corrections}
 Q_j=\frac{T_{j-1}^{-1}E_j(1)(T_j\otimes\I_{B_j})}
                  {\sqrt{a_jd_{B_j}}},
\end{equation}
\begin{equation}\label{cr:eq:V-factor}
 V_j=(T_{j-1}^{-1}\otimes\I_{\mathcal E_j})W_jT_j.
\end{equation}
\end{proposition}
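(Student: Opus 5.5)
The argument is purely algebraic: express everything in terms of $\widehat G_j$, which by Proposition~\ref{cr:prop:residual} is a positive scalar multiple of $\rho_j$, and then track the scalars.

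\emph{Scalar relation.} Since $\widehat G_j=a_j^{-1}\E_{B_j}(\widehat G_{j-1})$, we have
\[
\Tr_{B_j}\widehat G_{j-1}=a_jd_{B_j}\widehat G_j .
\]
Write $\widehat G_j=c_j\rho_j$. Because $\Tr_{B_j}\rho_{j-1}=\rho_j$, this gives $c_{j-1}=a_jd_{B_j}c_j$. The Petz factor $Q_j=\rho_{j-1}^{1/2}(\rho_j^{-1/2}\otimes\I_{B_j})$ is invariant under a common rescaling of both marginals up to the ratio of scalars. Hence
\[
Q_j=\sqrt{\frac{c_j}{c_{j-1}}}\,\widehat G_{j-1}^{1/2}(\widehat G_j^{-1/2}\otimes\I_{B_j})
=\frac{\widehat G_{j-1}^{1/2}(\widehat G_j^{-1/2}\otimes\I)}{\sqrt{a_jd_{B_j}}}.
\]

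\emph{Factor identity.} By definition $T_j=e^{-L_j/2}\widehat G_j^{-1/2}$, so $\widehat G_j^{-1/2}=e^{L_j/2}T_j$. For the inverse on $A_{j-1}$, $T_{j-1}^{-1}=\widehat G_{j-1}^{1/2}e^{L_{j-1}/2}$, which gives $\widehat G_{j-1}^{1/2}=T_{j-1}^{-1}e^{-L_{j-1}/2}$. Substituting both,
\[
\widehat G_{j-1}^{1/2}(\widehat G_j^{-1/2}\otimes\I)=T_{j-1}^{-1}e^{-L_{j-1}/2}(e^{L_j/2}\otimes\I)(T_j\otimes\I)=T_{j-1}^{-1}E_j(1)(T_j\otimes\I),
\]
which proves Eq.~\eqref{eq:petz-factor-corrections}. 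The case $j=1$ needs a check: there $L_0=K_0$ and $T_0=\I$, which requires $\widehat G_0^{1/2}=e^{-K_0/2}$, true by definition. The factor $e^{-L_0/2}$ is exact, so no inconsistency arises.

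\emph{Isometry identity.} Insert Eq.~\eqref{eq:petz-factor-corrections} into the Stinespring form Eq.~\eqref{an:eq:stinespring}:
\[
V_j=\sum_b Q_j(\I\otimes|b\rangle_{B_j})\otimes|b\rangle_{\mathcal E_j}.
\]
The factor $T_j\otimes\I_{B_j}$ commutes past $\I\otimes|b\rangle$, becoming $T_j$ acting on the input. The prefactor $1/\sqrt{a_jd_{B_j}}$ together with $E_j(1)$ and the maximally entangled pair reproduces exactly $W_j=a_j^{-1/2}W_j^0$ from Eq.~\eqref{cr:eq:W0}. Finally, $T_{j-1}^{-1}$ acts on the physical output and tensors with $\I_{\mathcal E_j}$, which gives Eq.~\eqref{cr:eq:V-factor}.

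The only real bookkeeping hazard is keeping the scalar normalizations consistent, namely the centered $K_j$ versus $-\log\rho_j$ and the stored $a_j$ versus the ideal $a_j^*$. This is resolved by the observation above: the identities use only the exact relation $\Tr_{B_j}\widehat G_{j-1}=a_jd_{B_j}\widehat G_j$, which holds for whatever positive $a_j$ is stored. Invertibility of $T_j$ follows from positivity of $\widehat G_j$.
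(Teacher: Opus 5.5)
Your proof is correct and takes essentially the same route as the paper. Both arguments cancel adjacent reference exponentials via $T_{j-1}^{-1}=\widehat G_{j-1}^{1/2}e^{L_{j-1}/2}$ and $\widehat G_j^{-1/2}=e^{L_j/2}T_j$, and both fix the prefactor $1/\sqrt{a_jd_{B_j}}$ from the exact relation $\Tr_{B_j}\widehat G_{j-1}=a_jd_{B_j}\widehat G_j$ for the stored $a_j$. The only difference is the order: you derive the factor identity for $Q_j$ first, with the scalars $c_j$ made explicit, and then assemble the Stinespring isometry, whereas the paper expands the right-hand side of the isometry identity directly.
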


\begin{proof}
Since $T_{j-1}^{-1}=\widehat G_{j-1}^{1/2}e^{L_{j-1}/2}$, adjacent exponentials cancel without commuting any two different operators.  The right-hand side of \eqref{cr:eq:V-factor} applied to $|\psi\rangle$ equals
\begin{equation}\label{cr:eq:V-expanded}
 \frac1{\sqrt{a_jd_{B_j}}}\sum_b
 \widehat G_{j-1}^{1/2}
 (\widehat G_j^{-1/2}|\psi\rangle\otimes|b\rangle)
 \otimes|b\rangle_{\mathcal E_j}.
\end{equation}
Equation~\eqref{cr:eq:Ghat} gives
\begin{equation}
 \Tr_{B_j}\widehat G_{j-1}=a_jd_{B_j}\widehat G_j.
\end{equation}
After normalizing the two positive operators to states, their square-root ratio consequently has exactly the coefficient in \eqref{cr:eq:V-expanded}.  This is Eq.~\eqref{cr:eq:petz}.  In particular the result is independent of the chosen scalar normalization convention.
\end{proof}

\subsection{Completion and finite precision}
\label{app:completion}

All factors of the specified Petz isometry are now encoded. We finish
the local theorem with standard singular-vector transformation
\cite{GilyenEtAl2019,QuekRebentrost2021}, keeping the error outside
the selected output space in the final guarantee.

\begin{proof}[Proof of Theorem~\ref{cr:thm:local-main}]
Apply Lemma~\ref{cr:lem:sparsify} and condition on its successful event.
The compressed lists satisfy Eq.~\eqref{cr:eq:L-properties}.
Lemma~\ref{lem:reference-stability} justifies the normalization experiments,
and Proposition~\ref{cr:prop:E} constructs each $W_j^0$ independently.
Lemma~\ref{lem:normalization-estimation} gives encodings of $W_j$ with the
normalization in Eq.~\eqref{eq:map-normalizations}.
Proposition~\ref{cr:prop:residual} constructs $X_{m-1}$ and $X_m$ from
ordered products and one fixed-margin amplification, with normalization
$1/32$. Lemma~\ref{cr:lem:T} constructs the integral factor and the inverse
$T_{m-1}^\dagger(\I+X_{m-1})$. The products in
Proposition~\ref{cr:prop:V} then encode $V_m$ with normalization $\poly(M)$.

For completeness, consider the whole initialized action rather than
only its selected block. Write the encoding unitary as $U$, with
$A=J_{\mathrm{out}}^\dagger UJ_{\mathrm{in}}$ and
$\norm{A-aV_m}\le\epsilon_{\mathrm{blk}}$, where
$a=\alpha^{-1}$ is inverse-polynomial in $M$.
Choose $\epsilon_{\mathrm{blk}}\le a\eta/48$.
The polar isometry $W_A=A(A^\dagger A)^{-1/2}$ satisfies
$\norm{W_A-V_m}\le3\epsilon_{\mathrm{blk}}/a$.
Singular-vector transformation
\cite[Theorem~26]{GilyenEtAl2019}, as used in quantum polar
decomposition~\cite{QuekRebentrost2021}, uses
$O(a^{-1}\log(1/\eta))$ calls to $U,U^\dagger$ and the subspace
reflections to approximate $W_A$ above singular value $a/2$.
Take its block error to be $\epsilon_{\mathrm{amp}}=\eta^2/512$.
If $B$ is the resulting selected block and $R_\perp$ the remaining
output, unitarity gives
$B^\dagger B+R_\perp^\dagger R_\perp=\I$ and hence
$\norm{R_\perp}\le\sqrt{2\epsilon_{\mathrm{amp}}}$.
The total error before finite-gate synthesis is bounded by
\[
 3\epsilon_{\mathrm{blk}}/a+\epsilon_{\mathrm{amp}}
   +\sqrt{2\epsilon_{\mathrm{amp}}}<\eta/4.
\]
The remaining budget covers elementary-gate and input-call errors.
The output projector selects zero work qubits while retaining the
physical output and its fixed-basis reference, giving
Eq.~\eqref{eq:local-isometry-accuracy}. The induced channel has
error at most $\eta$ after tracing out the reference and work
registers. No measurement is needed during this unitary application.

The truncation orders, retained matrix sizes, amplification degrees,
and clock bit lengths are polynomial in
Eq.~\eqref{cr:eq:main-local-cost}. The nesting depth of the circuit
transformations is fixed: increasing $m$ only lengthens the flat
product of independent $W_j$ circuits, whose normalization obeys
Eq.~\eqref{cr:eq:correction-normalization}.

To make the finite-precision claim explicit, first select all truncation degrees and quadratures for a preliminary tolerance polynomially smaller than $\eta$.  Expand the circuit into its individual controlled term and arithmetic calls; their number is a polynomial $P$ in \eqref{cr:eq:main-local-cost}.  Products and resolvents above have at most polynomial error factors, and singular transformations have polynomial degrees and fixed or inverse-polynomial margins.  Assign each intermediate operator approximation an error at most $\eta/(2P^c)$ for a sufficiently large fixed $c$, and each synthesized unitary call error at most $\eta/(32P)$. If a robust transformation bound uses a square root of an input error, use $(\eta/(2P^c))^2$ at that input.  These changes affect the costs only through $\log(1/\eta)+O(\log P)$.  The product perturbation bounds, resolvent identity, Duhamel estimate, and unitary hybrid bound then give total isometry error at most $\eta/2$. Every invocation has been counted down to the explicitly retained local
matrices. Coefficient evaluation, phase synthesis from
Sec.~\ref{sec:microscopic-model}, and finite gate-list construction have
polynomial classical cost in the same parameters.

Allocate $\fail/4$ each to reference sampling and normalization
estimation, reserving the remainder for other probabilistic subroutines.
Check all coefficient weights with conservative rounding and every
normalization estimate against its range. A failed check returns the
initialized identity $|\psi\rangle\mapsto
|\psi\rangle|0\rangle_{B_m\mathcal E_m\mathsf A_m}$, padded to the same
workspace. Thus every preprocessing outcome defines a unitary and a
CPTP reduced map; all sampled and measured data are independent of the
later recovery input.
\end{proof}
Algorithm~\ref{alg:local-petz-detailed} gives the detailed construction
corresponding to Algorithm~\ref{alg:local-petz}.

\begin{algorithm}
\caption{Detailed compilation of one Petz isometry}
\label{alg:local-petz-detailed}
\begin{algorithmic}[1]
\Require Patch $\Omega$, a chain of Definition~\ref{cr:def:chain},
 microscopic $K_0$, reference lists satisfying
 Assumption~\ref{cr:ass:classical}(i) and
 Eq.~\eqref{cr:eq:accuracy-input};
 $\eta,p_{\mathrm{fail}}\in(0,1/4)$.
\Ensure A unitary $\mathscr U_m$ satisfying
 Eq.~\eqref{eq:local-isometry-accuracy}, except with probability
 $p_{\mathrm{fail}}$.
\State Fix the internal tolerances and common workspace bound from the
 proof of Theorem~\ref{cr:thm:local-main}. Whenever a prescribed
 input-weight or estimate-range check fails, return its padded
 initialized identity fallback.
\State Initialize $\mathbb W_0=\I_\Omega$. Compress the supplied lists as in
 Lemma~\ref{cr:lem:sparsify}; set $L_0=K_0$ exactly.
 Each retained term has $O(\log(M+1))$ support, and each list has
 $\poly(M,\log(m/p_{\mathrm{fail}}))$ terms, where $M=|\Omega|$.
\For{$j=1,\ldots,m$}
 \State Encode $G_j(t)$ in Eq.~\eqref{cr:eq:generator}; amplify it to
 constant normalization; compile $E_j(1)$ in Eq.~\eqref{cr:eq:E}
 by the time-ordered construction.
 \State Encode $W_j^0$ in Eq.~\eqref{cr:eq:W0}; estimate
 $\trn{A_j}((W_j^0)^\dagger W_j^0)$ through the independent
 success-probability measurements of
 Lemma~\ref{lem:normalization-estimation}.
 \State Store $a_j>0$, form $W_j=a_j^{-1/2}W_j^0$, and amplify its
 encoding to the normalization in Eq.~\eqref{eq:map-normalizations}.
\State Append the independently compiled factor to the cumulative
  product $\mathbb W_j=(\mathbb W_{j-1}\otimes\I_{\mathcal E_j})W_j$,
  retaining a flat gate-sequence description.
\EndFor
\State For the two endpoints $j=m-1,m$ only, encode $\mathbb W_j$ from
 Eq.~\eqref{cr:eq:correction-product}, form
 $\mathbb W_j^\dagger\mathbb W_j-\I$ with fresh projection ancillas
 and the intermediate output projector, and amplify to normalization
 $1/32$; for $j=0$ use $X_0=0$.
\State Construct $T_{m-1},T_m$ by
 Eq.~\eqref{eq:relative-inverse-integral}, then encode
 $T_{m-1}^{-1}=T_{m-1}^\dagger(\I+X_{m-1})$ and amplify its
 normalization to $2$ as in Lemma~\ref{cr:lem:T}; for $m=1$ use
 $T_0=\I$.
\State Encode $V_m$ using Eq.~\eqref{cr:eq:V-factor} and amplify
 its initialized action as in the proof of
 Theorem~\ref{cr:thm:local-main}, with the fixed precision budgets.
\State \Return the padded unitary gate list $\mathscr U_m$.
\end{algorithmic}
\end{algorithm}

\section{Global circuit construction and resource bounds}
\label{sec:global-complexity}

Appendix~\ref{sec:analytic-composition} supplies the recovery layers;
Theorem~\ref{cr:thm:local-main} supplies each local unitary. We combine
these results and account separately for classical reference generation.

\subsection{Parallel and sequential preparation}
\label{sec:parallel-algorithm}

\begin{definition}[Parallel preparation circuit]\label{def:global-algorithm}
For the regions in Eq.~\eqref{eq:color-regions}, let $\mathscr U_x$
be the local unitary from Theorem~\ref{cr:thm:local-main}, with its
own reference $\mathcal E_x$ and work register $\mathsf A_x$.
Put $\mathsf A=\bigotimes_x\mathsf A_x$. After preprocessing, set
\begin{equation}\label{eq:parallel-unitary}
\begin{aligned}
 \mathscr U^{(c)}&=\prod_{x\in\Lambda^{(c)}}\mathscr U_x,\\
 \mathscr U_{\mathrm{prep}}&=
 \mathscr U^{(n_{\mathrm{col}})}\cdots\mathscr U^{(1)},\qquad
 |\Psi_{\mathrm{out}}\rangle=\mathscr U_{\mathrm{prep}}|0\rangle.
\end{aligned}
\end{equation}
Here $|0\rangle$ initializes $\Lambda\mathcal E_\Lambda\mathsf A$.
Same-color circuits have disjoint supports; earlier reference and work
registers remain untouched. Temporary coefficient registers belong to
$\mathsf A_x$, and all preprocessing outcomes return unitaries on the
same padded registers.
\end{definition}

\begin{theorem}[Parallel Gibbs-state preparation]\label{thm:global}
Assume the setup, microscopic access, (E1)--(E2), the high-temperature
condition~\eqref{an:eq:strip}, and
Assumption~\ref{cr:ass:classical}(i) with $r_{\mathrm{ref}}\le e_M$
for every color-ordered coefficient chain. For $0<\varepsilon<1/2$,
choose the radii in Eq.~\eqref{eq:global-radii} and
\begin{equation}\label{eq:global-errors}
 \eta_x=\frac{\varepsilon}{4N},\qquad
 p_{\mathrm{fail},x}=\frac{\varepsilon}{8N}.
\end{equation}
With probability at least $1-\varepsilon/8$ over preprocessing,
\begin{equation}\label{eq:global-pure-accuracy}
 \norm{|\Psi_{\mathrm{out}}\rangle
 -|\Gamma_\beta\rangle|0\rangle_{\mathsf A}}_2\le\varepsilon/4.
\end{equation}
On this event the full output and its system marginal have trace-norm
error at most $\varepsilon/2$ from their respective targets. The
preprocessing-averaged output obeys
\begin{equation}\label{eq:global-accuracy}
 \left\|\E|\Psi_{\mathrm{out}}\rangle\langle\Psi_{\mathrm{out}}|
 -|\Gamma_\beta\rangle\langle\Gamma_\beta|
   \otimes|0\rangle\langle0|_{\mathsf A}\right\|_1
 \le\varepsilon.
\end{equation}
For fixed physical parameters and dimension,
\begin{equation}\label{eq:global-cost}
\begin{aligned}
 G_{\mathrm Q}&\le C_{\mathrm g}N[\log(N/\varepsilon)]^{\nu_{\mathrm g}},\\
 \operatorname{depth}(\mathscr U_{\mathrm{prep}})
 &\le C_{\mathrm d}[\log(N/\varepsilon)]^{\nu_{\mathrm d}},\\
 n_{\mathrm{qubits}}&=2Nn_q+n_{\mathsf A},\\
 n_{\mathsf A}&\le C_{\mathrm w}N[\log(N/\varepsilon)]^{\nu_{\mathrm w}}.
\end{aligned}
\end{equation}
The gate budget includes normalization-estimation trials; the depth is
that of the fixed preparation unitary. Classical graph processing and
compilation, excluding reference generation, cost
$N\polylog(N/\varepsilon)$. Under
Assumption~\ref{cr:ass:classical}(ii), the total classical cost is
$N^{1+o(1)}$, or $N^{o(1)}$ per patch, when
$\log(1/\varepsilon)=O(\log N)$.
\end{theorem}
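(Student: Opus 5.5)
The proof will combine the analytic layered construction of Appendix~\ref{sec:analytic-composition} with the local compilation guarantee of Theorem~\ref{cr:thm:local-main}, telescoping two separate error sources. First, with the radii in Eq.~\eqref{eq:global-radii}, Corollary~\ref{cor:local-purification} together with the coloring argument of Sec.~\ref{sec:coloring} gives $\||\Psi_{\mathrm{loc}}\rangle-|\Gamma_\beta\rangle\|_2\le\varepsilon/8$ for the ideal layered product~\eqref{eq:ideal-layered-preparation}. The color-ordered coefficient chains are exactly those covered by the hypothesis $r_{\mathrm{ref}}\le e_M$. Since same-color patches are disjoint and their inputs $Y_x$ are fixed within a layer, the parallel product of the $\mathscr U_x$ equals their serial product in the induced site order.

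Second, apply Theorem~\ref{cr:thm:local-main} to each site with $\eta_x=\varepsilon/(4N)$ and $p_{\mathrm{fail},x}=\varepsilon/(8N)$. A union bound over the $N$ independent preprocessings gives the success event with probability at least $1-\varepsilon/8$. On this event each initialized action differs from $V_x^{\mathrm{loc}}\otimes|0\rangle_{\mathsf A_x}$ by at most $\varepsilon/(8N)$ in operator norm. Fresh work registers are zero when first touched, and earlier registers are untouched, so the contraction telescoping~\eqref{eq:product-error} applies. It adds at most $\varepsilon/8$ to the vector error, which gives~\eqref{eq:global-pure-accuracy}.

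Trace-norm bounds follow from the estimate $\||u\rangle\langle u|-|v\rangle\langle v|\|_1\le2\||u\rangle-|v\rangle\|_2$ and contraction under partial trace. On the failure event the output is still a unit vector, with trace error at most $2$. Averaging therefore gives at most $\varepsilon/2+2\cdot\varepsilon/8\le\varepsilon$, which is~\eqref{eq:global-accuracy}. This bound relies on the fallback rule that every preprocessing outcome yields a unitary on the same padded registers.

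For resources, substitute $M\le c_{\mathrm{vol}}(1+R+\ell)^D=O([\log(N/\varepsilon)]^D)$, $m\le M$, and $\log(1/\eta_x),\log(1/p_{\mathrm{fail},x})=O(\log(N/\varepsilon))$ into~\eqref{cr:eq:main-local-cost}. This gives $\polylog(N/\varepsilon)$ gates, including normalization trials, and the same bound on work qubits per site. Summing over the $N$ sites yields $G_{\mathrm Q}$ and $n_{\mathsf A}$, and $2Nn_q$ counts the system and reference qubits. Depth is bounded by $n_{\mathrm{col}}$ from~\eqref{eq:color-count} times the maximal local gate count, which serves as a crude depth bound, so it is polylogarithmic. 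Classical graph processing consists of breadth-first search over balls by Assumption~\ref{ass:microscopic-access}, greedy coloring, and per-patch compilation, each costing $\polylog$ per site.

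The step I expect to be the main obstacle is the final classical claim under Assumption~\ref{cr:ass:classical}(ii). I would choose $\ell_{\mathrm{cl}}=\lceil c_{\mathrm{cl}}^{-1}\log(C_{\mathrm{cl}}m/e_M)\rceil=O(\log(M+1))$ to meet $r_{\mathrm{ref}}\le e_M$. Dense computations on neighborhoods of volume $O(\ell_{\mathrm{cl}}^D)$ then cost $\exp(O((\log\log(N/\varepsilon))^D))$ each. The concern is the exponent $(\log M)^D$ with $M$ itself polylogarithmic: this cost is $(\log N)^{O((\log\log N)^{D-1})}=N^{o(1)}$ only when $\log(1/\varepsilon)=O(\log N)$. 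The accounting of Proposition~\ref{prop:classical-budget} must be invoked to check this. Multiplying by the $\poly(M)$ references and $N$ patches then gives $N^{1+o(1)}$.
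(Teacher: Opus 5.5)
Your proposal is correct and follows essentially the same route as the paper. Both arguments combine four ingredients: the $\varepsilon/8$ localization bound from Corollary~\ref{cor:local-purification}; the summed isometry errors $\sum_x\eta_x/2=\varepsilon/8$ and union-bounded failure probability $\varepsilon/8$ from Theorem~\ref{cr:thm:local-main}; the success/failure averaging $\varepsilon/2+2(\varepsilon/8)\le\varepsilon$; and polylogarithmic local costs summed over sites, multiplied by the color count for depth, with reference generation deferred to Proposition~\ref{prop:classical-budget}. The only difference is cosmetic: to guarantee $\ell_{\mathrm{cl}}\ge1$, take $\ell_{\mathrm{cl}}=\max\{1,\lceil c_{\mathrm{cl}}^{-1}\log(1+C_{\mathrm{cl}}m/e_M)\rceil\}$ as in Eq.~\eqref{eq:classical-radius}.
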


\begin{proof}
Equation~\eqref{eq:global-radii} ensures
\begin{equation}\label{eq:localization-budget}
 C_{\mathrm{loc}}^Q
 \left[e^{-\kappa R}+(1+R)^{3D/4}e^{-\kappa\ell/16}\right]
 \le\varepsilon/(8N).
\end{equation}
Corollary~\ref{cor:local-purification} and
Theorem~\ref{cr:thm:local-main} give
vector error $\varepsilon/8+\sum_x\eta_x/2=\varepsilon/4$ and
failure probability at most $\sum_xp_{\mathrm{fail},x}=\varepsilon/8$,
including the earlier registers. Trace-norm error is at most twice
vector error on success and at most two on failure; its average is
therefore at most $\varepsilon/2+2(\varepsilon/8)\le\varepsilon$.
With $M=O([\log(N/\varepsilon)]^D)$ and
Eq.~\eqref{eq:global-errors}, the local cost is
$\polylog(N/\varepsilon)$. Gates add over $N$ sites, while depths
multiply by the layer count in Eq.~\eqref{eq:color-count}.
The $N$ physical sites, $N$ references, and local workspaces give
Eq.~\eqref{eq:global-cost}. The fixed circuit has no measurements or
resets and has an explicit inverse. Neighbor enumeration, coloring,
and gate-list assembly cost $N\polylog(N/\varepsilon)$; the remaining
classical cost is Proposition~\ref{prop:classical-budget}.
\end{proof}

\begin{algorithm}
\caption{Global preparation from local circuits}
\label{alg:global-petz}
\begin{algorithmic}[1]
\Require Microscopic input, $\beta$, $\varepsilon$, and references
 satisfying Theorem~\ref{thm:global}.
\State Choose $R,\ell$ and local budgets using
 Eqs.~\eqref{eq:global-radii} and~\eqref{eq:global-errors}.
\State Greedily color sites at separation at most $2(R+\ell)$;
 set $\Omega_x,Y_x$ by Eq.~\eqref{eq:color-regions}.
\State For each $x$, obtain the references for
 $\Omega_x\to Y_xx\to Y_x$ and use Algorithm~\ref{alg:local-petz}
 to construct $\mathscr U_x$ with fresh $\mathcal E_x,\mathsf A_x$.
\State \Return the color-ordered circuit in
 Eq.~\eqref{eq:parallel-unitary}, with same-color circuits parallel
 and all reference and work registers retained.
\end{algorithmic}
\end{algorithm}

\begin{corollary}[Reusable workspace]\label{cor:sequential-workspace}
\label{sec:sequential-resources}
Under Theorem~\ref{thm:global}, execute the local circuits sequentially
in the same site order, discarding each reference and resetting its work
register after use. The mixed output satisfies
\begin{equation}\label{eq:sequential-resources}
\begin{aligned}
 \norm{\E\widetilde\rho_\Lambda-\gamma_\Lambda}_1&\le\varepsilon,\\
 G_{\mathrm Q}^{\mathrm{seq}},\ \operatorname{depth}_{\mathrm{seq}}
 &\le N\polylog(N/\varepsilon),\\
 n_{\mathrm{qubits}}^{\mathrm{seq}}
 &\le Nn_q+\polylog(N/\varepsilon).
\end{aligned}
\end{equation}
Classical costs are unchanged. Other fixed site orders have the same
bounds when all required regions and deletion chains satisfy the
stated hypotheses, including clause (ii) for the classical cost.
\end{corollary}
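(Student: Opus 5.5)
The plan is to run the same local unitaries $\mathscr U_x$ from Theorem~\ref{thm:global} one site at a time in the color order, and to compare the resulting mixed-state channel composition with the exact finite-patch Petz channels. For each site $x$, tracing out $\mathcal E_x$ and $\mathsf A_x$ after applying $\mathscr U_x$ to $Y_x$ and fresh zero registers defines the CPTP map $\widetilde\PP_x$ of Theorem~\ref{cr:thm:local-main}. On its successful preprocessing event, with $\eta_x=\varepsilon/(4N)$, this map satisfies $\norm{\widetilde\PP_x-\PP^{\Omega_x}_{x|Y_x}}_\diamond\le\eta_x$. Discarding the reference and resetting the work register to $|0\rangle$ is itself a CPTP operation. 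Because the resetting happens after the trace, the sequential output is exactly $\widetilde\PP_{x_N}\circ\cdots\circ\widetilde\PP_{x_1}(1)$, with identities on $W_x$. Since these maps are fixed by input-independent preprocessing, the same distribution of $\widetilde\PP_x$ appears as in the parallel construction.

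For accuracy, apply Corollary~\ref{thm:analytic-assembly} with the empty seed and the color-induced site order, for which the sets $Y_x$ agree with Eq.~\eqref{eq:patch-regions}. The radii in Eq.~\eqref{eq:global-radii} give a locality contribution $2N C^Q_{\mathrm{loc}}[\cdots]\le\varepsilon/4$ by Eq.~\eqref{eq:localization-budget}. The implementation errors add $\sum_x\eta_x=\varepsilon/4$. On the joint success event, which has probability at least $1-\varepsilon/8$ by the union bound, the trace-norm error is therefore at most $\varepsilon/2$. On failure it is at most $2$. By convexity of the trace norm, the preprocessing average is at most $\varepsilon/2+2\cdot\varepsilon/8\le\varepsilon$. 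As in Theorem~\ref{thm:global}, each failure fallback is still a valid unitary and hence CPTP, so this averaging is legitimate.

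For resources, the gate count is the sum of the local costs, $N\polylog(N/\varepsilon)$, which includes the normalization trials. Run serially, the depth is bounded by the gate count. The qubit count consists of the $Nn_q$ system qubits plus the workspace of one local circuit at a time. That workspace is $\polylog(N/\varepsilon)$ because $M=O([\log(N/\varepsilon)]^D)$ in Eq.~\eqref{cr:eq:main-local-cost}, and it comprises $\mathsf A_x$, $\mathcal E_x$ and the temporary coefficient registers, all of which are reset and reused. The classical preprocessing is identical to that of Theorem~\ref{thm:global}, so its cost is unchanged. For other fixed site orders the argument is the same with Eq.~\eqref{eq:patch-regions} in place of the coloring, provided the hypotheses hold for the resulting regions and chains.

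I expect the main care to be needed in justifying that resets do not disturb the accounting. Resetting means discarding $\mathcal E_x\mathsf A_x$ and re-preparing $|0\rangle$. This is a partial trace followed by appending a fixed state, so each step is exactly $\widetilde\PP_x$ tensored with a fresh zero state, and Eq.~\eqref{an:eq:hybrid} applies directly. Because the telescoping in Corollary~\ref{thm:analytic-assembly} is along exact Gibbs marginals, no bound on approximate intermediate states is required.
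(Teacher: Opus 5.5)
Your argument is correct, but for the color order it takes a different route from the paper. The paper does not redo the error accounting in that case. It observes that each $\mathscr U_x$ never touches earlier registers $\mathcal E_{x'}\mathsf A_{x'}$. Discarding and resetting those registers therefore leaves the system marginal of $|\Psi_{\mathrm{out}}\rangle$ unchanged, and the averaged bound is inherited from Eq.~\eqref{eq:global-accuracy} of Theorem~\ref{thm:global} by trace-norm contraction under partial trace. The paper invokes Corollary~\ref{thm:analytic-assembly} only for other site orders, where no parallel purification is available.

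You instead run the mixed-state telescoping of Corollary~\ref{thm:analytic-assembly} for every order, using the diamond-norm guarantees $\norm{\widetilde\PP_x-\PP^{\Omega_x}_{x|Y_x}}_\diamond\le\eta_x$. This treats both cases uniformly and never needs the purification-level statement. It also makes the budget explicit: $\varepsilon/4$ from locality plus $\varepsilon/4$ from implementation on the success event, and $3\varepsilon/4$ after averaging.

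The paper's route is shorter for the color order because it reuses the global theorem wholesale; yours is self-contained at the channel level. Your resource accounting matches the paper's.
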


\begin{proof}
Tracing unused earlier registers leaves the system marginal unchanged.
Reusing local workspaces gives the space bound; summing local gates gives
sequential depth. For another order use
Corollary~\ref{thm:analytic-assembly} with the same error and failure sums.
\end{proof}

\subsection{Classical reference costs}
\label{sec:classical-resources}\label{app:classical-resources}

\paragraph{Local arithmetic and finite precision.}
\label{sec:local-reference-model}
The generation guarantee of Assumption~\ref{cr:ass:classical}(ii)
uses at most $\poly(M,\log N)$ dense calculations on neighborhoods
of at most $c_{\mathrm{nb}}(1+\ell_{\mathrm{cl}})^D$ sites. A
calculation on $s$ sites to $b_{\mathrm{num}}$ bits costs
$\exp(O(s))\poly(M,\log N,b_{\mathrm{num}})$ bit operations.
The reference-error and weighted-coefficient bounds are attained with
$b_{\mathrm{num}}=\poly(s,\log(M+1),\log N)$, including list assembly,
centering, scanning, and numerical errors, uniformly over the required
patches and deletion chains. At fixed local dimension, this permits
polynomial cost in the neighborhood Hilbert-space dimension $q^s$.

\begin{proposition}[Classical reference-generation cost]
\label{prop:classical-budget}
Under Assumption~\ref{cr:ass:classical}(ii), choose
\begin{equation}\label{eq:classical-radius}
 \ell_{\mathrm{cl}}=\max\!\left\{1,
 \left\lceil\frac1{c_{\mathrm{cl}}}
 \log\left(1+\frac{C_{\mathrm{cl}}m}{e_M}\right)\right\rceil\right\}.
\end{equation}
Then $r_{\mathrm{ref}}\le e_M$, $\ell_{\mathrm{cl}}=O(\log(M+1))$,
and
\begin{equation}\label{eq:classical-patch-cost}
 \mathcal C_{\mathrm{ref}}(\Omega)
 \le\poly(M,\log N)\exp\!\left(O([\log(M+1)]^D)\right).
\end{equation}
For $M=O([\log(N/\varepsilon)]^D)$ and
$\log(1/\varepsilon)=O(\log N)$, the complete classical costs,
including graph processing and circuit construction, satisfy
\begin{equation}\label{eq:classical-global-cost}
 \mathcal C_{\mathrm{cl,patch}}=N^{o(1)},\qquad
 \mathcal C_{\mathrm{cl,total}}=N^{1+o(1)}.
\end{equation}
\end{proposition}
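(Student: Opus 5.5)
The proof proceeds in three steps: choose the classical radius, bound the per-patch generation cost, and sum over patches together with the remaining preprocessing.

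\emph{Radius and accuracy.} Substituting Eq.~\eqref{eq:classical-radius} into Eq.~\eqref{eq:classical-local-error} gives
\[
 r_{\mathrm{ref}}\le C_{\mathrm{cl}}m\,e^{-c_{\mathrm{cl}}\ell_{\mathrm{cl}}}
 \le C_{\mathrm{cl}}m\bigl(1+C_{\mathrm{cl}}m/e_M\bigr)^{-1}\le e_M .
\]
If the maximum in Eq.~\eqref{eq:classical-radius} is attained at $1$, the argument of the ceiling is at most $1$, so $C_{\mathrm{cl}}m/e_M\le e^{c_{\mathrm{cl}}}-1$. In that case the same chain of inequalities still yields $r_{\mathrm{ref}}\le e_M$. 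Since $m\le M$ and $e_M=c_e(M+1)^{-p_{\mathrm{ref}}}$ with fixed $c_e,p_{\mathrm{ref}}$, the logarithm is $O(\log(M+1))$, so $\ell_{\mathrm{cl}}=O(\log(M+1))$. This is exactly the hypothesis $r_{\mathrm{ref}}\le e_M$ of Theorem~\ref{cr:thm:local-main}, and it is independent of $\eta$.

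\emph{Per-patch cost.} By the local arithmetic model of Sec.~\ref{sec:local-reference-model}, the generation procedure uses at most $\poly(M,\log N)$ dense calculations. Each acts on $s\le c_{\mathrm{nb}}(1+\ell_{\mathrm{cl}})^D=O([\log(M+1)]^D)$ sites at $b_{\mathrm{num}}=\poly(s,\log(M+1),\log N)$ bits. A single calculation therefore costs
\[
 \exp(O(s))\poly(M,\log N,b_{\mathrm{num}})
 =\exp\bigl(O([\log(M+1)]^D)\bigr)\poly(M,\log N).
\]
Here polynomials in $s$ are absorbed into the exponential. The model already includes list assembly, centering and scanning, so multiplying by the number of calculations gives Eq.~\eqref{eq:classical-patch-cost}.

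\emph{Global sum.} Insert $M=O([\log(N/\varepsilon)]^D)=O([\log N]^D)$, using $\log(1/\varepsilon)=O(\log N)$. Then $\log(M+1)=O(\log\log N)$, so
\[
 \exp\bigl(O([\log(M+1)]^D)\bigr)=\exp\bigl(O((\log\log N)^D)\bigr).
\]
This is $N^{o(1)}$ because $(\log\log N)^D=o(\log N)$, and the $\poly(M,\log N)$ factor is $\polylog N=N^{o(1)}$. Theorem~\ref{cr:thm:local-main} adds a further polynomial in $M,m,\log N,\log(1/\eta),\log(1/\fail)$ for compression, phase synthesis and compilation. With the budgets of Eq.~\eqref{eq:global-errors} this is $\polylog N$, hence also $N^{o(1)}$ per patch. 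Summing over the $N$ patches and adding the $N\polylog(N/\varepsilon)$ graph processing and coloring from Theorem~\ref{thm:global} gives $N^{1+o(1)}$ in total.

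\emph{Main obstacle.} The delicate point is the first step. One must check that the reference accuracy needed is only inverse-polynomial in $M$ rather than dependent on $\eta$ or $N$; this is what keeps $\ell_{\mathrm{cl}}$ at $O(\log M)$ and not $O(\log N)$. It relies on the prior choice of $e_M$ in Eq.~\eqref{cr:eq:theta-p} and on Lemma~\ref{lem:reference-stability}. One must also confirm that the bit precision $b_{\mathrm{num}}$ does not reintroduce a $\log(1/\eta)$ dependence inside the exponential.
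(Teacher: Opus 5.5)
Your proposal is correct and follows the paper's proof essentially step for step. The radius choice gives $C_{\mathrm{cl}}m\,e^{-c_{\mathrm{cl}}\ell_{\mathrm{cl}}}\le e_M$, neighborhoods of $O([\log(M+1)]^D)$ sites give the per-patch bound through the arithmetic model, and $\log(M+1)=O(\log\log N)$ makes each patch cost $N^{o(1)}$; your separate case for the maximum equal to $1$ is unnecessary (the maximum only enlarges $\ell_{\mathrm{cl}}$), while your explicit use of the budgets $\eta_x=\varepsilon/(4N)$ and $p_{\mathrm{fail},x}=\varepsilon/(8N)$ spells out what the paper summarizes as ``adding graph and compilation costs.''
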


\begin{proof}
The chosen radius gives
$C_{\mathrm{cl}}m e^{-c_{\mathrm{cl}}\ell_{\mathrm{cl}}}
 \le C_{\mathrm{cl}}m/(1+C_{\mathrm{cl}}m/e_M)\le e_M$.
Since $e_M=c_e(M+1)^{-p_{\mathrm{ref}}}$ and $m\le M$, each classical
neighborhood has $s=O([\log(M+1)]^D)$ sites. Substitution in the
arithmetic model proves Eq.~\eqref{eq:classical-patch-cost}.
At inverse-polynomial accuracy the logarithm of the per-patch cost is
$O([\log\log N]^D+\log\log N)=o(\log N)$ for every fixed $D$.
Summing over $N$ patches and adding graph and compilation costs gives
Eq.~\eqref{eq:classical-global-cost}.
\end{proof}

\end{document}